\DeclareMathOperator*{\argmax}{argmax}
\DeclareMathOperator\sgn{sgn}
\providecommand{\R}{} \renewcommand{\R}{{\mathbb R}}
\newcommand{\N}{{\mathbb N}}
\newcommand{\PP}{{\mathbb P}}
\newcommand{\EE}{{\mathbb E}}
\newcommand{\FF}{{\mathcal F}}
\newcommand{\Zitkovic}[1]{{\v Z}itkovi\'c}
\newcommand{\Sirbu}[1]{S\^\i rbu}
\newcommand\tV{{\tilde{V}}}
\newcommand\ue{\underline{\epsilon}}
\newcommand\be{\bar{\epsilon}}
\newcommand\uy{\underline{y}}
\newcommand\oy{\bar{y}}
\numberwithin{equation}{section}
\theoremstyle{plain}                
\newtheorem{theorem}{Theorem}[section]
\newtheorem{lemma}[theorem]{Lemma}
\newtheorem{proposition}[theorem]{Proposition}
\newtheorem{corollary}[theorem]{Corollary}
\theoremstyle{definition}           
\newtheorem{assumption}[theorem]{Assumption}
\newtheorem{notation}[theorem]{Notation}
\theoremstyle{remark}
\newtheorem{remark}[theorem]{Remark}
\newcommand{\thmref}[1]{Theorem~\ref{#1}}
\newcommand{\proref}[1]{Proposition~\ref{#1}}
\newcommand{\lemref}[1]{Lemma~\ref{#1}}
\newcommand{\remref}[1]{Remark~\ref{#1}}
\newcommand{\assref}[1]{Assumption~\ref{#1}}
\begin{document}

\author{Jin Hyuk Choi, Tae Ung Gang}
\thanks{Department of mathematics, Ulsan National Institute of Science and Technology; jchoi@unist.ac.kr, gangtaeung@unist.ac.kr}

\title[Optimal investment with search frictions and transaction costs]{Optimal investment in illiquid market \\with search frictions and transaction costs}

\maketitle

\begin{abstract}
We consider an optimal investment problem to maximize expected utility of the terminal wealth, in an illiquid market with search frictions and transaction costs. In the market model, an investor's attempt of transaction is successful only at arrival times of a Poisson process, and the investor pays proportional transaction costs when the transaction is successful. 
We characterize the no-trade region describing the optimal trading strategy. 
We provide asymptotic expansions of the boundaries of the no-trade region and the value function, for small transaction costs. The asymptotic analysis implies that the effects of the transaction costs are more pronounced 
in the market with less search frictions.
\end{abstract}

\bigskip

{\bf Keywords}: stochastic control, optimal investment, illiquidity, transaction costs, search frictions

\bigskip

\section{Introduction}

Understanding the effects of liquidity on optimal investment is one of the main topics in mathematical finance and financial economics. According to \cite{yakov2005liquidity}, stated simply, liquidity is the ease of trading a security. Various sources of illiquidity include exogenous transaction costs, search frictions such as the difficulty of locating a counterparty with whom to trade, and price impacts due to private information.\footnote{For instance, \cite{Kyl85, bac92, asym, CHOI201922} study how asymmetric information effects price impact and optimal trading strategy in equilibrium. \cite{almgren2001optimal, gatheral2011optimal, predoiu2011optimal, robert2012measuring} consider optimal order execution  problems with exogenously given price impacts.} 
 This paper studies an optimal investment problem in the market model with two different types of illiquidity: search frictions and transaction costs.

Under the assumption of the perfect liquidity,\footnote{Here, perfect liquidity assumption means that assets can be traded in any quantity and at any moment in time, without any transaction costs.} the pioneering papers \cite{Mer69, Mer71} formulate the optimal investment problem (so-called Merton's portfolio problem) with a geometric Brownian motion and a CRRA (constant relative risk aversion) investor, and show that the optimal solution is to keep the constant fraction of wealth invested in the risky asset.  With the same assumption of perfect liquidity, more general stochastic processes and utility functions have been considered to obtain more general characterizations of the optimal investment strategies (e.g., \cite{KLS87, KLSX91, KraSch99, KarZit03, Kra04}). 

In the optimal investment problems, the assumption of perfect liquidity can be relaxed by considering search frictions in the market. As Table 1 in \cite{Andrew14} shows, many financial assets are illiquid in the sense that it is difficult to find a counterparty who is willing to trade. One way to incorporate this type of illiquidity, search frictions, into the optimal investment problem is to impose some restrictions on trade times.
In the classical Merton framework, \cite{Rogers:2001aa} considers an investor who is allowed to change portfolio only at times which are multiple of a constant $h>0$, and \cite{Rogers2002, matsumoto2006, Andrew14} assume that an illiquid asset can only be traded on the arrival of a randomly occurring trading opportunity that is represented by jump times of a Poisson process. \cite{2008Pham, 2011Pham} consider an optimal investment/consumption problem under the assumption that the asset price is observed only at the random trade times. \cite{2014Pham} complicates the model by using random intensity of trade times, regime-switching, and liquidity shocks. In \cite{Dai2016}, a risky asset can be traded only on deterministic time intervals and the investor pays proportional transaction costs.

Transaction costs (such as order processing fees or transaction taxes) are another source of market illiquidity, and the optimal investment problems with transaction costs have been extensively studied in mathematical finance community. \cite{MagCon76, DavNor90, ShrSon94} study the model in \cite{Mer71} with the assumption that proportional transaction costs are levied on each transaction, and show (with different level of mathematical rigorosity) that it is optimal to keep the fraction of wealth invested in the risky asset in an interval so called {\it no-trade region}. The boundaries of the no-trade region are characterized in terms of the free-boundaries determined by the HJB (Hamilton-Jacobi-Bellman) equation of the control problem. The models with transaction costs and multiple risky assets have been studied (e.g., \cite{Aki96, Liu04, Mut06, CheDai13} for costs on all assets and \cite{Dai11, Guasoni, Hobson16, Choi2020} for costs on only one assets) to characterize the value function or no-trade region. 
More general stochastic processes have been also considered in the framework of optimal investment with transaction costs (e.g., \cite{czichowsky2014transaction, czichowsky2016duality, bayraktar2020extended}).

In this paper, we merge the aforementioned frameworks and analyze an optimal investment problem in a market model with both search frictions and transaction costs. We consider the classical Merton's portfolio  problem with a log-utility investor, whose goal is to maximize the expected utility of wealth at the terminal time $T>0$. We assume that the investor's attempt of trading is successful only when a Poisson process with intensity $\lambda$ jumps (as in \cite{Rogers2002, matsumoto2006}), and the investor needs to pay proportional transaction costs for successful trading (as in \cite{DavNor90, ShrSon94}). We show that there exists a unique classical solution of the HJB equation and provide the standard verification argument. As in the aforementioned models for transaction costs, the optimal trading strategy is characterized by a no-trade region: if the investor can trade at time $t\in [0,T)$, then there are two constants $0\leq \uy(t)\leq \oy(t)\leq 1$ such that the investor should minimally trade to keep the fraction of wealth invested in the risky asset inside of the interval $[\uy(t),\oy(t)]$. Strict concavity of the value function uniquely determines the boundaries $\uy(t)$ and $\oy(t)$  of the no-trade region.

We provide asymptotic expansions of the value function and the no-trade boundaries for small transaction costs. For the transaction cost parameter $\epsilon$, in the similar model set up without the search frictions, it is well known that the width of the no-trade region is the order of $\epsilon^{\frac{1}{3}}$ and the decrease of the value function is the order of  $\epsilon^{\frac{2}{3}}$ (e.g., see \cite{ShrSon94, JanShr04, GerMuhSch11, Cho11, BicShr11, Pos15, Guasoni, Choi2020}). In our model with the search frictions, we show that both the width of the no-trade region and the decrease of the value function are the order of $\epsilon$, not $\epsilon^{\frac{1}{3}}$ nor $\epsilon^{\frac{2}{3}}$. The search frictions in our model induce the discrete trading times, and as in the discrete time portfolio selection problems (e.g., see \cite{bayer2014utility, quek2017portfolio}), we have the $\epsilon$ order in the asymptotics.\footnote{In contrast to \cite{bayer2014utility, quek2017portfolio}, the trading times are random in our model. } 

The previous paragraph suggests that the illiquidity due to the search fractions makes the effect of the transaction costs less severe. To clarify this effect more, we differentiate the coefficients of the first-order terms with respect to the search friction parameter $\lambda$, and examine the signs of these quantities as $\lambda\to \infty$. This analysis implies that the effects of the transaction costs are more pronounced (more widening effect of the no-trade region and more diminishing effect of the value function) in the market with less search frictions.\footnote{To be more specific, let $0<\lambda_1<\lambda_2$ and consider two markets $\mathcal{M}_1$ and $\mathcal{M}_2$ with search friction parameters $\lambda_1$ and $\lambda_2$, respectively. In words, $\mathcal{M}_2$ has less search frictions than $\mathcal{M}_1$. Our result implies that if we increase the transaction costs, the widening speed of the no-trade region in $\mathcal{M}_2$ is faster than that in $\mathcal{M}_1$. Similarly, if we increase the transaction costs, the diminishing speed of the optimal value in $\mathcal{M}_2$ is faster than that in $\mathcal{M}_1$.
}

Our modeling assumption of the Poisson arrivals of the successful trading times has a similar taste to the assumptions for the liquidity provision in some models on limit order markets.    
\cite{foucault2005limit, rocsu2009dynamic} derive equilibrium order placement strategies in a limit order market model, where patient/impatient agents arrive at the market according to a Poisson process.
\cite{guilbaud2013optimal} solves an optimal market-making model with execution/inventory risks, where the limit order book is modeled by a Markov chain that jumps according to a stochastic clock described by a Poisson process.
\cite{kuhn2010optimal} studies Merton's portfolio problem where the investor has the choice between market orders (immediate transactions) and limit orders (Poisson arrivals of execution times). In the model, trading with limit orders has a flavor of negative proportional transaction costs. 

 The remainder of the paper is organized as follows. Section 2 describes the model. In Section 3, we provide the verification argument and the strict concavity of the value function. In Section 4, we characterize the optimal trading strategy in terms of the no-trade region, and present some properties of the no-trade region. In Section 5, we provide asymptotic analysis for small transaction costs. 
Section 6 summarizes this paper. 
 Proof of the technical lemmas can be found in Appendix.


\section{The Model}

Let $(\Omega, \FF, (\FF_t)_{t\geq0}, \PP)$ be a filtered probability space satisfying the usual conditions. Under the filtration, we assume that $(B_t)_{t \geq 0}$ is a standard Brownian motion and $(P_t)_{t\geq 0}$ is a Poisson process with intensity $\lambda>0$. We assume that $B$ and $P$ are independent processes.

We consider a market that has two different types of illiquidity: (i) the investor's attempt of trading is successful only when the Poisson process $(P_t)_{t\geq 0}$ jumps,\footnote{Therefore, bigger $\lambda$ implies more frequent trading opportunities (less search frictions), on average.} and (ii) the investor needs to pay proportional transaction costs for successful trading.

 To be more specific, we consider a financial market consisting of a bond and a stock, whose price processes $(S_t^{(0)})_{t\geq0}$ and $(S_t)_{t\geq0}$ are given by the following stochastic differential equations (SDEs):
\begin{align}
dS_t^{(0)}&=S_t^{(0)} r dt\\
dS_t&=S_{t} \left( \mu dt + \sigma dB_t \right),
\end{align}
where $\mu,r,\sigma,S_0^{(0)},S_0$ are constants and $\sigma,S_0^{(0)},S_0$ are assumed to be strictly positive. 

The proportional transaction costs are described by two constants $\ue\in [0,1)$ and $\be\in [0,\infty)$: the investor gets $(1-\ue)S_t$ for selling one share of the stock, and pays $(1+\be)S_t$ for purchasing one share of the stock. 

Let $W_t^{(1)}$ be the amount of wealth invested in the stock and $W_t^{(0)}$ be the amount of wealth in the bond, at time $t\geq 0$. If the investor tries to obtain the stock worth $M_s$ at time $s\in [0,t]$, then 
\begin{equation}
\begin{split}\label{SDE_W01}
W_t^{(1)}&=w_0^{(1)}+ \int_0^t W_{s-}^{(1)} \big( \mu ds + \sigma dB_s  \big) + \int_0^t M_s dP_s,\\
W_t^{(0)}&=w_0^{(0)}+\int_0^t W_{s-}^{(0)} r ds  +\int_0^t \left( (1-\ue) M_s^-  -(1+\be)M_s^+ \right) dP_s,
\end{split}
\end{equation}
where we use notation $x^{\pm}=\max{\{0,\pm x\}}$ for $x\in \R$, and the constants $w_0^{(1)}\geq 0 $ and $w_0^{(0)}\geq 0$ are given model parameters that represent the initial position of the investor. 

The trading strategy $(M_t)_{t\geq 0}$ is called  {\it admissible} if
it is a predictable process and the corresponding total wealth process $W:=W^{(0)}+W^{(1)}$ is nonnegative all the time. This nonnegativity condition is equivalent to $W^{(0)}_t \geq 0$ and $W^{(1)}_t\geq 0$ for all $t\geq 0$, because the rebalancing times are discrete.\footnote{
Indeed, for $s>t$ and $A=\{W^{(1)}_t<0, W^{(0)}_t>0\}$ or $\{W^{(1)}_t>0, W^{(0)}_t<0\}$, we observe that
$$\PP(W_s<0 \, | \,A) \geq  \PP(P_t=P_s \textrm{  and  } W_s<0 \, | \, A)>0.$$
} 
Therefore, an admissible strategy $M$ satisfies
\begin{align}\label{M_bound}
-W_{t-}^{(1)}\leq M_t \leq \frac{W_{t-}^{(0)}}{1+\be}, \quad t\geq 0.
\end{align}

For an admissible strategy $M$ and the corresponding solutions $W^{(1)}$ and $W^{(0)}$ of the SDEs \eqref{SDE_W01}, let $X_t:=W_t^{(1)}/W_t$ be the fraction of the total wealth invested in the stock market at time $t$. Then, the inequalities in \eqref{M_bound} imply that $0\leq X_t \leq 1$ and the application of Ito's formula produces the following SDE for $(W,X)$:
\begin{equation}
\begin{split}\label{SDE_WX}
dW_t&=\left( r(1-X_{t-})+\mu X_{t-}\right) W_{t-} dt + \sigma X_{t-} W_{t-} dB_t - \left( \be M_t^+ + \ue M_t^- \right) dP_t,\\
dX_t &=X_{t-} (1-X_{t-}) \left( (\mu-r-\sigma^2 X_{t-})dt + \sigma dB_t  \right) + \left( \frac{M_t + \left(\be M_t^+ + \ue M_t^-\right) X_{t-}}{W_{t-} - \be M_t^+ -\ue M_t^-} \right) dP_t,  
\end{split}
\end{equation}
where the initial conditions are $W_0=w_0:=w_0^{(1)}+w_0^{(0)}$ and $X_0=x_0:=w_0^{(1)}/w_0$. We assume that the initial total wealth is strictly positive, $w_0>0$. The nonnegativity of $w_0^{(1)}$ and $w_0^{(1)}$ imply $0\leq x_0 \leq 1$.

Let a constant $T>0$ represent the terminal time of trading. We assume that the investor's goal is to maximize the expected log-utility of the total wealth at the terminal time.
 That is, we analyze the following optimal investment problem:
\begin{align}
\sup_{(M_t)_{t\in [0,T]}} \EE[\ln(W_T) ], \label{objective}
\end{align}
where the supremum is taken over all admissible trading strategies.

\begin{remark}
If $\lambda=\infty$ and $\be=\ue=0$, then our model becomes the classical Merton's portfolio problem. 
The case of $\be=\ue=0$ and $\lambda<\infty$ is studied in \cite{matsumoto2006}.
\end{remark}

\section{Value function}

Let $V$ be the value function of the control problem \eqref{objective}:
\begin{align}\label{value_def}
V(t,x,w)=\sup_{(M_s)_{s\in[t,T]} } \EE[ \ln(W_T)| \FF_t] \Big|_{(X_{t},W_t)=(x,w)}.
\end{align}
As usual, the scaling property of the wealth process and the property of log-utility enable us to conjecture the form of the value function as 
$$V(t,x,w)=\ln(w)+v(t,x)$$
for a function $v$ (we verify this in \thmref{verification}). Then, the HJB equation for \eqref{value_def} becomes
\begin{equation}\label{hjb_v}
\begin{split}
\begin{cases}
0=v(T,x),\\
0= v_t  + x(1-x)(\mu-r-\sigma^2 x)v_x+\frac{1}{2} \sigma^2 x^2 (1-x)^2 v_{xx}+ (\mu-r)x+r-\tfrac{1}{2}\sigma^2 x^2 - \lambda v \\
\qquad + \lambda   \sup_{y \in [0,1]}  \left( v(t,y)-  \ln \left(\frac{1+\be y}{1+\be x} \right) 1_{\{x< y\}} -\ln \left(\frac{1-\ue y}{1-\ue x} \right) 1_{\{x> y\}}    \right),
\end{cases}
\end{split}
\end{equation} 
where $v_t, v_x, v_{xx}$ are partial derivatives. The following lemma ensures that there exists a unique classical solution of the above PDE (partial differential equation).

\begin{lemma}\label{PDE_existence}
There exists a unique $v\in C([0,T]\times [0,1])\cap C^{1,2}([0,T]\times (0,1))$ that satisfies the followings:
(i) $v$ satisfies the HJB equation \eqref{hjb_v} for $(t,x)\in (0,T)\times (0,1)$.\\
(ii) For $x\in \{0,1\}$, the map $t\mapsto v(t,x)$ is continuously differentiable on $[0,T]$ and satisfies 
\begin{equation}\label{hjb_v01}
\begin{split}
\begin{cases}
0=v(T,x),\\
0= v_t(t,x)+  (\mu-r)x+r-\tfrac{1}{2}\sigma^2 x^2 - \lambda v(t,x)   \\
\qquad+   \lambda   \sup_{y\in [0,1]}  \left( v(t,y)-  \ln \left(\frac{1+\be y}{1+\be x} \right) 1_{\{x< y\}} -\ln \left(\frac{1-\ue y}{1-\ue x} \right) 1_{\{x> y\}}    \right).
\end{cases}
\end{split}
\end{equation} 
(iii) $v_t(t,x), \, x(1-x)v_x(t,x), \, x^2(1-x)^2 v_{xx}(t,x)$ are uniformly bounded on $(t,x)\in (0,T)\times (0,1)$.
\end{lemma}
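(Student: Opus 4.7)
My plan is to treat the nonlocal $\sup_y$ term as a given forcing and recast the problem as a Banach fixed-point. For $u\in C([0,T]\times[0,1])$, let
\[
G_u(t,x):=\sup_{y\in[0,1]}\left(u(t,y)-\ln\left(\tfrac{1+\be y}{1+\be x}\right)1_{\{x<y\}}-\ln\left(\tfrac{1-\ue y}{1-\ue x}\right)1_{\{x>y\}}\right),
\]
which is jointly continuous in $(t,x)$ by Berge's maximum theorem: the objective is continuous in $(t,x,y)$ because both logarithm terms vanish as $y\to x$, so the discontinuities of the indicators are harmless. Define $\Phi\colon u\mapsto v$, where $v$ is the unique classical solution of the \emph{linear} equation obtained by replacing the sup in \eqref{hjb_v} with $G_u$,
\[
v_t+\mathcal{L}v+h(x)-\lambda v+\lambda G_u(t,x)=0,\qquad v(T,\cdot)=0,
\]
with $\mathcal{L}:=x(1-x)(\mu-r-\sigma^2 x)\partial_x+\tfrac{1}{2}\sigma^2 x^2(1-x)^2\partial_{xx}$ and $h(x):=(\mu-r)x+r-\tfrac{1}{2}\sigma^2 x^2$. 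By Feynman--Kac,
\[
\Phi(u)(t,x)=\EE\left[\int_t^T e^{-\lambda(s-t)}\bigl(h(X_s^{t,x})+\lambda G_u(s,X_s^{t,x})\bigr)\,ds\right],
\]
where $X$ is the $[0,1]$-valued diffusion $dX_s=X_s(1-X_s)(\mu-r-\sigma^2 X_s)\,ds+\sigma X_s(1-X_s)\,dB_s$.

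\textbf{Contraction.} Since $u\mapsto G_u$ is $1$-Lipschitz in sup norm (as a sup functional),
\[
\|\Phi(u_1)-\Phi(u_2)\|_\infty\leq\sup_{t\in[0,T]}\int_t^T \lambda e^{-\lambda(s-t)}\,ds\,\|u_1-u_2\|_\infty=(1-e^{-\lambda T})\|u_1-u_2\|_\infty.
\]
With $1-e^{-\lambda T}<1$, Banach's fixed-point theorem on $C([0,T]\times[0,1])$ yields a unique $v=\Phi(v)$. Any candidate in the class of (i)--(ii) is itself a fixed point via Feynman--Kac applied to $X$: the degeneracy of $\mathcal{L}$ at $\{0,1\}$ poses no issue because $X^{t,x}$ is absorbed there, in which case \eqref{hjb_v} collapses to the ODE \eqref{hjb_v01}. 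Hence uniqueness within the class of the lemma follows.

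\textbf{Regularity and bounds.} The crux for the $C^{1,2}$ smoothness and condition (iii) is to remove the degeneracy of $\mathcal{L}$ via the logit change of variables $\theta:=\ln(x/(1-x))$. A short It\^o computation gives $d\theta_s=(\mu-r-\tfrac{1}{2}\sigma^2)\,ds+\sigma\,dB_s$, so the transformed $\tilde v(t,\theta):=v(t,x(\theta))$ satisfies a \emph{uniformly parabolic, constant-coefficient} linear equation on $(0,T)\times\R$ with bounded source. One then runs the standard bootstrap: interior $L^p$ estimates upgrade the continuity of $v$ to local H\"older regularity of $\tilde v$; this H\"older regularity passes to $\tilde G_v$ via the sup-of-equi-H\"older argument (together with the fact that the log-correction terms are themselves Lipschitz in $x$ uniformly in $y$); and classical Schauder theory then yields $\tilde v\in C^{1,2}_b((0,T)\times\R)$ with uniformly bounded $\tilde v_t,\tilde v_\theta,\tilde v_{\theta\theta}$. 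Translating back through $\tilde v_\theta=x(1-x)v_x$ and $\tilde v_{\theta\theta}=x^2(1-x)^2 v_{xx}+(1-2x)\tilde v_\theta$ gives the interior $C^{1,2}$ regularity and all three bounds of (iii). The boundary points $x\in\{0,1\}$, sent to $\pm\infty$ by the logit, must be treated separately: since $X^{t,x}$ is absorbed at $x$, Feynman--Kac reduces to $v(t,x)=\int_t^T e^{-\lambda(s-t)}(h(x)+\lambda G_v(s,x))\,ds$, which is manifestly $C^1$ in $t$ and whose derivative is precisely \eqref{hjb_v01}. The main obstacle I anticipate is making this bootstrap rigorous on the unbounded strip in $\theta$: one must propagate estimates uniformly by exploiting the translation-invariant structure of the transformed equation, and verify carefully that the maximum theorem preserves the local H\"older modulus at each step.
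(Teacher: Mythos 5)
Your proposal takes essentially the same overall route as the paper: contraction in the sup norm, logit change of variables to remove the degeneracy, and classical parabolic (Schauder-type) existence theory. The change of variables, the Feynman--Kac representation, the $(1-e^{-\lambda T})$-contraction, and the treatment of $x\in\{0,1\}$ as absorbing states all match the paper's argument (the paper merely sets up the contraction already in the transformed variable $z=\ln(x/(1-x))$ rather than in $x$). The difference lies in how the forcing term's regularity is established, and here the paper's observation is both sharper and removes precisely the obstacles you flag at the end. You propose an $L^p\to$Sobolev$\to$H\"older$\to$Schauder bootstrap and worry about propagating local H\"older moduli uniformly on the unbounded $\theta$-strip. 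The paper avoids the $L^p$ machinery and any locality issues altogether by noting that the transformed forcing $K_f(s,z)$ is \emph{automatically} globally Lipschitz in $z$ for \emph{any} bounded continuous $f$: the term $f(s,\zeta)$ does not depend on $z$ at all (it cancels in the difference of the two suprema), so only the explicit log-correction terms enter, and these are Lipschitz with a constant depending only on $\mu,r,\sigma,\lambda,\be,\ue$. The $\tfrac12$-H\"older regularity of $K_{\hat u}$ in $t$ is then obtained not by PDE bootstrap but probabilistically: the fixed-point identity $\hat u=\phi(\hat u)$, the Lipschitz-in-$z$ bound just derived, and the estimate $\EE|B_{t+\Delta}-B_t|=\sqrt{2\Delta/\pi}$ give global, explicit moduli. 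With $K_{\hat u}\in C^{\delta/2,\delta}([0,T]\times\R)$ in hand, a single application of Krylov's Schauder theorem yields $u\in C^{1+\delta/2,2+\delta}$ on the whole strip, and (iii) follows from the logit identities you wrote. So your approach would likely work but is over-engineered in the middle step; you also define $\Phi(u)$ as the \emph{classical} solution of the linearized equation, which is premature since $G_u$ need not be H\"older in $t$ for arbitrary continuous $u$ — one should define $\Phi(u)$ via Feynman--Kac and upgrade to a classical solution only at the fixed point, as the paper does.
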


\begin{proof}
See Appendix.
\end{proof}

\begin{remark}
In \lemref{PDE_existence}, we present the differential equations for $x\in (0,1)$ and $x\in \{0,1\}$ as \eqref{hjb_v} and \eqref{hjb_v01}. The reason we separate these two cases is that $v_x$ and $v_{xx}$ may not be continuously extended to the endpoints $x\in \{0,1\}$.
\end{remark}

\begin{remark}
The model with the search frictions can be seen as a generalization of \cite{dai2009finite}. However, our analysis has some technically {\it easier} features than that of \cite{dai2009finite}, because our solvency region is the first quadrant and we have no subtle issues regarding the regularity of the value function at $x=1$ or $x=0$.  
\end{remark}

\begin{theorem}\label{verification}
Let $V$ be the value function in \eqref{value_def} and $v$ be as in \lemref{PDE_existence}. Then, for $(t,x)\in [0,T]\times [0,1]$,
\begin{align}
V(t,x,w)=\ln(w)+ v(t,x).
\end{align}
\end{theorem}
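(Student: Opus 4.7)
I plan to establish $V(t, x, w) = U(t, x, w) := \ln(w) + v(t, x)$ by the standard verification argument, proving both $V \leq U$ and $V \geq U$ via an Itô-for-jump-diffusions expansion of $U(t, X_t, W_t)$ along the controlled dynamics~\eqref{SDE_WX}, together with the HJB equation from \lemref{PDE_existence}.

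The central computation is the predictable drift of $U(t, X_t, W_t)$. Along any admissible $M$, the continuous part contributes
\begin{equation*}
v_t + x(1-x)(\mu - r - \sigma^2 x) v_x + \tfrac{1}{2}\sigma^2 x^2 (1-x)^2 v_{xx} + (\mu - r) x + r - \tfrac{1}{2}\sigma^2 x^2
\end{equation*}
at $(t, X_{t-})$, plus a $dB_t$-martingale. At each Poisson jump rebalancing from $X_{t-} = x$ to $Y_t = y$, the increment of $U$ equals
\begin{equation*}
v(t, y) - v(t, x) - \ln\frac{1 + \be y}{1 + \be x}\, 1_{\{y > x\}} - \ln\frac{1 - \ue y}{1 - \ue x}\, 1_{\{y < x\}},
\end{equation*}
where the log-terms come from the cost accounting identities $W_t = W_{t-}(1 + \be x)/(1 + \be y)$ for buys and $W_t = W_{t-}(1 - \ue x)/(1 - \ue y)$ for sells, both read off from~\eqref{SDE_W01}. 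Compensating $dP_t$ by $\lambda\, dt$ and summing, the total predictable drift of $U(t, X_t, W_t)$ is precisely the left-hand side of~\eqref{hjb_v} evaluated at $y = Y_t$. By the HJB equation this drift is $\leq 0$ for every $Y_t \in [0, 1]$ and equals $0$ whenever $Y_t$ attains the supremum.

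For the upper bound, localize with $(\tau_n) \uparrow \infty$ to turn the $dB$-part and the compensated Poisson part of $U$ into genuine martingales on $[0, T \wedge \tau_n]$; the uniform bounds in \lemref{PDE_existence}(iii), together with the boundedness of $v$ and of the log-cost terms on $[0,1]$, make this routine. The resulting supermartingale inequality, combined with standard log-wealth estimates under the admissibility constraint $X_t \in [0,1]$, gives, after $n \to \infty$ and $v(T, \cdot) = 0$, the bound $\EE[\ln W_T \mid \FF_t] \leq U(t, X_t, W_t)$; taking the supremum over admissible $M$ yields $V \leq U$. For the reverse inequality, the objective inside the supremum in~\eqref{hjb_v} is continuous in $y$ on the compact set $[0,1]$, so a measurable selector $y^*(t, x)$ of the maximizer exists; the feedback strategy that rebalances from $X_{t-}$ to $y^*(t, X_{t-})$ at every Poisson jump makes the drift of $U$ identically zero, and the same localization argument with equality throughout yields $V \geq U$.

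The main obstacle is the possible lack of continuous extension of $v_x, v_{xx}$ to the boundary $x \in \{0,1\}$. This is reconciled by observing that (i) the products $x(1-x)v_x$ and $x^2(1-x)^2 v_{xx}$ that actually appear in Itô's formula are uniformly bounded by \lemref{PDE_existence}(iii); (ii) the degenerate PDE~\eqref{hjb_v01} from \lemref{PDE_existence}(ii) takes over at $x \in \{0,1\}$, which can be reached only through a Poisson trade since the continuous coefficients $x(1-x)$ and $x^2(1-x)^2$ vanish on the boundary, and at such a jump only the jump component of $U$ is active. The remaining points — measurability of the selector $y^*$ and the uniform integrability needed for the $\tau_n \to \infty$ passage — are standard.
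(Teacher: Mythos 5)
Your proposal is correct and takes essentially the same approach as the paper: Itô's formula applied to $\ln W_t + v(t,X_t)$, with the continuous drift and the compensated Poisson jump term (derived from the same wealth-ratio identities $W_t = W_{t-}\tfrac{1+\be X_{t-}}{1+\be X_t}$ for buys and $W_t = W_{t-}\tfrac{1-\ue X_{t-}}{1-\ue X_t}$ for sells) combining into the HJB expression, giving the supermartingale inequality for the upper bound and a measurable-selector feedback strategy for the lower bound, with the degenerate boundary $x\in\{0,1\}$ handled via the ODE \eqref{hjb_v01}. The only cosmetic difference is that you introduce a localization before passing to the limit, whereas the paper invokes the uniform bounds from \lemref{PDE_existence}(iii) directly to assert true-martingale and integrability properties without stopping; you should also explicitly record that the feedback strategy from the selector is admissible (i.e.\ keeps $\hat X_t\in[0,1]$ and $\hat W_t\geq 0$), as the paper does, though this is immediate from $\hat y(t,\cdot)\in[0,1]$.
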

\begin{proof}
Without loss of generality, we prove $V(0,x_0,w_0)=\ln(w_0)+v(0,x_0)$.
Let $M$ be an admissible trading strategy and $(W,X)$ be the corresponding solution of \eqref{SDE_WX}. Let $\tau_n:=\inf\{ t\geq 0: P_t=n\}$. If $X_{\tau_n}=0$ (resp., $X_{\tau_n}=1$), then for $\tau_n \leq t <\tau_{n+1}$, 
\begin{align}\label{X01}
X_{t}=0, \,\, dW_t = r W_t dt \quad \textrm{\big(resp.,  } X_{t}=1, \,\, dW_t = \mu W_t dt + \sigma W_t  dB_t  \big).
\end{align}

We apply Ito's formula to $\big(\ln (W_t)+v(t,X_t)\big)_{t\in [0,T]}$ and use \eqref{SDE_WX} and \eqref{X01} to obtain
\begin{displaymath}
\begin{split}
&\ln(W_T)- \ln(w_0)-v(0,x_0)\\
&= \int_0^T \bigg( \Big( v_t(t,x)  + x(1-x)(\mu-r-\sigma^2 x)v_x(t,x)+\tfrac{1}{2} \sigma^2 x^2 (1-x)^2 v_{xx}(t,x) \\
&\qquad \qquad + (\mu-r)x+r -\tfrac{1}{2}\sigma^2 x^2  \Big) \cdot  1_{\{0<x<1 \}} + \Big( v_t(t,x)  + (\mu-r)x+r -\tfrac{1}{2}\sigma^2 x^2 \Big)  \cdot 1_{\{x\in \{0,1\} \} }
 \bigg) \bigg|_{x=X_{t-}} dt\\
 &\quad + \int_0^T \sigma  \Big(x(1-x)v_x(t,x)+x\Big)\cdot 1_{\{0<x<1 \}} \bigg|_{x=X_{t-}}dB_t + \int_0^T \sigma \cdot 1_{\{x=1 \}} \bigg|_{x=X_{t-}}dB_t \\
&\quad + \sum_{ 0<t\leq T } \Big( \ln (W_t)-\ln(W_{t-}) + v(t,X_t)-v(t,X_{t-}) \Big).
\end{split}
\end{displaymath}
The stochastic integral term above is a true martingale, because \lemref{PDE_existence} (iii) implies that the integrand is uniformly bounded. The sum of jumps term above can be written as
\begin{displaymath}
\begin{split}
\int_0^T  \Big( v(t,y)- v(t,x)  -  \ln \left(\tfrac{1+\be y}{1+\be x} \right) 1_{\{x< y\}} -\ln \left(\tfrac{1-\ue y}{1-\ue x} \right)  1_{\{x> y\}}    \Big) \bigg|_{(x,y)=\left(X_{t-},\frac{X_{t-}W_{t-}+M_t}{W_{t-} - \be M_t^+ -\ue M_t^-} \right)}   dP_t.
\end{split}
\end{displaymath}
Considering that the integrand above is bounded and $(P_t - \lambda t)_{t\in [0,T]}$ is a martingale, we can write the expected value of the above expression as 
\begin{displaymath}
\begin{split}
\EE\left[\int_0^T  \lambda \Big( v(t,y)- v(t,x)  -  \ln \left(\tfrac{1+\be y}{1+\be x} \right) 1_{\{x< y\}} -\ln \left(\tfrac{1-\ue y}{1-\ue x} \right)  1_{\{x> y\}}    \Big) \bigg|_{(x,y)=\left(X_{t-},\frac{X_{t-}W_{t-}+M_t}{W_{t-} - \be M_t^+ -\ue M_t^-} \right)}   dt \right].
\end{split}
\end{displaymath}
Combining these observations, we obtain
\begin{equation}
\begin{split}\label{ito}
&\EE[\ln(W_T)]- \ln(w_0)-v(0,x_0)\\
&= \EE \Bigg[ \int_0^T \Bigg( \bigg( v_t(t,x)  + x(1-x)(\mu-r-\sigma^2 x)v_x(t,x)+\tfrac{1}{2} \sigma^2 x^2 (1-x)^2 v_{xx}(t,x)+ (\mu-r)x\\
&\qquad +r -\tfrac{1}{2}\sigma^2 x^2  + \lambda  \Big( v(t,y) -v(t,x) -  \ln \left(\tfrac{1+\be y}{1+\be x} \right) 1_{\{x< y\}} -\ln \left(\tfrac{1-\ue y}{1-\ue x} \right)  1_{\{x> y\}}    \Big) \bigg)  \cdot 1_{\{0<x<1 \}} \\
&\qquad + \bigg( v_t(t,x)  + (\mu-r)x+r -\tfrac{1}{2}\sigma^2 x^2+ \lambda  \Big( v(t,y)- v(t,x) -  \ln \left(\tfrac{1+\be y}{1+\be x} \right) 1_{\{x< y\}}\\
&\qquad  -\ln \left(\tfrac{1-\ue y}{1-\ue x} \right)  1_{\{x> y\}}    \Big) \bigg)  \cdot 1_{\{x\in \{0,1\} \} }
 \Bigg) \Bigg|_{(x,y)=\left(X_{t-},\frac{X_{t-}W_{t-}+M_t}{W_{t-} - \be M_t^+ -\ue M_t^-} \right)} dt \Bigg],
 \end{split}
\end{equation}
where the integrability is due to \lemref{PDE_existence} (iii). The equality \eqref{ito} and \lemref{PDE_existence} (i) and (ii) imply that for any admissible trading strategy $M$, 
\begin{equation}\label{sub_opt}
\EE[\ln(W_T)]\leq  \ln(w_0)+v(0,x_0).
\end{equation}

To complete the proof, we find an optimal strategy $\hat M$ that satisfies the equality in \eqref{sub_opt}. We observe that the map
\begin{align}
(t,x,y)\mapsto v(t,y)-  \ln \left(\tfrac{1+\be y}{1+\be x} \right) 1_{\{x< y\}} -\ln \left(\tfrac{1-\ue y}{1-\ue x} \right) 1_{\{x> y\}} \nonumber
\end{align}
is continuous on $[0,T]\times [0,1]^2$. Therefore, according to \lemref{meas_lem}, we can choose a measurable function $\hat y:[0,T]\times [0,1]\to [0,1]$ such that 
\begin{align}\label{hat y_def}
\hat y(t,x)\in \argmax_{y\in [0,1]}  \left( v(t,y)-  \ln \left(\tfrac{1+\be y}{1+\be x} \right) 1_{\{x< y\}} -\ln \left(\tfrac{1-\ue y}{1-\ue x} \right) 1_{\{x> y\}}\right).
\end{align}
Using $\hat y$, we define a measurable function $m : [0, T] \times [0,\infty)\times [0,1]\to \R$ as
\begin{align}\label{m_def}
m(t,w,x):=\frac{w(\hat y(t,x)-x)}{1+\be \,\hat y(t,x)} \cdot 1_{ \big\{ \hat{y}(t,x)> x \big\}} +  \frac{w(\hat y(t,x)-x)}{1-\ue  \,\hat y(t,x) }\cdot 1_{ \big\{ \hat{y}(t,x)< x \big\}}.
\end{align} 
Let $(\hat W,\hat X)$ be the unique solution\footnote{Indeed, the SDE in \eqref{SDE_WX} without $dP_t$ term has a unique (explicit) solution. The unique solution of \eqref{SDE_WX} can be obtained by patching the unique solutions on time intervals between the jump times of the Poisson process, with the jump size described by the coefficient of $dP_t$ term.}
 of SDE \eqref{SDE_WX} with $M_t=m(t, W_{t-},X_{t-})$. Now we define $\hat M_t:=m(t,\hat W_{t-},\hat X_{t-})$, then it is a predictable process.
 From \eqref{m_def} and \eqref{SDE_WX}, we observe that
$$\Delta \hat X_t = \left(\hat y(t,\hat X_{t-})-\hat X_{t-}\right) \Delta P_t.$$
Then the structure of \eqref{SDE_WX} and $0\leq \hat y \leq 1$ imply that $0\leq \hat X_t \leq 1$ and $\hat W_t \geq 0$ for all $t\in [0,T]$. Therefore, we conclude that $\hat M$ is an admissible trading strategy.

Finally, we substitute $(\hat X, \hat W, \hat M)$ for $(X,W,M)$ in \eqref{ito}. Then, the differential equations \eqref{hjb_v} and \eqref{hjb_v01} for $v$, the optimality of $\hat y$ in \eqref{hat y_def}, and the observation $\frac{\hat X_{t-}\hat W_{t-}+\hat M_t}{\hat W_{t-} - \be \hat M_t^+ -\ue \hat M_t^-}= \hat y(t,\hat X_{t-})$ produce
\begin{align}\label{opt}
\EE[\ln(\hat W_T)]= \ln(w_0)+v(0,x_0).
\end{align}
By  \eqref{sub_opt} and \eqref{opt}, we conclude $V(0,x_0, w_0)=\ln(w_0)+v(0,x_0)$ and the optimality of $\hat M$. 
\end{proof}

In the proof of \thmref{verification}, the optimal trading strategy is described by the function $\hat y$ in \eqref{hat y_def}. In the next section, we show that the maximizer in  \eqref{hat y_def} is unique, and we  characterize the form of the unique optimal strategy in detail. For this purpose, we use the {\it strict} concavity of the value function. To be more specific, we define $\tV:[0,T]\times \left([0,\infty)^2\setminus \{(0,0)\} \right)\to \R$ as
\begin{align}\label{tV_def}
\tV(t,a,b):=\sup_{(M_s)_{s\in[t,T]} } \EE \left[ \ln\left(W_T^{a,b,M}\right)   \right],
\end{align}
where $W_T^{a,b,M}$ represents the total wealth at time $T$ with $(W^{(0)}_{t},W^{(1)}_{t})=(a,b)$ and the trading strategy $M$. In other words, $\tV$ is the value function in terms of the wealth amount invested in the bond and the stock. Then, \eqref{value_def} and \eqref{tV_def} imply that
\begin{align}\label{tV_V}
\tV(t,a,b)=V\left(t,\tfrac{b}{a+b},a+b\right).
\end{align}

\begin{proposition}\label{value_concave}
For $t\in [0,T)$, the maps $(a,b)\mapsto \tV(t,a,b)$ and $x\mapsto v(t,x)$ are strictly concave.
\end{proposition}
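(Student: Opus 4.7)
The plan is to establish strict concavity of $\tV(t,\cdot,\cdot)$ on the convex domain $[0,\infty)^2\setminus\{(0,0)\}$ first, and then deduce strict concavity of $v(t,\cdot)$ from \eqref{tV_V}, which rearranges to $v(t,x)=\tV(t,1-x,x)$. Since $x\mapsto(1-x,x)$ is an affine injection of $[0,1]$ into that domain, strict concavity of $\tV(t,\cdot,\cdot)$ transfers verbatim to $v(t,\cdot)$, so the proposition reduces to proving strict concavity of $\tV(t,\cdot,\cdot)$.

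For plain concavity of $\tV$, I will fix $t\in[0,T)$, pick $(a_1,b_1)\neq(a_2,b_2)$ and $\theta\in(0,1)$, and take optimal strategies $M^1,M^2$ for these endowments (existence is part of \thmref{verification}). The idea is to run the combined strategy $\bar M:=\theta M^1+(1-\theta)M^2$ from $(\bar a,\bar b):=\theta(a_1,b_1)+(1-\theta)(a_2,b_2)$ and compare with the convex combination of the individual wealth processes. The structural ingredients are that the stock SDE in \eqref{SDE_W01} is linear in $(W^{(1)},M)$ pathwise, while the bond SDE is affine in $W^{(0)}$ with forcing $f(M):=(1-\ue)M^--(1+\be)M^+$ that is concave in $M$ (piecewise linear with right slope $-(1+\be)$ and left slope $-(1-\ue)$, and $-(1+\be)\leq-(1-\ue)$). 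Writing the stock component in exponential form and the bond component as $W^{(0)}_s=ae^{r(s-t)}+\int_t^s e^{r(s-u)}f(M_u)\,dP_u$, I obtain pathwise $\hat W^{(1)}_s=\theta W^{(1),1}_s+(1-\theta)W^{(1),2}_s$ and $\hat W^{(0)}_s\geq\theta W^{(0),1}_s+(1-\theta)W^{(0),2}_s\geq 0$ throughout $[t,T]$. Since the admissibility constraint \eqref{M_bound} is affine in $(W^{(0)}_{s-},W^{(1)}_{s-},M_s)$, these bounds also verify that $\bar M$ is admissible for $(\bar a,\bar b)$, and monotonicity and concavity of $\ln$ then give $\tV(t,\bar a,\bar b)\geq\EE[\ln\hat W_T]\geq\EE[\ln(\theta W^1_T+(1-\theta)W^2_T)]\geq\theta\tV(t,a_1,b_1)+(1-\theta)\tV(t,a_2,b_2)$.

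The step I expect to be the main obstacle is promoting this chain to a strict inequality, which reduces to proving $\PP(W^1_T\neq W^2_T)>0$: strict concavity of $\ln$ will then force the Jensen step to be strict. I will condition on the event $A:=\{P_T-P_t=0\}$, which has probability $e^{-\lambda(T-t)}>0$ precisely because $t<T$. On $A$ no trade can occur on $[t,T]$, so regardless of the strategies, $W^i_T=a_ie^{r(T-t)}+b_ie^{(\mu-\sigma^2/2)(T-t)+\sigma(B_T-B_t)}$. Since $(a_1,b_1)\neq(a_2,b_2)$, the equation $W^1_T=W^2_T$ cuts out at most one value of the non-degenerate Gaussian $B_T-B_t$, and the independence of $B$ and $P$ then yields $\PP(A\cap\{W^1_T\neq W^2_T\})>0$, closing the argument.
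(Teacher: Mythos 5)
Your proof is correct and follows essentially the same route as the paper: convexify the optimal strategies, use the linear/concave structure of \eqref{SDE_W01} to obtain a pathwise dominance $\theta W^1_T+(1-\theta)W^2_T\leq \hat W_T$ together with admissibility, and then condition on the no-jump event $\{P_T-P_t=0\}$ to show $\PP(W^1_T\neq W^2_T)>0$, which makes the Jensen step strict. The only difference is that the paper asserts the admissibility of the convexified strategy and the wealth inequality \eqref{W_ineq} directly from ``the structure of the SDE,'' while you spell out why: the stock component is linear in $(b,M)$, and the bond component has the explicit form $W^{(0)}_s=ae^{r(s-t)}+\int_t^s e^{r(s-u)}f(M_u)\,dP_u$ with a concave forcing $f(M)=(1-\ue)M^--(1+\be)M^+$, which gives the right direction of the inequality; this fills in the detail the paper leaves implicit.
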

\begin{proof}
Without loss of generality, we prove the statement for $t=0$ case. 
Let $(a_0,b_0)\neq(a_1,b_1)$ be elements of $[0,\infty)^2\setminus \{(0,0)\}$ and $\theta\in (0,1)$. According to the proof of \thmref{verification} and the relation \eqref{tV_V}, we can find optimal trading strategies for the initial positions $(a_0,b_0)$ and $(a_1,b_1)$, and we denote them by $\hat M^{0}$ and $\hat M^{1}$, respectively. 
Then the structure of the SDE in \eqref{SDE_W01} implies that $M^{\theta}:=(1-\theta) \hat M^0 +\theta \hat M^1$ is an admissible trading strategy with initial position $(a_\theta,b_\theta):=\left((1-\theta)a_0+\theta a_1,(1-\theta)b_0+\theta b_1\right)$ and satisfies\footnote{
The inequality \eqref{W_ineq} becomes strict on the event $\left\{ \omega\in \Omega: \exists t\in [0,T] \textrm{ such that }\Delta P_t(\omega)=1 \textrm{ and }\hat M_t^0(\omega) \hat M_t^1(\omega)<0\right\}$.
}
\begin{align}\label{W_ineq}
 (1-\theta) W_T^{a_0,b_0,\hat M^0}+ \theta W_T^{a_1,b_1,\hat M^1} \leq W_T^{a_\theta,b_\theta,M^\theta}.
\end{align}
We also observe that
\begin{equation}
\begin{split}\label{W_different}
\PP\left (W_T^{a_0,b_0, \hat M^0}\neq W_T^{a_1,b_1, \hat M^1}\right)&\geq\PP\left (W_T^{a_0,b_0, \hat M^0}\neq W_T^{a_1,b_1, \hat M^1} \textrm{  and  } P_T=0\right)\\
&=\PP(P_T=0)\cdot  \PP\left(W_T^{a_0,b_0, \hat M^0}\neq W_T^{a_1,b_1, \hat M^1} \, \big| \, P_T=0 \right)\\
&=e^{-\lambda T} \cdot \PP\left( (a_1-a_0) e^{rT} + (b_1-b_0)e^{(\mu-\frac{\sigma^2}{2})T + \sigma B_T} \neq 0 \right)\\
&=e^{-\lambda T}>0,
 \end{split}
\end{equation}
where the second equality is due to the independence of $B$ and $P$, and the last equality is due to $(a_0,b_0)\neq(a_1,b_1)$ and the fact that $B_T$ is a continuous random variable.
By these observations, we obtain the strict concavity of $\tV$:
\begin{displaymath}
\begin{split}
(1-\theta)\tV(0,a_0,b_0)+\theta \tV(0,a_1,b_1)&=\EE\left[ (1-\theta)\ln\left(W_T^{a_0,b_0,\hat M^0}\right) +\theta \ln\left(W_T^{a_1,b_1,\hat M^1}\right)  \right]\\
&<\EE\left[ \ln\left((1-\theta)W_T^{a_0,b_0,\hat M^0}+\theta W_T^{a_1,b_1,\hat M^1}\right)  \right]\\
&\leq \EE\left[ \ln\left(W_T^{a_\theta,b_\theta,M^\theta}\right)  \right]\\
&\leq \tV(0,a_\theta,b_\theta),
 \end{split}
\end{displaymath}
where the first inequality is due to the strict concavity of logarithm and \eqref{W_different}, and the second inequality is from \eqref{W_ineq}. Therefore, the map $(a,b)\mapsto \tV(0,a,b)$ is strictly concave. This also implies that the map $x\mapsto \tV(0,1-x,x)$ is strictly concave. Finally, \thmref{verification} and the relation \eqref{tV_V} connect $\tV$ and $v$ as
\begin{align}
\tV(0,1-x,x)=V(0,x,1)=v(0,x), \nonumber
\end{align}
and we conclude that the map $x\mapsto v(0,x)$ is strict concave.
\end{proof}

\section{Optimal strategy}

In this section, we show that the optimal strategy can be characterized in terms of the {\it no-trade region}. We start with the construction of the {\it candidate} boundary points $\uy$ and $\oy$ of the no-trade region in the following lemma.

\begin{lemma}\label{uyoy_lem}
For $t\in [0,T)$, there exist $0\leq \uy(t)\leq \oy(t)\leq 1$ such that 
\begin{displaymath}
\begin{split}
\big\{ \uy(t) \big\}=\argmax_{y\in [0,1]} \Big( v(t,y)- \ln(1+\be y) \Big),\\
\big\{ \oy(t) \big\}=\argmax_{y\in [0,1]} \Big( v(t,y)- \ln(1-\ue y) \Big).
\end{split}
\end{displaymath}
To be more specific, the following statements hold:\\
(i) The map $y\mapsto v(t,y)- \ln(1+\be y)$ strictly increases (decreses) on $y\in [0,\uy(t)]$ ($y\in [\uy(t),1]$). \\If $0<\uy(t)<1$, then $\uy(t)$ is the unique solution of the equation $ v_x(t,x)=\frac{\be}{1+\be x}$.\\
(ii) The map $y\mapsto v(t,y)- \ln(1-\ue y)$ strictly increases (decreses) on $y\in [0,\oy(t)]$ ($y\in [\oy(t),1]$). \\If $0<\oy(t)<1$, then $\oy(t)$ is the unique solution of the equation $ v_x(t,x)=-\frac{\ue}{1-\ue x}$.
\end{lemma}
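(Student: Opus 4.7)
The plan is to reduce both existence and uniqueness of the two maximizers to the strict concavity of $\tV$ established in \proref{value_concave}, by realizing each objective function as $\tV$ restricted to an explicit line segment in $[0,\infty)^2\setminus\{(0,0)\}$. Existence is immediate: by \lemref{PDE_existence}, $v(t,\cdot)$ is continuous on the compact set $[0,1]$, so both maps $y\mapsto v(t,y)-\ln(1+\be y)$ and $y\mapsto v(t,y)-\ln(1-\ue y)$ attain their maxima on $[0,1]$.

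For uniqueness of the buy-target $\uy(t)$, I would introduce the curve
\[
\gamma_1(y):=\left(\tfrac{1-y}{1+\be y},\,\tfrac{y}{1+\be y}\right),\qquad y\in[0,1],
\]
and verify three things by direct computation: (a) $\gamma_1([0,1])$ lies on the line segment $L_1:=\{(a,b)\in[0,\infty)^2:\,a+(1+\be)b=1\}$; (b) the second coordinate $\tfrac{y}{1+\be y}$ has derivative $\tfrac{1}{(1+\be y)^2}>0$, so $\gamma_1$ is a strictly monotone parametrization of its image; and (c) from \eqref{tV_V},
\[
\tV\bigl(t,\gamma_1(y)\bigr)=\ln\tfrac{1}{1+\be y}+v(t,y)=v(t,y)-\ln(1+\be y).
\]
Since $\tV(t,\cdot,\cdot)$ is strictly concave by \proref{value_concave}, its restriction to the affine segment $L_1$ is strictly concave in any affine coordinate and admits a unique maximizer; pulling this back through the strictly monotone $y\mapsto\gamma_1(y)$ yields uniqueness of $\uy(t)$. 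The analogous construction with $\gamma_2(y):=\bigl(\tfrac{1-y}{1-\ue y},\,\tfrac{y}{1-\ue y}\bigr)$, whose image lies on the line $\{a+(1-\ue)b=1\}$ and satisfies $\tV(t,\gamma_2(y))=v(t,y)-\ln(1-\ue y)$, handles $\oy(t)$.

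The unimodality statements in (i) and (ii) then follow because a strictly concave function of one real variable is strictly increasing up to its maximum and strictly decreasing afterward, and this property is preserved under strictly monotone reparametrization. For the first-order conditions when $\uy(t)\in(0,1)$ (respectively $\oy(t)\in(0,1)$), differentiating the smooth objective at an interior maximum gives $v_x(t,\uy(t))=\tfrac{\be}{1+\be\uy(t)}$ (respectively $v_x(t,\oy(t))=-\tfrac{\ue}{1-\ue\oy(t)}$); uniqueness of the interior solution follows because strict concavity combined with the $C^1$-regularity from \lemref{PDE_existence} makes the derivative of $\tV$ restricted to $L_i$ strictly decreasing, hence with at most one interior zero, which is pulled back injectively by $\gamma_i$.

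The main structural obstacle will be the ordering $\uy(t)\le\oy(t)$, which I plan to derive using only the two argmax inequalities
\[
v(t,\uy)-\ln(1+\be\uy)\ge v(t,\oy)-\ln(1+\be\oy),\qquad
v(t,\oy)-\ln(1-\ue\oy)\ge v(t,\uy)-\ln(1-\ue\uy).
\]
Rearranging each to isolate $v(t,\uy)-v(t,\oy)$ and chaining them produces $\tfrac{1+\be\uy}{1+\be\oy}\le\tfrac{1-\ue\uy}{1-\ue\oy}$; clearing denominators and cancelling the common $\be\ue\uy\oy$ term simplifies this to $(\be+\ue)(\uy-\oy)\le 0$. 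If $\be+\ue>0$ this gives $\uy\le\oy$ directly; if $\be=\ue=0$ both objectives coincide with $v(t,\cdot)$, whose unique maximizer by strict concavity of $v$ forces $\uy=\oy$.
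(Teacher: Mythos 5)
Your proof is correct, and the core of it---realizing each objective $y\mapsto v(t,y)-\ln(1\pm\epsilon y)$ as the restriction of the strictly concave $\tV$ to an affine segment in $(a,b)$-space, then pulling back through the strictly monotone reparametrization $y\mapsto\tfrac{y}{1\pm\epsilon y}$---is exactly the argument the paper uses (the paper parametrizes the segment by $z=\tfrac{y}{1+\be y}$ and works with the derivative $D(t,z)$ along it, but this is the same idea dressed differently). Where you genuinely diverge is in establishing $\uy(t)\le\oy(t)$: the paper deduces it from the first-order characterization $v_x(t,\oy(t))\le -\tfrac{\ue}{1-\ue\oy(t)}\le\tfrac{\be}{1+\be\oy(t)}$ combined with the monotonicity of $y\mapsto D(t,\tfrac{y}{1+\be y})$, which requires a short case split on whether $\oy(t)=1$. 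Your argument instead chains the two argmax inequalities directly, exponentiates, and observes that the cross terms $\be\ue\uy\oy$ cancel, leaving $(\be+\ue)(\uy-\oy)\le 0$; this avoids any appeal to differentiability or sign conditions on $v_x$, needs no case split when $\be+\ue>0$, and reduces the degenerate case $\be=\ue=0$ to the already-established uniqueness. It is a strictly more elementary and somewhat shorter route to the ordering, though it buys nothing beyond that since the derivative machinery is needed anyway for the FOC statements in (i) and (ii).
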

\begin{proof}

Let $t\in [0,T)$ be fixed. We consider the map $z\in [0,\tfrac{1}{1+\be}]\mapsto \tV(t,1-(1+\be)z,z)$, where $\tV$ is defined in \eqref{tV_def}. \thmref{verification} and \eqref{tV_V} imply that this map is differentiable (we denote the partial derivatives as $\tV_a$ and $\tV_b$), and the map is strictly concave due to Proposition \ref{value_concave}. Therefore, its derivative 
\begin{align}\label{D_def}
D(t,z):=-(1+\be)\tV_a (t,1-(1+\be)z,z)+\tV_b (t,1-(1+\be)z,z)
\end{align}
is strictly decreasing in $z\in(0,\tfrac{1}{1+\be})$, and there exists a unique $\underline z(t) \in [0,\tfrac{1}{1+\be}]$ such that
\begin{align}\label{z_ineq}
\begin{cases}
D(t,z)>0 &\textrm{for $z\in (0,\underline z(t))$},\\
D(t,z)<0 &\textrm{for $z\in (\underline z(t),\tfrac{1}{1+\be})$}.\\
\end{cases}
\end{align}
Obviously, in case $\underline z(t)\in (0,\tfrac{1}{1+\be})$, $\underline z(t)$ is the unique solution of $D(t,z)=0$.

By \thmref{verification} and \eqref{tV_V}, we observe that for $y\in [0,1]$,
\begin{align}
v(t,y)- \ln(1+\be y)=V\left(t,y,\tfrac{1}{1+\be y} \right)=\tV \left(t,\tfrac{1-y}{1+\be y},\tfrac{y}{1+\be y}\right). \nonumber
\end{align}
We take derivative with respect to $y$ above and use \eqref{D_def} to obtain
\begin{align}\label{v and D}
\tfrac{\partial}{\partial y} \Big( v(t,y)- \ln(1+\be y) \Big)=\tfrac{1}{(1+\be y)^2} \, D\left(t,\tfrac{y}{1+\be y}  \right).
\end{align}
Now we define $\uy(t):=\frac{\underline z(t)}{1-\be \underline z(t)} \in [0,1]$.  Since the map $y\mapsto \frac{y}{1+\be y}$ is strictly increasing on $[0,1]$, the definition of $\underline z(t)$ in \eqref{z_ineq} implies that 
\begin{align}\label{uy_ineq}
\begin{cases}
 D\left(t,\frac{y}{1+\be y}  \right)>0 &\textrm{for $y\in \left(0, \uy(t) \right)$},\\
 D\left(t,\frac{y}{1+\be y}  \right)<0 &\textrm{for $y\in \left(\uy(t),1 \right)$}.\\
\end{cases}
\end{align}
Also, in case $\uy(t)\in (0,1)$, $\uy(t)$ is the unique solution of $ D\left(t,\frac{y}{1+\be y}  \right)=0$. 
From \eqref{v and D} and \eqref{uy_ineq}, we conclude that statement (i) holds and
$$\big\{ \uy(t) \big\}=\argmax_{y\in [0,1]} \Big( v(t,y)- \ln(1+\be y) \Big).$$

By the same way, we conclude that $\max_{y\in [0,1]} \big( v(t,y)- \ln(1-\ue y) \big)$ has the unique maximizer denoted by $\oy(t)$ and statement (ii) holds.

It only remains to check the inequality $\uy(t)\leq \oy(t)$. If $\oy(t)=1$, then $\uy(t)\leq \oy(t)$ is obvious. If $\oy(t)<1$, then $v_x(t,\oy(t))\leq  \frac{-\ue}{1-\ue \oy(t)}\leq \frac{\be}{1+\be \oy(t)}$ by (ii). From \eqref{v and D} and \eqref{uy_ineq}, we obtain $\uy(t)\leq \oy(t)$.
\end{proof}

In the next theorem, we explicitly characterize the optimizer $\hat y$ in \eqref{hat y_def} in terms of $\uy$ and $\oy$ in \lemref{uyoy_lem}.

\begin{theorem}\label{optimal_strategy}
For $t\in [0,T)$, the $\argmax$ in \eqref{hat y_def} is a singleton, and $\hat y$ has the following expression:
\begin{align}\label{hat y_exp}
\hat y(t,x) =  
\begin{cases}
\uy(t), & \textrm{if   }  x\in \left[0,\uy(t)\right)\\
x, & \textrm{if   }  x\in \left[\uy(t),\oy(t)\right]\\
\oy(t), & \textrm{if   }  x\in \left(\oy(t),1\right]\\
\end{cases}
\end{align}
where $\uy(t)$ and $\oy(t)$ are uniquely determined in \lemref{uyoy_lem}. 
\end{theorem}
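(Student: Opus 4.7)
The plan is to fix $(t,x)\in [0,T)\times [0,1]$ and study the objective
\[
\Phi(y):= v(t,y)-\ln\!\left(\tfrac{1+\be y}{1+\be x}\right)1_{\{x<y\}}-\ln\!\left(\tfrac{1-\ue y}{1-\ue x}\right)1_{\{x>y\}}
\]
from \eqref{hat y_def} by partitioning $[0,1]$ into the three pieces $y>x$, $y=x$, $y<x$. On the upper piece the second indicator is zero and the constant $\ln(1+\be x)$ separates, so maximizing $\Phi$ there amounts to maximizing $g_1(y):=v(t,y)-\ln(1+\be y)$; similarly on the lower piece it amounts to maximizing $g_2(y):=v(t,y)-\ln(1-\ue y)$, and at $y=x$ one has $\Phi(x)=v(t,x)$. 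Because the indicator jumps at $y=x$ are exactly cancelled by the vanishing of the logarithm factors as $y\to x$, $\Phi$ is continuous on $[0,1]$; thus existence of a maximizer is automatic, and the remaining task is to locate it and establish uniqueness.

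Next I would carry out a case analysis on the position of $x$ relative to the two candidate boundaries $\uy(t)$ and $\oy(t)$ supplied by \lemref{uyoy_lem}, using in each case the strict monotonicity statements of that lemma together with $\uy(t)\le\oy(t)$. If $x<\uy(t)$, then $\uy(t)\in(x,1]$ is the unique maximizer of $g_1$, so $\Phi(\uy(t))=g_1(\uy(t))+\ln(1+\be x)>g_1(x)+\ln(1+\be x)=v(t,x)$; meanwhile on $[0,x)\subset [0,\oy(t))$ the function $g_2$ is strictly increasing and its supremum equals $v(t,x)$ only in the (unattained) limit, so $\hat y(t,x)=\uy(t)$ is the unique global maximizer. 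If $\uy(t)\le x\le\oy(t)$, then $g_1$ is strictly decreasing on $[x,1]\subset [\uy(t),1]$ and $g_2$ strictly increasing on $[0,x]\subset [0,\oy(t)]$, so on both open pieces the supremum $v(t,x)$ is approached only as $y\to x$ and is attained only at $y=x$; hence $\hat y(t,x)=x$ uniquely. The case $x>\oy(t)$ is symmetric to the first and gives $\hat y(t,x)=\oy(t)$.

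The main subtlety will be the endpoint bookkeeping: the suprema of $\Phi$ on the two open pieces $(x,1]$ and $[0,x)$ are generically realized only as $y\to x$, so to conclude uniqueness one must verify that this limiting value never ties the value attained on the \emph{other} piece. The strict monotonicity of $g_1$ and $g_2$ on the \emph{closed} intervals $[0,\uy(t)]$, $[\uy(t),1]$, $[0,\oy(t)]$, $[\oy(t),1]$ granted by \lemref{uyoy_lem}, together with the inequality $\uy(t)\le\oy(t)$, exactly rules this out and simultaneously absorbs the borderline subcases $x=\uy(t)$ and $x=\oy(t)$. Beyond this, no analytic input beyond \lemref{uyoy_lem} and the continuity of $\Phi$ is needed.
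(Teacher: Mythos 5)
Your argument is correct and follows essentially the same route as the paper: both reduce the objective on either side of $y=x$ to the auxiliary maps $y\mapsto v(t,y)-\ln(1+\be y)$ and $y\mapsto v(t,y)-\ln(1-\ue y)$ and then invoke the strict monotonicity statements of \lemref{uyoy_lem}. The only cosmetic difference is that the paper partitions $[0,1]$ into the two \emph{closed} intervals $[0,x]$ and $[x,1]$, on which the two expressions for the objective coincide with $v(t,x)$ at the shared endpoint $y=x$; since each piece is then compact, every supremum is attained and the ``endpoint bookkeeping'' you carefully flag simply never arises.
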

\begin{proof}
We may rewrite the maximization in \eqref{hat y_def} as
\begin{equation}
\begin{split}\label{max_split1}
&\max_{y\in [0,1]}  \left( v(t,y)-  \ln \left(\tfrac{1+\be y}{1+\be x} \right) 1_{\{x< y\}} -\ln \left(\tfrac{1-\ue y}{1-\ue x} \right) 1_{\{x> y\}}\right)\\
&=\max \left\{ \max_{y\in [0,x]}  \left( v(t,y)-\ln \left(\tfrac{1-\ue y}{1-\ue x} \right) \right), \max_{y\in [x,1]}  \left( v(t,y)-\ln \left(\tfrac{1+\be y}{1+\be x} \right) \right) \right\}.
 \end{split}
\end{equation}
Using \lemref{uyoy_lem}, we observe that
\begin{equation}
\begin{split}\label{max_split2}
\max_{y\in [0,x]}  \left( v(t,y)-\ln \left(\tfrac{1-\ue y}{1-\ue x} \right) \right) = 
\begin{cases}
v(t,x), & \textrm{if   } x\leq \oy(t)\\
v(t,\oy(t))-\ln \left(\frac{1-\ue \oy(t)}{1-\ue x} \right),& \textrm{if   } x> \oy(t)
\end{cases}\\
\max_{y\in [x,1]}  \left( v(t,y)-\ln \left(\tfrac{1+\be y}{1+\be x} \right) \right)=
\begin{cases}
v(t,\uy(t))-\ln \left(\frac{1+\be \uy(t)}{1+\be x} \right), & \textrm{if   } x< \uy(t)\\
 v(t,x),& \textrm{if   } x \geq   \uy(t)
\end{cases}
 \end{split}
\end{equation}
where the maximizers are unique. Combining \eqref{max_split1} and \eqref{max_split2}, we obtain
\begin{equation}
\begin{split}
&\max_{y\in [0,1]}  \left( v(t,y)-  \ln \left(\tfrac{1+\be y}{1+\be x} \right) 1_{\{x< y\}} -\ln \left(\tfrac{1-\ue y}{1-\ue x} \right) 1_{\{x> y\}}\right)\\
&=\begin{cases}
v(t,\uy(t))-\ln \left(\frac{1+\be \uy(t)}{1+\be x} \right), & \textrm{if   }  x\in \left[0,\uy(t)\right)\\
v(t,x), & \textrm{if   }  x\in \left[\uy(t),\oy(t)\right]\\
v(t,\oy(t))-\ln \left(\frac{1-\ue \oy(t)}{1-\ue x} \right), & \textrm{if   }  x\in \left(\oy(t),1\right]\\
\end{cases}\nonumber
 \end{split}
\end{equation}
and conclude that the corresponding unique maximizer is as in \eqref{hat y_exp}.
\end{proof}

\begin{figure}[t]
		\begin{center}$
			\begin{array}{cc}
			\includegraphics[width=0.45\textwidth]{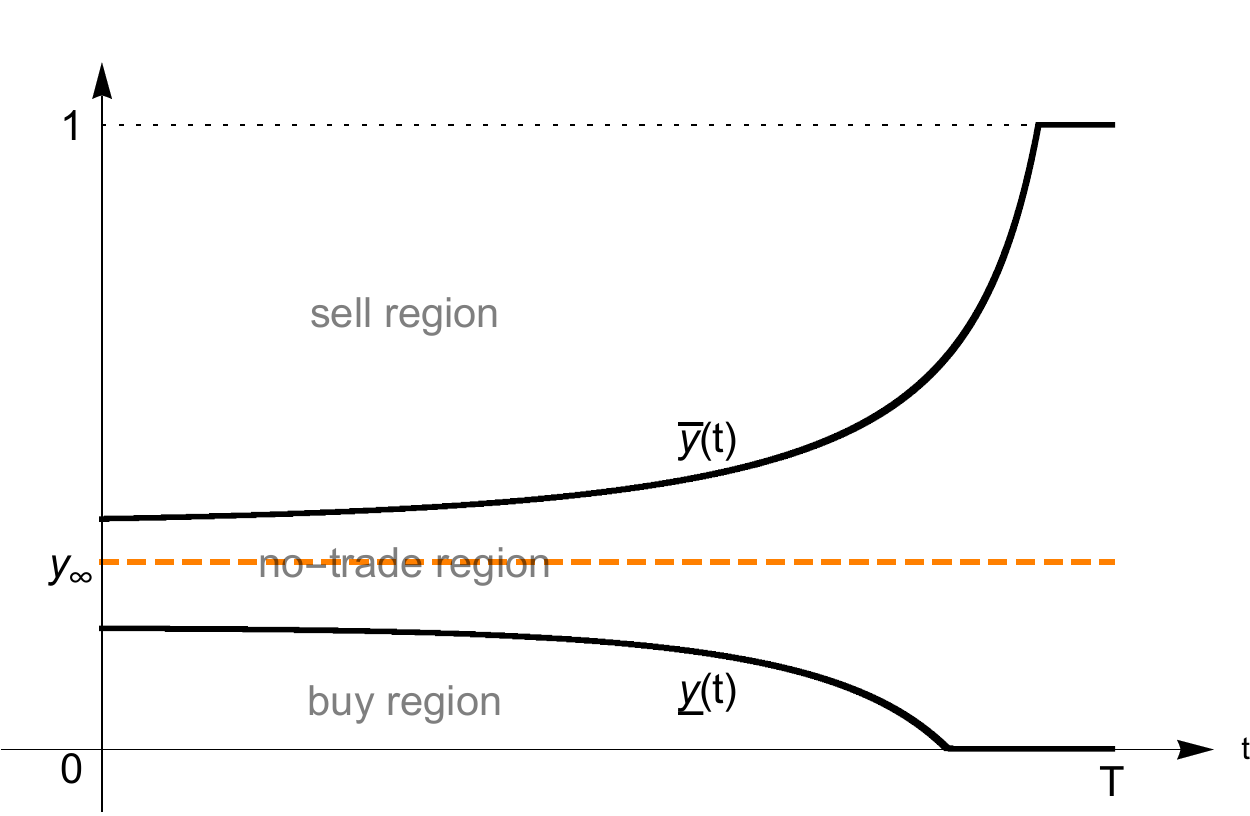} \,\,\, & \,\,\,
 			\includegraphics[width=0.45\textwidth]{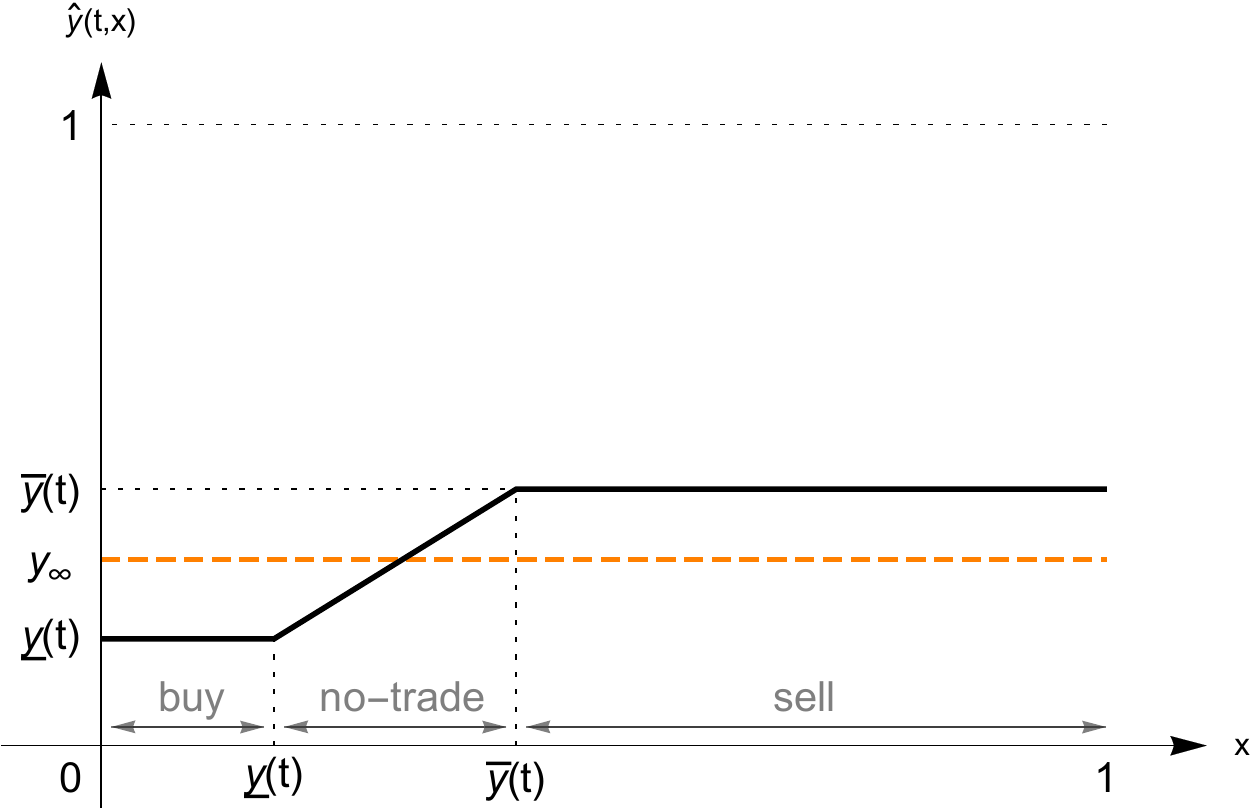} 		
			\end{array}$
	\end{center}
	 \caption{The left graph shows $\uy(t)$ and $\oy(t)$ as functions of $t$, and the right graph describes $\hat{y}(t,x)$ as a function of $x$ for fixed $t=0.5$. In both graphs, the dashed line is the Merton fraction $y_\infty=\frac{\mu - r}{\sigma^{2}}$. The parameters are $\mu=0.4, \, r=0.1, \sigma=1,\, \lambda=3,\, \ue=\be=0.05$, and $T=1$.
		}
		\label{NT}
\end{figure}

\medskip

\thmref{optimal_strategy} implies that the optimal trading strategy is characterized by the no-trade region: if the investor can trade at time $t\in [0,T)$, then the investor should minimally trade to keep the fraction of wealth invested in the stock inside of the interval $[\uy(t),\oy(t)]$. To be specific, if the fraction $X_t$ is less (more, resp.) than $\uy(t)$ ($\oy(t)$, resp.), then the investor should buy (sell, resp.) the stock and adjust the fraction to $\uy(t)$ ($\oy(t)$, resp.). If the fraction $X_t$ is inside of the interval $[\uy(t),\oy(t)]$, then the investor should not trade. Figure \ref{NT} illustrates the no-trade region and the optimal trading strategy. 

\medskip

If $\be=\ue=0$, then \lemref{uyoy_lem} implies that $\uy(t)=\oy(t)$, hence the no-trade region becomes a singleton. For $\be>0$, one may expect that the investor would not want to buy the stock at times close to the terminal time $T$ due to the transaction costs. If $\be=0$ and the Merton fraction $\frac{\mu - r}{\sigma^{2}}$ is greater than zero, then one may expect that the investor would want to hold strictly positive shares of the stock all the time. Our next task is to examine and prove this type of trading behaviors.

\medskip

For detailed analysis, we first provide stochastic representations of $v$ and $v_x$. We apply the Feynman-Kac formula (i.e., see Theorem 5.7.6 in \cite{karatzas2014brownian}) and the expression of the optimizer $\hat y$ in \thmref{optimal_strategy} to \lemref{PDE_existence}, and obtain the following representation for $v$:  
\begin{align}\label{v_FK}
v(t,x)=\int_t^T e^{-\lambda(s-t)} \EE \left[ (\mu-r)Y_s^{(t,x)}+r-\tfrac{1}{2}\sigma^2 \left(Y_s^{(t,x)} \right)^2+  \lambda \, L(s,Y_s^{(t,x)}) \right] ds,
\end{align}
where for $(s,x)\in [t,T)\times [0,1]$,
\begin{equation}
\begin{split}\label{YL_def}
Y_s^{(t,x)}&:=\frac{x \cdot \exp{\left( \left(\mu-r-\frac{1}{2}\sigma^2\right)(s-t)+\sigma(B_s-B_t)   \right)}}{x \cdot \exp{\left( \left(\mu-r-\frac{1}{2}\sigma^2\right)(s-t)+\sigma(B_s-B_t)   \right)}+(1-x) },\\
L(s,y)&:=  v(s,\hat y(s,y))-  \ln \left(\tfrac{1+\be\, \hat y(s,y)}{1+\be y} \right) 1_{\{y< \hat y(s,y)\}} -\ln \left(\tfrac{1-\ue \, \hat y(s,y)}{1-\ue y} \right) 1_{\{y> \hat y(s,y)\}}.
\end{split}
\end{equation} 

The representation of $v_x$ is given in the following lemma.

\begin{lemma}\label{vx_represent}
The function $L$ in \eqref{YL_def} is continuously differentiable with respect to $y$,
\begin{align}\label{Lx_exp}
L_y(t,x)= \begin{cases}
\frac{ \be }{1+\be x}, &x\in (0,\uy(t)]\\
 v_x(t,x), &x\in (\uy(t),\oy(t))\\
-\frac{ \ue }{1-\ue x}, & x\in [\oy(t),1)
\end{cases} \quad \textrm{for   } (t,x)\in [0,T)\times (0,1),
\end{align}
 and $v_x(t,x)$ has the following representation: for $(t,x)\in [0,T)\times (0,1)$,
\begin{align}\label{vx_exp}
v_x(t,x)=\int_t^T e^{-\lambda (s-t)} \EE \left[ \left(\tfrac{\partial}{\partial x} Y_s^{(t,x)} \right) \left( \mu-r-\sigma^2 Y_s^{(t,x)} +\lambda \, L_y(s, Y_s^{(t,x)} )  \right) \right] ds.
\end{align}
\end{lemma}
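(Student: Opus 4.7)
The plan has three steps: (a) use the explicit formula for $\hat y$ from Theorem~\ref{optimal_strategy} to write $L(s,\cdot)$ piecewise; (b) verify that the pieces glue into a $C^1$ function via the first-order conditions of Lemma~\ref{uyoy_lem}, yielding \eqref{Lx_exp}; (c) differentiate the Feynman--Kac representation \eqref{v_FK} under the integral sign to obtain \eqref{vx_exp}.

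For (a), substituting \eqref{hat y_exp} into the definition of $L$ in \eqref{YL_def} collapses the indicator functions and gives
\begin{displaymath}
L(s,y)=
\begin{cases}
v(s,\uy(s))+\ln\bigl(\tfrac{1+\be y}{1+\be\uy(s)}\bigr), & y\in[0,\uy(s)),\\
v(s,y), & y\in[\uy(s),\oy(s)],\\
v(s,\oy(s))+\ln\bigl(\tfrac{1-\ue y}{1-\ue\oy(s)}\bigr), & y\in(\oy(s),1],
\end{cases}
\end{displaymath}
so that differentiating each branch in $y$ produces the formula in \eqref{Lx_exp}. For (b), continuity of $L_y$ at the join points reduces, when $\uy(s),\oy(s)\in(0,1)$, to the matching identities $v_x(s,\uy(s))=\be/(1+\be\uy(s))$ and $v_x(s,\oy(s))=-\ue/(1-\ue\oy(s))$, both supplied by Lemma~\ref{uyoy_lem}(i)--(ii). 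When $\uy(s)$ or $\oy(s)$ hits an endpoint of $[0,1]$, the corresponding branch is empty and nothing needs to be checked.

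For (c), apply the chain rule to the integrand of \eqref{v_FK}: since $L(s,\cdot)\in C^1$ by step (b),
\begin{displaymath}
\tfrac{\partial}{\partial x}\bigl[(\mu-r)Y_s^{(t,x)}+r-\tfrac12\sigma^2(Y_s^{(t,x)})^2+\lambda L(s,Y_s^{(t,x)})\bigr]=\tfrac{\partial Y_s^{(t,x)}}{\partial x}\,\bigl[\mu-r-\sigma^2 Y_s^{(t,x)}+\lambda L_y(s,Y_s^{(t,x)})\bigr],
\end{displaymath}
whose expectation and $ds$-integral is exactly \eqref{vx_exp}. To swap differentiation past $\EE$ and the integral, invoke dominated convergence. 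The factor $\partial_x Y_s^{(t,x)}$ admits an explicit closed form in terms of $x$ and $B_s-B_t$, with difference quotients uniformly controlled on any compact subset of $(0,1)\ni x$. The bracket is bounded using $Y_s^{(t,x)}\in(0,1)$ and a uniform bound on $L_y$: the two outer branches in \eqref{Lx_exp} are bounded by $\be$ and $\ue/(1-\ue)$ respectively, while on the middle branch $v_x(s,\cdot)$ is monotonically decreasing on $[\uy(s),\oy(s)]$ and pinned at the endpoints by the same first-order conditions.

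The main technical obstacle is exactly this last uniform bound on $v_x$ inside the no-trade strip. Lemma~\ref{PDE_existence}(iii) only gives boundedness of $x(1-x)v_x$, which a priori permits blow-up at $x=0,1$; however, strict concavity of $v$ (Proposition~\ref{value_concave}) forces $v_x$ to be monotone decreasing, and Lemma~\ref{uyoy_lem} gives the explicit endpoint values (or one-sided bounds, in the degenerate cases $\uy(s)=0$ or $\oy(s)=1$). Together these pin $v_x$ on $[\uy(s),\oy(s)]$ between the finite quantities $-\ue/(1-\ue)$ and $\be$, uniformly in $s$, which closes the dominated-convergence step cleanly and completes the argument.
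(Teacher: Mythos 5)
Your proposal is correct and follows essentially the same route as the paper: rewrite $L$ piecewise via $\hat y$, glue to $C^1$ using the first-order conditions from Lemma~\ref{uyoy_lem}, and differentiate the Feynman--Kac formula \eqref{v_FK} under the integral by dominated convergence with uniform bounds on $L_y$ and $\partial_x Y_s^{(t,x)}$. Your closing discussion of why $v_x$ is pinned inside $[\uy(s),\oy(s)]$ merely makes explicit the one-line observation the paper records as the inequality $-\tfrac{\ue}{1-\ue x}<v_x(t,x)<\tfrac{\be}{1+\be x}$, which itself comes from the monotonicity statements in Lemma~\ref{uyoy_lem}.
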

\begin{proof}
Combining \eqref{hat y_exp} and \eqref{YL_def}, we rewrite $L$ as
\begin{align}\label{L_exp}
L(t,x)= \begin{cases}
  v(t,\uy(t))-  \ln \left(\frac{1+\be\, \uy(t)}{1+\be x} \right)  , &x\in (0,\uy(t)]\\
 v(t,x), &x\in (\uy(t),\oy(t))\\
  v(t,\oy(t))-  \ln \left(\frac{1-\ue\, \oy(t)}{1-\ue x} \right) , & x\in [\oy(t),1)
\end{cases}.
\end{align}
We take derivative with respect to $x$ above and obtain the expression \eqref{Lx_exp} for $x\in (0,1)\setminus \{\uy(t),\oy(t)\}$. If $0<\uy(t)<1$ (resp., $0<\oy(t)<1$), then $v_x(t,\uy(t))=\frac{\be }{1+\be \uy(t)}$ (resp., $v_x(t,\oy(t))=-\frac{\ue }{1-\ue \oy(t)}$). Therefore, we conclude that $L$ is continuously differentiable with respect to $y$ and \eqref{Lx_exp} is valid. 

Since $-\frac{\ue }{1-\ue x}< v_x(t,x)< \frac{\be }{1+\be x}$ for $x\in  (\uy(t),\oy(t))$, we observe that for $(t,x)\in [0,T)\times (0,1)$,
\begin{align}\label{Lx_bound}
 -\frac{ \ue }{1-\ue }   \leq L_y(t,x)   \leq  \be.
\end{align}
We also observe that for $(t,x)\in [0,T)\times (0,1)$,
\begin{equation}
\begin{split}\label{YD_bound}
e^{-\left| r-\mu+\frac{\sigma^2}{2}\right| (s-t)-\sigma \left| B_s-B_t \right|} \leq \tfrac{\partial}{\partial x} Y_s^{(t,x)}   \leq  e^{\left| r-\mu+\frac{\sigma^2}{2}\right| (s-t)+\sigma \left| B_s-B_t \right|}. 
\end{split}
\end{equation}
Now we take derivative with respect to $x$ in \eqref{v_FK}. The mean value theorem and the dominated convergence theorem, together with the inequalities \eqref{Lx_bound} and \eqref{YD_bound}, allow us to take derivative inside of the expectation. By \eqref{Lx_exp} and the chain rule, we obtain the representation \eqref{vx_exp}.
\end{proof}

Using these representations, we extract some properties about the boundaries $\uy(t)$ and $\oy(t)$ of the no-trade region.

\begin{proposition}\label{uyoy_prop}
{\bf (Properties of no-trade region)} \\
Let $y_{\infty} := \frac{\mu - r}{\sigma^{2}}$ denote the Merton fraction.  \\
(i) If $\be>0$ ($\ue>0$, resp.), then there exists $t_0\in [0,T)$ such that $\uy(t)=0$ ($\oy(t)=1$, resp.) for $t\in [t_0,T)$. \\
(ii) If $0<y_\infty$ ($y_\infty<1$, resp.), then $\oy(t)>0$ ($\uy(t)<1$, resp.) for $t\in [0,T)$.\\
(iii) If $0<y_\infty<1$ and at least one of $\be$ and $\ue$ is strictly positive, then $\uy(t)<\oy(t)$ for $t\in [0,T)$.\\
(iv) If $0<y_\infty$ and $\be=0$ ($y_\infty<1$ and $\ue=0$, resp.), then $\uy(t)>0$ ($\oy(t)<1$, resp.) for $t\in [0,T)$. 
\end{proposition}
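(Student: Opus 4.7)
The plan is to translate each condition in the proposition into an inequality on the boundary derivatives $u(t) := \lim_{x \to 0+} v_x(t, x)$ and $u_1(t) := \lim_{x \to 1-} v_x(t, x)$ (both limits exist by dominated convergence in \eqref{vx_exp}), then derive and exploit an ODE for $u$ and $u_1$. By \lemref{uyoy_lem} and strict concavity of $v$, the equivalences $\uy(t) = 0 \Leftrightarrow u(t) \leq \be$, $\oy(t) = 0 \Leftrightarrow u(t) \leq -\ue$, $\uy(t) = 1 \Leftrightarrow u_1(t) \geq \be/(1+\be)$, and $\oy(t) = 1 \Leftrightarrow u_1(t) \geq -\ue/(1-\ue)$ hold. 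Substituting $x = 0$ into \eqref{vx_exp} and using that $L_y(s, 0)$ depends only on $s$, the representation collapses to
\[
u(t) = \int_t^T e^{(\mu - r - \lambda)(s - t)} \big(\mu - r + \lambda L_y(s, 0)\big) \, ds,
\]
with an analogous formula for $u_1$. Continuity of $s \mapsto L_y(s, 0)$ needs to be checked: the candidate values $\be$, $u(s)$, $-\ue$ in \eqref{Lx_exp} agree at regime transitions (which occur exactly when $u(s)$ crosses $\be$ or $-\ue$), so continuity holds. Differentiating then yields the ODE
\[
u'(t) = (\lambda - \mu + r) u(t) - (\mu - r) - \lambda L_y(t, 0), \quad u(T) = 0,
\]
and a symmetric ODE for $u_1$ driven by $\mu - r - \sigma^2$ in place of $\mu - r$.

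Part (i) is immediate from $u(T) = 0 < \be$ and continuity: $u(t) < \be$ on some $[t_0, T]$, hence $\uy(t) = 0$ there, and the $\oy(t) = 1$ case is symmetric. For part (ii), I argue by contradiction. If $\oy(\tau_0) = 0$ for some $\tau_0 < T$, set $\tau := \sup\{t \in [\tau_0, T] : u(t) \leq -\ue\}$; continuity of $u$ and $u(T) = 0 > -\ue$ force $\tau < T$, $u(\tau) = -\ue$, and $u(t) > -\ue$ on $(\tau, T]$. At $\tau$, $\oy(\tau) = 0$ so $L_y(\tau, 0) = -\ue$, and the ODE simplifies to $u'(\tau) = (\mu - r)(\ue - 1) < 0$ under $y_\infty > 0$ (using $\ue < 1$), contradicting $u(\tau + h) > u(\tau)$ for small $h > 0$. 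The symmetric argument at $x = 1$ yields $u_1'(\tau) = (\sigma^2 - \mu + r)/(1 + \be) > 0$ under $y_\infty < 1$, the same contradiction pattern.

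Part (iii) is a case analysis on the common value $y^* := \uy(t) = \oy(t)$. For $y^* \in (0, 1)$, \lemref{uyoy_lem} forces $\be/(1 + \be y^*) = -\ue/(1 - \ue y^*)$, which simplifies algebraically to $\be + \ue = 0$, contradicting the hypothesis. For $y^* \in \{0, 1\}$, either $\oy(t) = 0$ or $\uy(t) = 1$, both excluded by part (ii) under $0 < y_\infty < 1$. Part (iv) reruns the template of part (ii) with threshold $0$ replacing $-\ue$: under $\be = 0$, at a critical $\tau$ with $u(\tau) = 0$ the two adjacent regimes both give $L_y(\tau, 0) = 0$ (in case A, $L_y = \be = 0$; in case B, $L_y = u(\tau) = 0$), so the ODE yields $u'(\tau) = -(\mu - r) < 0$ under $y_\infty > 0$, contradicting $u > 0$ on $(\tau, T)$; the symmetric argument handles $\oy(t) < 1$.

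The principal obstacle is the regime book-keeping at the critical time $\tau$: continuity of $s \mapsto L_y(s, 0)$ across regime boundaries is needed to validate the ODE, and the correct identification of $L_y(\tau, 0)$ from the regime at $\tau$ is needed to pin down the sign of $u'(\tau)$. The symmetric argument at $x = 1$ is where the hypothesis $y_\infty < 1 \Leftrightarrow \mu - r < \sigma^2$ enters to flip the sign and produce $u_1'(\tau) > 0$ instead.
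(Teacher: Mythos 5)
Your approach is genuinely different from the paper's. You pass to the limit $x\to 0^+$ (resp. $x\to 1^-$) in \eqref{vx_exp} to get a closed integral formula for $u(t):=\lim_{x\downarrow 0}v_x(t,x)$ (resp. $u_1$), convert it to a first-order ODE, and run a threshold-crossing argument at the critical time. The paper instead works directly with the integral representation and produces explicit quantitative lower bounds (as in \eqref{vx_lim_0} and \eqref{vx_lim_0_simple}), then in (iv) runs a $t^*$ argument that \emph{does} include a separate ``near $T$'' step. Your ODE bookkeeping (the equivalences $\uy=0\Leftrightarrow u\leq\be$, etc., the continuity of $s\mapsto L_y(s,0)=\min(\max(u(s),-\ue),\be)$, and the derivative computations at the critical $\tau$) is correct, and parts (i) and (iii) are clean.

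However, parts (ii) and (iv) share a genuine gap tied to the terminal boundary. In part (ii) you set $\tau:=\sup\{t\in[\tau_0,T]:u(t)\leq-\ue\}$ and conclude $\tau<T$ from $u(T)=0>-\ue$; this works only when $\ue>0$, but Proposition (ii) does \emph{not} assume $\ue>0$ (nor does the $\uy<1$ case assume $\be>0$). When $\ue=0$ the threshold coincides with $u(T)=0$, the supremum is $T$, and the argument never starts. The same difficulty is present \emph{always} in part (iv), where the threshold is $0=u(T)$ by construction: you assert a ``critical $\tau$ with $u(\tau)=0$ and $u>0$ on $(\tau,T)$'' without establishing that such a $\tau$ exists or that $u>0$ anywhere near $T$. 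The missing step is to observe that $L_y(t,0)\to 0$ as $t\to T^-$, hence $u'(T^-)=-(\mu-r)<0$ and therefore $u>0$ on some $[T',T)$; then one restricts the supremum to $[\tau_0,T')$ and the contradiction closes as you describe. (Equivalently: near $T$ the integrand $\mu-r+\lambda L_y(s,0)$ in your formula is bounded below by $(\mu-r)/2>0$.) Once that is inserted, both (ii) with $\ue=0$ or $\be=0$ and (iv) go through; without it, the proof is incomplete.
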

\begin{proof}
By \eqref{YD_bound}, we have
\begin{align}\label{EY_bound}
0<\EE \left[ \tfrac{\partial}{\partial x} Y_s^{(t,x)} \right]\leq 2 e^{(|r-\mu|+\sigma^2)(s-t)},
\end{align}
and we apply this inequality and \eqref{Lx_bound} to the expression \eqref{vx_exp} to obtain 
\begin{align}\label{vx_bounds}
\underline c \, (T-t) \leq v_x(t,x)\leq  \bar c \, (T-t)  \quad \textrm{for   }x\in (0,1),
\end{align}
where $\underline c$ and $\bar c$ are constants that only depend on $\mu,r,\sigma, \lambda,\be, \ue, T$. 

(i) Suppose that $\be>0$ (the case of $\ue>0$ can be treated similarly). If $\bar c \leq 0$, then we choose $t_0=0$ and easily observe that
\begin{align}\label{uy=0}
\lim_{x\downarrow 0} \tfrac{\partial}{\partial x} \Big( v(t,x)- \ln(1+\be x) \Big) \leq 0 \quad \textrm{for  }t\in [t_0,T).
\end{align}
If $\bar c>0$, then \eqref{vx_bounds} implies that \eqref{uy=0} holds with $t_0=\left(T- \frac{\be}{\bar c }\right)^+$. Therefore, in any case, we can choose $t_0\in [0,T)$ that satisfies \eqref{uy=0}. By \lemref{uyoy_lem} and \eqref{uy=0}, we conclude $\uy(t)=0$ for $t\in [t_0,T)$. 


\medskip

(ii) Suppose that  $0<y_\infty$ (the case of $y_\infty<1$ can be treated similarly). Let $t\in [0,T)$ be fixed. To check $\oy(t)>0$, we observe from \eqref{vx_exp} and the dominated convergence theorem that
\begin{equation}
\begin{split}\label{vx_lim_0}
\lim_{x\downarrow 0}  v_x(t,x) &= \int_t^T e^{-\lambda (s-t)} \EE \left[\lim_{x\downarrow 0} \left(\tfrac{\partial}{\partial x} Y_s^{(t,x)} \right) \left( \mu-r-\sigma^2 Y_s^{(t,x)} + \lambda L_y(s, Y_s^{(t,x)} )  \right) \right] ds\\
&=\int_t^T e^{-\lambda (s-t)} \EE \left[    e^{(\mu-r-\tfrac{1}{2}\sigma^2)(s-t)+\sigma (B_s-B_t)} \right] \\
&\qquad \cdot \left( \mu-r+ \lambda \Big(\be \cdot 1_{\left\{ \uy(s)>0 \right\}}+\lim_{x\downarrow 0} v_x(s,x) \cdot 1_{\left\{ \uy(s)=0<\oy(s) \right\}}-\ue \cdot 1_{\left\{ \oy(s)=0 \right\}}\Big) \right)   ds\\
&\geq \int_t^T e^{(\mu-r-\lambda) (s-t)} \left( \mu-r - \lambda \ue \right) ds,
\end{split}
\end{equation}
where the second equality is from \eqref{Lx_exp} and \eqref{YD_bound}, and the inequality is due to the fact that $-\frac{\ue }{1-\ue x}< v_x(t,x)$ for $x<\oy(t)$. 
Obviously, the last integral in  \eqref{vx_lim_0} is nonnegative if $\mu-r-\lambda \ue\geq 0$. If  $\mu-r-\lambda \ue< 0$, then $\mu-r-\lambda < 0$ due to $\ue\in [0,1)$ and we observe that
$$
 \int_t^T e^{(\mu-r-\lambda) (s-t)} \left( \mu-r - \lambda \ue \right) ds =
 \frac{\mu-r-\lambda \ue}{\mu-r-\lambda}\left( e^{(\mu-r-\lambda)(T-t)}-1\right) > - \ue,
$$
where we use $y_\infty>0$ for the inequality. The above inequality and \eqref{vx_lim_0} imply 
$$
\lim_{x\downarrow 0} \tfrac{\partial}{\partial x} \Big( v(t,x)- \ln(1-\ue x) \Big) > 0 ,
$$
and we conclude $\oy(t)>0$ by \lemref{uyoy_lem}.


\medskip

(iii) Suppose that  $0<y_\infty<1$ and $\be>0$ (the case of $\ue>0$ can be treated similarly). According to (ii), $\uy(t)<1$ and $\oy(t)>0$. Therefore, there are only two possibilities: $\uy(t)=0$ or $0<\uy(t)<1$. In case $\uy(t)=0$, we immediately obtain $\uy(t)<\oy(t)$ since $\oy(t)>0$. In case $0<\uy(t)<1$, by \lemref{uyoy_lem}, we have 
$$
v_x(t,\uy(t))=\frac{\be}{1+\be\, \uy(t)}>\frac{-\ue}{1-\ue\, \uy(t)},
$$
and this inequality, together with \eqref{v and D} and \eqref{uy_ineq}, implies $\uy(t)< \oy(t)$.

\medskip

(iv) Suppose that $0<y_\infty$ and $\be=0$ (the case of $y_\infty<1$ and $\ue=0$ can be treated similarly). \lemref{uyoy_lem} implies that 
\begin{align}\label{uy0_iff}
\uy(t)>0 \quad \textrm{if and only if} \quad \lim_{x\downarrow 0} v_x(t,x)>0.
\end{align}
The second equality in \eqref{vx_lim_0} can be written as
\begin{align}\label{vx_lim_0_simple}
\lim_{x\downarrow 0}  v_x(t,x) &= \int_t^T e^{(\mu-r-\lambda) (s-t)} \left( \mu-r+ \lambda \lim_{x\downarrow 0} v_x(s,x) \cdot 1_{\left\{ \uy(s)=0<\oy(s) \right\}} \right)   ds,
\end{align}
where we substitute $\be=0$ and use $\oy(s)>0$ by (ii). Define $t^*$ as
\begin{align}\label{t*_def}
t^*:=\inf\left\{ t\in [0,T): \,\, \lim_{x\downarrow 0 } v_x(s,x) > 0 \textrm{  for all  }s\in [t,T) \right \}.
\end{align}
Due to \eqref{vx_bounds} and the strict positivity of $\mu-r$, for $t$ close enough to $T$, the integrand in \eqref{vx_lim_0_simple} is strictly positive and $\lim_{x\downarrow 0}  v_x(t,x)>0$.  Therefore, the set in \eqref{t*_def} is non-empty and $t^*<T$. Suppose that $t^*>0$. For $0\leq \delta\leq t^*$, we apply \eqref{t*_def} and \eqref{uy0_iff} to expression \eqref{vx_lim_0_simple} and obtain
\begin{align}
\lim_{x\downarrow 0}  v_x(t^*-\delta,x) &> \int_{t^*}^T e^{(\mu-r-\lambda) (s-t^*+\delta)} \left( \mu-r \right)   ds \nonumber\\
&\quad + \int_{t^*-\delta}^{t^*} e^{(\mu-r-\lambda) (s-t^*+\delta)} \left( \mu-r+ \lambda \lim_{x\downarrow 0} v_x(s,x) \cdot 1_{\left\{ \uy(s)=0<\oy(s) \right\}} \right)ds. \nonumber
\end{align}
The first integral above is greater than a strictly positive number independent of $\delta$, and the second integral can be made arbitrary close to zero as $\delta \downarrow 0$, due to \eqref{vx_bounds}. Therefore, there exists $\delta^*\in (0,t^*]$ such that $\lim_{x\downarrow 0}  v_x(t^*-\delta,x) >0$ for all $\delta \in [0,\delta^*]$. Then, $\lim_{x\downarrow 0}  v_x(s,x) >0$ for all $s\in [t^*-\delta^*,T)$, and this contradicts to the definition of $t^*$ in \eqref{t*_def}. Therefore, we conclude that $t^*=0$, and now \eqref{uy0_iff} implies that $\uy(t)>0$ for $t\in (0,T)$. Lastly, $\uy(0)>0$ is obtained by \eqref{uy0_iff} and \eqref{vx_lim_0_simple}.
\end{proof}

\begin{remark}
Straightforward interpretations of \proref{uyoy_prop} are as follows: \\
(i) The existence of the transaction costs for selling (buying, resp.) the stock makes the investor not to sell (buy, resp.) the stock when it is close to the terminal time. For short period of time, the benefit of rebalancing is small.\\
(ii) If $0<y_\infty$ ($y_\infty<1$, resp), then the investor never rebalances to the zero-holding of the stock (bond, resp.). However, if the initial holding of the stock (bond, resp.) is zero, then the investor may not try to leave the state of zero-holding of the stock (bond, resp.), depending on the size of the transaction costs.\\
(iii) If $0<y_\infty<1$ and there exist transaction costs, then the no-trade interval is non-trivial (with strictly positive length) all the time. \\
(iv) If $0<y_\infty$ ($y_\infty<1$, resp.) and there is no cost for buying (selling, resp.) the stock, then just holding the bond (stock, resp.) and setting zero balance in the stock (bond, resp.) is suboptimal, even with the search frictions and transaction costs.
\end{remark}

\begin{remark}\label{merton_out}
Obviously, if $y_\infty\leq 0$ ($y_\infty\geq 1$, resp.), then there is no reason for buying (selling, resp.) the stock, so $\uy(t)=0$ ($\oy(t)=1$, resp.) for $t\in [0,T)$.
\end{remark}

\section{Asymptotic analysis}

In this section, we provide asymptotic analysis for small transaction costs. To be specific, we focus on the first order approximation of the no-trade region and the value function with respect to the transaction cost parameter around zero. To consider non-trivial cases (see \remref{merton_out}), we assume that the Merton fraction $y_\infty$ is between zero and one, and we set $\be=\ue$ for convenience.

\begin{assumption}\label{assumption}
In this section, we assume that $0 < y_{\infty} < 1$ and $\be = \ue = \epsilon$ for $\epsilon \in [0,1)$.
\end{assumption}

\begin{notation}{\bf (Current section only)}\\
(i) To emphasize their dependence on the transaction cost parameter $\epsilon$, we denote $v,\uy,\oy,\hat y,L,L_y$ by $v^\epsilon,\uy^\epsilon,\oy^\epsilon,\hat y^\epsilon, L^\epsilon, L_y^\epsilon$. In particular, when $\epsilon=0$, they are denoted by $v^0,\uy^0,\oy^0,\hat y^0,L^0,L_y^0$. \\
(ii) Under \assref{assumption}, \lemref{uyoy_lem} and \proref{uyoy_prop} imply that
\begin{align}\label{y0}
0<\hat y^0(t,x)=\uy^0(t)=\oy^0(t)<1 \quad \textrm{for} \quad (t,x)\in [0,T)\times [0,1].
\end{align}
In words, $\hat y^0(t,x)$ is independent of variable $x$ (just a function of $t$) and its value equals $\uy^0(t)$ and $\oy^0(t)$. For convenience, we abuse notation and write $\hat y^0(t)$ for $\hat y^0(t,x)$ (i.e., $\hat y^0(t)=\uy^0(t)=\oy^0(t)$).
\end{notation}

We start with the technical lemma that is used in the proof of the asymptotic result.

\begin{lemma}\label{tech_asymptotitcs}
(i) Let $\epsilon=0$. For $t\in [0,T)$, 
\begin{align}
&\quad v_x^0(t,\hat y^0(t))=0, \label{vx0_bound}\\
&\sup_{x\in (0,1)}  v_{xx}^0(t,x) <0. \label{gxx>0_all}
\end{align}
(ii) Let $F:[0,T)\times (0,1) \to \R$ be defined as
\begin{align}\label{F_def}
F(t,x):=\lambda \int_t^T e^{-\lambda(s-t)} \EE \left[ \left(\tfrac{\partial}{\partial x} Y_s^{(t,x)} \right) \cdot \sgn\left( \hat y^0(s)-Y_s^{(t,x)} \right)  \right] ds.
\end{align}
Then, for $t\in [0,T)$, 
\begin{align}\label{F_ineq}
-1<F(t,\hat y^0(t))<1 .
\end{align}
(iii) Suppose that $x_0\in (0,1)$ and  $\,\,\lim_{\epsilon \downarrow 0} x_\epsilon=x_0$. Then, for $t\in [0,T)$, 
\begin{align}
\lim_{\epsilon \downarrow 0} v_{x}^\epsilon (t,x_\epsilon) &= v_{x}^0 (t, x_0), \label{vx_uni_conv}\\
\lim_{\epsilon \downarrow 0} v_{xx}^\epsilon (t,x_\epsilon) &= v_{xx}^0 (t, x_0). \label{vxx_uni_conv}
\end{align}
\end{lemma}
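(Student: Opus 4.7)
My plan is to treat the three parts in turn, with the integral representations from \lemref{vx_represent} as the primary tool.

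For part (i), the first identity will be immediate: under \assref{assumption}, \proref{uyoy_prop}(ii),(iv) give $\hat y^0(t) = \uy^0(t) = \oy^0(t) \in (0,1)$ as an interior maximizer, and then \lemref{uyoy_lem}(i) with $\be = 0$ yields $v_x^0(t, \hat y^0(t)) = 0$ directly. For the uniform strict concavity bound, the key simplification is that $L_y^0 \equiv 0$ when $\epsilon = 0$ (the middle interval in \eqref{Lx_exp} is empty because $\uy^0 = \oy^0$, and the boundary expressions $\pm \epsilon/(1\pm \epsilon x)$ vanish), so \eqref{vx_exp} collapses to
\begin{equation*}
v_x^0(t,x) = \int_t^T e^{-\lambda(s-t)} \EE\bigl[\partial_x Y_s^{(t,x)} (\mu - r - \sigma^2 Y_s^{(t,x)})\bigr] ds.
\end{equation*}
I would then differentiate once more under the expectation, producing a representation for $v_{xx}^0$ that contains the strictly negative contribution $-\sigma^2(\partial_x Y_s^{(t,x)})^2$. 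Using the explicit identity $\partial_x Y_s^{(t,x)} = Y_s(1 - Y_s)/(x(1-x))$ together with a pointwise sign analysis of the remaining $\partial_{xx} Y_s$ term, I would obtain $v_{xx}^0(t,x) < 0$ on $(0,1)$. The uniform strict negativity then follows because $v_{xx}^0(t,\cdot)$ is continuous on $(0,1)$ and diverges to $-\infty$ at the endpoints (the factor $1/(x(1-x))^2$ in $(\partial_x Y_s)^2$ dominates), so the supremum is attained on a compact subinterval.

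For part (ii), I would first use $\partial_x Y_s \geq 0$ to decompose
\begin{equation*}
F(t, \hat y^0(t)) = \lambda \int_t^T e^{-\lambda(s-t)} \bigl(\EE[\partial_x Y_s \cdot 1_{\{Y_s < \hat y^0(s)\}}] - \EE[\partial_x Y_s \cdot 1_{\{Y_s > \hat y^0(s)\}}]\bigr) ds.
\end{equation*}
For every $s > t$, $Y_s^{(t, \hat y^0(t))}$ is a nondegenerate continuous random variable whose support straddles $\hat y^0(s)$, so both expectations above are strictly positive, and cancellation yields the strict bound $|F(t, \hat y^0(t))| < \lambda \int_t^T e^{-\lambda(s-t)} \EE[\partial_x Y_s^{(t, \hat y^0(t))}] ds$. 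The main obstacle will be upgrading this to $|F| < 1$. My strategy is to exploit the distributional identity $\EE[\partial_x Y_s^{(t,x)} \sgn(c - Y_s^{(t,x)})] = -\partial_x \EE[|Y_s^{(t,x)} - c|]$ (a consequence of differentiating $|\cdot|$ under the expectation, valid since $\{Y_s = c\}$ has zero probability), which rewrites $F(t,x) = -\lambda \int_t^T e^{-\lambda(s-t)} \partial_x \EE[|Y_s^{(t,x)} - \hat y^0(s)|] ds$. Since $\EE[|Y_s^{(t,x)} - \hat y^0(s)|]$ takes values in $[0,1]$ uniformly, careful control of the $x$-derivative combined with the exponential weight $\lambda e^{-\lambda(s-t)}$ (whose integral over $[t,T]$ is $1 - e^{-\lambda(T-t)} < 1$) should yield the sharp bound.

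For part (iii), I would apply dominated convergence to \eqref{vx_exp}. The three ingredients are: the uniform bound \eqref{Lx_bound} on $|L_y^\epsilon|$, which is of order $\epsilon$ and vanishes in the limit; pointwise convergence $L_y^\epsilon(s,y) \to L_y^0(s,y) = 0$, which follows from $\uy^\epsilon(s), \oy^\epsilon(s) \to \hat y^0(s)$ via stability of the $\argmax$ in \lemref{uyoy_lem} under the strict concavity of \proref{value_concave}; and joint continuity of $\partial_x Y_s^{(t,x)}$ and $Y_s^{(t,x)}$ in $x$. Combining these with $x_\epsilon \to x_0 \in (0,1)$ yields \eqref{vx_uni_conv}. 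For \eqref{vxx_uni_conv}, I would differentiate \eqref{vx_exp} once more in $x$; the uniform negativity of $v_{xx}^0$ from part (i) provides the necessary control of the resulting integrand (in particular near the collapsing boundaries $\uy^\epsilon, \oy^\epsilon$ where $L_y^\epsilon$ has discontinuities), and the same dominated-convergence argument carries through.
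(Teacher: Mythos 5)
Your first identity in part~(i) is fine, but the rest of the outline has concrete gaps in each of the remaining pieces.

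\textbf{Part (i), uniform negativity of $v_{xx}^0$.} Differentiating \eqref{vx_exp} with $\epsilon=0$ gives
\begin{equation*}
v_{xx}^0(t,x)=\int_t^T e^{-\lambda(s-t)}\,\EE\Bigl[\bigl(\tfrac{\partial^2}{\partial x^2}Y_s^{(t,x)}\bigr)\bigl(\mu-r-\sigma^2 Y_s^{(t,x)}\bigr)-\sigma^2\bigl(\tfrac{\partial}{\partial x}Y_s^{(t,x)}\bigr)^2\Bigr]\,ds,
\end{equation*}
and the first summand does \emph{not} have a definite sign: $\partial_{xx}Y_s^{(t,x)}$ changes sign with $A^{(t,s)}-1$, while $\mu-r-\sigma^2Y_s$ can be of either sign as well, so a ``pointwise sign analysis'' will not close the argument. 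Moreover, the claim that $v_{xx}^0(t,\cdot)\to-\infty$ at the endpoints is false: by \eqref{yyy_bound} the quantity $(Y_s^{(t,x)}-x)/(x(1-x))$ is \emph{bounded} uniformly in $x$, and the paper's own computation shows the limits $\lim_{x\downarrow 0}v_{xx}^0$ and $\lim_{x\uparrow 1}v_{xx}^0$ are finite strictly negative constants. The mechanism that makes the argument work is the identity \eqref{derivative_change}, which trades the $\partial_x$ of the drift integrand for a $\partial_s$ of the variance $\EE[(Y_s-x)^2/(x(1-x))^2]$; integration by parts in $s$ then rewrites $v_{xx}^0$ in the manifestly negative form \eqref{vxx0_exp}, and the endpoint limits are computed from \eqref{yyy_bound}. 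Without that identity, your route does not produce the sign.

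\textbf{Part (ii).} You correctly reduce to bounding $\lambda\int_t^T e^{-\lambda(s-t)}\EE[\partial_xY_s^{(t,x)}]\,ds$ (the natural envelope of $|F|$), and you correctly note this is the main obstacle. But you do not actually solve it, and the suggested route is not sound: $\EE[|Y_s^{(t,x)}-\hat y^0(s)|]\in[0,1]$ does not control its $x$-derivative, and $\EE[\partial_xY_s]$ is \emph{not} bounded by $1$ (see \eqref{EY_bound}, which only gives $2e^{(|r-\mu|+\sigma^2)(s-t)}$). The paper closes the gap by a second integration-by-parts identity \eqref{F_conversion}, which expresses $\lambda\int_t^T e^{-\lambda(s-t)}\EE[\partial_xY_s]\,ds$ exactly in terms of $1$, $e^{-\lambda(T-t)}\EE[\partial_xY_T]$, and the terms $x(1-x)v_{xx}^0$, $(1-2x)v_x^0$; evaluated at $x=\hat y^0(t)$ the last two contribute $\leq 0$ because $v_x^0(t,\hat y^0(t))=0$ and $v_{xx}^0<0$. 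Your outline has no substitute for this step.

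\textbf{Part (iii).} For \eqref{vx_uni_conv} your DCT argument is essentially the paper's; the uniform $O(\epsilon)$ bound in \eqref{Lx_bound} already kills the $\lambda L_y^\epsilon$ term without any pointwise statement about $L_y^\epsilon$ (appealing to $\uy^\epsilon,\oy^\epsilon\to\hat y^0$ here also risks circularity, since that convergence is established \emph{after} this lemma via Theorem~\ref{asymptotic_y}). For \eqref{vxx_uni_conv}, ``differentiate \eqref{vx_exp} once more in $x$'' is not straightforward, because $L_{yy}^\epsilon$ has jump discontinuities at $\uy^\epsilon(s)$ and $\oy^\epsilon(s)$, so the naive term $\EE[(\partial_xY_s)^2 L_{yy}^\epsilon(s,Y_s)]$ is not well controlled; and the uniform negativity of $v_{xx}^0$ from part~(i) does not help here. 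The paper sidesteps this by rewriting the expectation against the explicit transition density $\varphi(s,z;t,x)$ of $Y_s^{(t,x)}$, putting the $x$-derivative on the density (which is smooth), and establishing the integrability bound \eqref{nasty1}; this gives a uniform $O(\epsilon)$ estimate for $|v_{xx}^\epsilon-v_{xx}^0|$. Your outline would need to incorporate that density-based step to be rigorous.
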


\begin{proof}
See Appendix.
\end{proof}

The following theorem provides the first order approximation of the no-trade boundaries.

\begin{theorem}\label{asymptotic_y}
For $t\in [0,T)$, 
\begin{align}
\uy^{\epsilon} (t) = \hat y^0(t) - \frac{F(t,\hat y^0(t))-1 }{v_{xx}^0(t,\hat y^0(t))} \cdot \epsilon + o(\epsilon),\label{uy_asymp}\\
\oy^{\epsilon} (t) =\hat y^0(t)  - \frac{F(t,\hat y^0(t))+1 }{v_{xx}^0(t,\hat y^0(t))} \cdot \epsilon + o(\epsilon),\label{oy_asymp}
\end{align}
where $F$ is defined in \eqref{F_def}. In particular, for small enough $\epsilon>0$, we have 
\begin{align}\label{y_orders}
0<\uy^{\epsilon} (t) <\hat y^0(t)<\oy^{\epsilon} (t)<1.
\end{align}
\end{theorem}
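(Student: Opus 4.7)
The plan is to combine the first-order characterization of $\uy^\epsilon(t),\oy^\epsilon(t)$ from \lemref{uyoy_lem} with a first-order Taylor expansion of $v_x^\epsilon$ in $x$ around $\hat y^0(t)$, assembling the ingredients supplied by \lemref{tech_asymptotitcs}. As a preliminary step, I would argue that for all sufficiently small $\epsilon>0$, both $\uy^\epsilon(t)$ and $\oy^\epsilon(t)$ lie in $(0,1)$ and in fact converge to $\hat y^0(t)\in(0,1)$ as $\epsilon\downarrow 0$; this comes from \lemref{tech_asymptotitcs}(iii), together with the fact that $\hat y^0(t)$ is the unique zero of the strictly decreasing function $v_x^0(t,\cdot)$ (strict concavity of $v^0$ from \proref{value_concave} and \eqref{vx0_bound}). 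Hence, by \lemref{uyoy_lem}, the first-order conditions
\begin{equation}\label{FOC_plan}
v_x^\epsilon\!\left(t,\uy^\epsilon(t)\right) = \tfrac{\epsilon}{1+\epsilon \uy^\epsilon(t)}, \qquad v_x^\epsilon\!\left(t,\oy^\epsilon(t)\right) = -\tfrac{\epsilon}{1-\epsilon \oy^\epsilon(t)}
\end{equation}
hold.

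Next I would compute $v_x^\epsilon(t,\hat y^0(t))$ to leading order. With $\epsilon=0$, \eqref{y0} collapses the no-trade region to $\{\hat y^0(s)\}$, so \eqref{L_exp} gives $L^0(s,\cdot)\equiv v^0(s,\hat y^0(s))$ and therefore $L_y^0\equiv 0$. Using the representation \eqref{vx_exp} for both $v_x^\epsilon$ and $v_x^0$ together with \eqref{vx0_bound},
\[
v_x^\epsilon(t,\hat y^0(t)) = \lambda \int_t^T e^{-\lambda(s-t)}\EE\!\left[\tfrac{\partial}{\partial x}Y_s^{(t,\hat y^0(t))}\cdot L_y^\epsilon\!\left(s, Y_s^{(t,\hat y^0(t))}\right)\right]ds.
\]
Since $[\uy^\epsilon(s),\oy^\epsilon(s)]$ shrinks to $\{\hat y^0(s)\}$ and $Y_s^{(t,\hat y^0(t))}$ has a continuous distribution, \eqref{Lx_exp} yields the pointwise limit $\epsilon^{-1}L_y^\epsilon\big(s,Y_s^{(t,\hat y^0(t))}\big)\to \sgn\big(\hat y^0(s)-Y_s^{(t,\hat y^0(t))}\big)$ a.s., while \eqref{Lx_bound} supplies the uniform bound $|L_y^\epsilon|\le \epsilon/(1-\epsilon)$. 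Dominated convergence then yields $v_x^\epsilon(t,\hat y^0(t))=\epsilon F(t,\hat y^0(t))+o(\epsilon)$.

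With this in hand, I would write $\uy^\epsilon(t)=\hat y^0(t)+\Delta_1^\epsilon$ and apply the mean value theorem in $x$ to \eqref{FOC_plan} to obtain $\tfrac{\epsilon}{1+\epsilon \uy^\epsilon(t)} - v_x^\epsilon(t,\hat y^0(t)) = v_{xx}^\epsilon(t,\xi_1^\epsilon)\,\Delta_1^\epsilon$ for some $\xi_1^\epsilon$ between $\hat y^0(t)$ and $\uy^\epsilon(t)$; substituting the expansion just derived, dividing by $\epsilon$, and letting $\epsilon\downarrow 0$ (using $\xi_1^\epsilon\to\hat y^0(t)$ with \eqref{vxx_uni_conv}, and noting that the limiting denominator $v_{xx}^0(t,\hat y^0(t))$ is nonzero by \eqref{gxx>0_all}) yields \eqref{uy_asymp}; the derivation of \eqref{oy_asymp} is symmetric. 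The ordering \eqref{y_orders} then reads off from the signs: $v_{xx}^0(t,\hat y^0(t))<0$ by \eqref{gxx>0_all} and $-1<F(t,\hat y^0(t))<1$ by \eqref{F_ineq} make the $\epsilon$-coefficients in \eqref{uy_asymp} and \eqref{oy_asymp} strictly negative and strictly positive respectively, which together with $\hat y^0(t)\in(0,1)$ delivers the double inequality for small enough $\epsilon$.

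The main obstacle I anticipate is the dominated-convergence step that upgrades the pointwise limit of $\epsilon^{-1}L_y^\epsilon$ into an integral limit producing $F(t,\hat y^0(t))$: one must rule out awkward contributions from the shrinking no-trade region $[\uy^\epsilon(s),\oy^\epsilon(s)]$, where $L_y^\epsilon=v_x^\epsilon$ is neither $\epsilon/(1+\epsilon x)$ nor $-\epsilon/(1-\epsilon x)$. This is handled via the uniform bound from \eqref{Lx_bound} combined with the fact that the continuous distribution of $Y_s^{(t,\hat y^0(t))}$ makes $\PP\big(Y_s^{(t,\hat y^0(t))}=\hat y^0(s)\big)=0$, so the set on which $L_y^\epsilon$ deviates from $\epsilon\,\sgn(\hat y^0(s)-y)$ has vanishing probability as $\epsilon\downarrow 0$.
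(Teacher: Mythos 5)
Your overall strategy matches the paper's: establish that $\uy^\epsilon(t),\oy^\epsilon(t)\to\hat y^0(t)$, use the first-order conditions from \lemref{uyoy_lem}, expand $v_x^\epsilon(t,\hat y^0(t))$ via dominated convergence to get the factor $F(t,\hat y^0(t))$, then finish with the mean value theorem and \lemref{tech_asymptotitcs}(iii). However, your preliminary step has a genuine gap, and it is the load-bearing one.

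You write that $\uy^\epsilon(t),\oy^\epsilon(t)$ lie in $(0,1)$ and converge to $\hat y^0(t)$ "from \lemref{tech_asymptotitcs}(iii) together with uniqueness of the zero of $v_x^0(t,\cdot)$." This is circular: \lemref{tech_asymptotitcs}(iii) \emph{assumes} $x_\epsilon\to x_0\in(0,1)$ as a hypothesis, so it cannot be invoked to deduce that $\uy^\epsilon(t)$ converges, and a compactness/subsequence argument built on it cannot rule out accumulation at the endpoints $0$ or $1$, where the lemma does not apply. This matters because, by \proref{uyoy_prop}(i), $\uy^\epsilon(t)=0$ does in fact occur for $t$ close to $T$ when $\epsilon>0$, so the interiority you need is $\epsilon$-dependent and must be earned. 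The paper avoids all of this by working directly with the uniform estimate $|v_x^\epsilon(t,x)-v_x^0(t,x)|\le \lambda T e^{(\mu-r+\sigma^2)T}\,\tfrac{\epsilon}{1-\epsilon}$ (from \eqref{vx_exp}, \eqref{Lx_bound}, \eqref{EY_bound}; see \eqref{vx_diff_bound}), the uniform bound $|v_x^\epsilon(t,\uy^\epsilon(t))|\le\tfrac{\epsilon}{1-\epsilon}$ valid \emph{even when} $\uy^\epsilon(t)=0$ (see \eqref{vxe_bound}), and the uniform lower bound $\inf_{x\in(0,1)}|v_{xx}^0(t,x)|>0$ from \eqref{gxx>0_all}. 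Together these give the quantitative $O(\epsilon)$ estimates $\uy^\epsilon(t)-\hat y^0(t)=O(\epsilon)$ and $\oy^\epsilon(t)-\hat y^0(t)=O(\epsilon)$ without any a priori interiority assumption, and only \emph{then} conclude that $\uy^\epsilon(t),\oy^\epsilon(t)\in(0,1)$ for small $\epsilon$ so that your first-order conditions \eqref{vx_small_e} become valid. Note also that your dominated-convergence step requires $\uy^\epsilon(s),\oy^\epsilon(s)\to\hat y^0(s)$ for \emph{all} $s\in(t,T)$, not just at the fixed $t$, so this preliminary step must be established for every $s$. Once you replace the circular invocation with the uniform bound from \eqref{vx_diff_bound} plus \eqref{gxx>0_all}, the remainder of your argument — the pointwise limit of $\epsilon^{-1}L_y^\epsilon$, dominated convergence producing $F$, the mean value theorem with \eqref{vxx_uni_conv}, and the sign analysis via \eqref{F_ineq} — is correct and coincides with the paper's.
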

\begin{proof}

For $(t,x)\in [0,T)\times (0,1)$, the expression \eqref{vx_exp} implies that
\begin{equation}
\begin{split}\label{vx_diff_bound}
\left |v_x^\epsilon(t,x)-v_x^0(t,x) \right |  &= \left |   \lambda  \int_t^T e^{-\lambda (s-t)} \EE \left[ \left(\tfrac{\partial}{\partial x} Y_s^{(t,x)} \right)  L_y^\epsilon(s, Y_s^{(t,x)} )  \right] ds   \right | \\
&\leq  \lambda  \int_t^T e^{-\lambda (s-t)} \EE \left[ \left(\tfrac{\partial}{\partial x} Y_s^{(t,x)} \right)  \right] ds  \cdot \tfrac{\epsilon}{1-\epsilon}\\
&\leq \lambda  T \cdot e^{(\mu-r+\sigma^2)T}  \cdot \tfrac{\epsilon}{1-\epsilon},
\end{split}
\end{equation}
where the first inequality is due to \eqref{Lx_bound} and the positivity of $\frac{\partial}{\partial x} Y_s^{(t,x)}$ in \eqref{YD_bound}, and the second inequality is due to \eqref{EY_bound} and \assref{assumption}.

\proref{uyoy_prop} and \assref{assumption} imply that $\uy^\epsilon(t)<1$ and $\oy^\epsilon(t)>0$ for $t\in [0,T)$. By \lemref{uyoy_lem}, if $0<\uy^\epsilon(t)<1$, then $ v_x^\epsilon(t, \uy^\epsilon(t))= \frac{\epsilon}{1+\epsilon \cdot \uy^\epsilon(t)}$, and if $\uy^\epsilon(t)=0$, then $-\frac{\epsilon}{1-\epsilon x}<v_x^\epsilon(t, x)<\frac{\epsilon}{1+\epsilon x}$ for $x\in (0,\oy^\epsilon(t))$. In any case, we have
 \begin{align}\label{vxe_bound}
  \left| v_x^\epsilon(t, \uy^\epsilon(t)) \right | \leq \tfrac{\epsilon}{1-\epsilon}, 
 \end{align}
 where $v_x^\epsilon(t, 0):=\lim_{x\downarrow 0} v_x^\epsilon(t,x)$ is well-defined (see \eqref{vx_lim_0} for details) for the case of $\uy^\epsilon(t)=0$. 
We use the mean value theorem and \eqref{vx0_bound} to obtain
\begin{equation}
\begin{split}
 \inf_{x\in (0,1)} \left | v_{xx}^0(t,x) \right | \cdot  \left|\uy^\epsilon(t) - \hat y^0(t) \right| & \leq \left |v_x^0(t,\uy^\epsilon(t))-v_x^0(t,\hat y^0(t)) \right |   \\
&\leq  \left |v_x^0(t,\uy^\epsilon(t))-v_x^\epsilon(t, \uy^\epsilon(t)) \right | + \left| v_x^\epsilon(t, \uy^\epsilon(t)) \right | .
\end{split}
\end{equation}
The above inequality, together with \eqref{gxx>0_all}, \eqref{vx_diff_bound}, and \eqref{vxe_bound}, we conclude that 
\begin{align}\label{uy_lim}
\uy^\epsilon(t) - \hat y^0(t) = O(\epsilon) \quad \textrm{for} \quad t\in [0,T).
\end{align}
By the same way, we also obtain
\begin{align}\label{oy_lim}
\oy^\epsilon(t) - \hat y^0(t) = O(\epsilon)  \quad \textrm{for} \quad t\in [0,T).
\end{align}

The expression of $L_y^\epsilon$ in \eqref{Lx_exp}, together with \eqref{uy_lim} and \eqref{oy_lim}, implies the following limit:
\begin{align}\label{Lx/e_lim}
\lim_{\epsilon \downarrow 0}\frac{L_y^\epsilon(t,x)}{\epsilon} = \begin{cases}
1, &\textrm{if  }x\in (0,\hat y^0(t))\\
-1, &\textrm{if  }x\in (\hat y^0(t),1)\\
\end{cases} \quad \textrm{for}\quad t\in [0,T).
\end{align}
For $(s,x)\in (t,T)\times (0,1)$, the observation $\PP\left(Y_s^{(t,x)}=\hat y^0(t)\right)=0$ and \eqref{Lx/e_lim} produce
\begin{align}\label{Lx/eY_lim}
\lim_{\epsilon\downarrow 0 } \frac{L_y^\epsilon(s,Y_s^{(t,x)})}{\epsilon} =  \sgn\left( \hat y^0(t)- Y_s^{(t,x)} \right) \quad \textrm{almost surely.}
\end{align}
For $(t,x)\in [0,T)\times (0,1)$, the limit \eqref{Lx/eY_lim} and the inequalities \eqref{Lx_bound} and \eqref{YD_bound} enable us to use the dominated convergence theorem to obtain
\begin{equation}
\begin{split}\label{vxe_conv}
\frac{v_x^\epsilon (t, x)-v_x^0 (t, x)}{\epsilon} &=   \lambda  \int_t^T e^{-\lambda (s-t)} \EE \left[ \left(\tfrac{\partial}{\partial x} Y_s^{(t,x)} \right)  \tfrac{L_y^\epsilon(s, Y_s^{(t,x)} )}{\epsilon}  \right] ds    \xrightarrow[\epsilon \downarrow 0]{} F(t,x).
\end{split}
\end{equation}

The inequality $0<\hat y^0(t)<1$ and \eqref{uy_lim} imply that $0<\uy^\epsilon(t)<1$ for small enough $\epsilon>0$. Hence, by \lemref{uyoy_lem}, we obtain
\begin{align}\label{vx_small_e}
v_x^\epsilon(t, \uy^\epsilon(t))= \frac{\epsilon}{1+\epsilon \, \uy^\epsilon(t)} \quad \textrm{for small enough $\epsilon>0$.} \quad 
\end{align}
By the mean value theorem, there exists $k(\epsilon)$ such that
\begin{align}\label{mvt_vx}
v_x^\epsilon (t,\uy^\epsilon(t))-v_x^\epsilon(t,\hat y^0(t))=v_{xx}^\epsilon (t,k(\epsilon))\left(\uy^{\epsilon} (t) -\hat y^0(t)\right)  \quad \textrm{and} \quad \lim_{\epsilon \downarrow 0} k(\epsilon)=\hat y^0(t).
\end{align}
Now we obtain \eqref{uy_asymp} as follows:
\begin{equation}
\begin{split}
\lim_{\epsilon \downarrow 0}\frac{\uy^{\epsilon} (t) -\hat y^0(t)}{\epsilon} &= \lim_{\epsilon \downarrow 0}\frac{v_x^\epsilon (t,\uy^\epsilon(t))-v_x^\epsilon(t,\hat y^0(t))}{\epsilon \cdot v_{xx}^\epsilon (t,k(\epsilon))}\\
&=\lim_{\epsilon \downarrow 0} \frac{1}{ v_{xx}^\epsilon (t,k(\epsilon))} \left( \frac{1}{1+\epsilon \, \uy^\epsilon (t)} - \frac{v_x^\epsilon (t, \hat y^0(t))-v_x^0 (t, \hat y^0(t))}{\epsilon}    \right)\\
&= - \frac{F(t,\hat y^0(t))-1 }{ v_{xx}^0 (t,\hat y^0(t))} ,
\end{split}
\end{equation}
where the first equality is from \eqref{mvt_vx}, the second equality is due to \eqref{vx_small_e} and \eqref{vx0_bound}, and the third equality is due to  \eqref{vxe_conv} and \eqref{vxx_uni_conv} with the limit in \eqref{mvt_vx}. 
We also obtain \eqref{oy_asymp} by the same way.

Due to \eqref{gxx>0_all} and \eqref{F_ineq}, we observe that $- \frac{F(t,\hat y^0(t))-1 }{v_{xx}^0(t,\hat y^0(t))}<0$ and $ - \frac{F(t,\hat y^0(t))+1 }{v_{xx}^0(t,\hat y^0(t))}>0$. Therefore, \eqref{uy_asymp} and \eqref{oy_asymp} imply that \eqref{y_orders} holds for small enough $\epsilon>0$.
\end{proof}



\begin{figure}[t]
		\begin{center}$
			\begin{array}{cc}
			\includegraphics[width=0.45\textwidth]{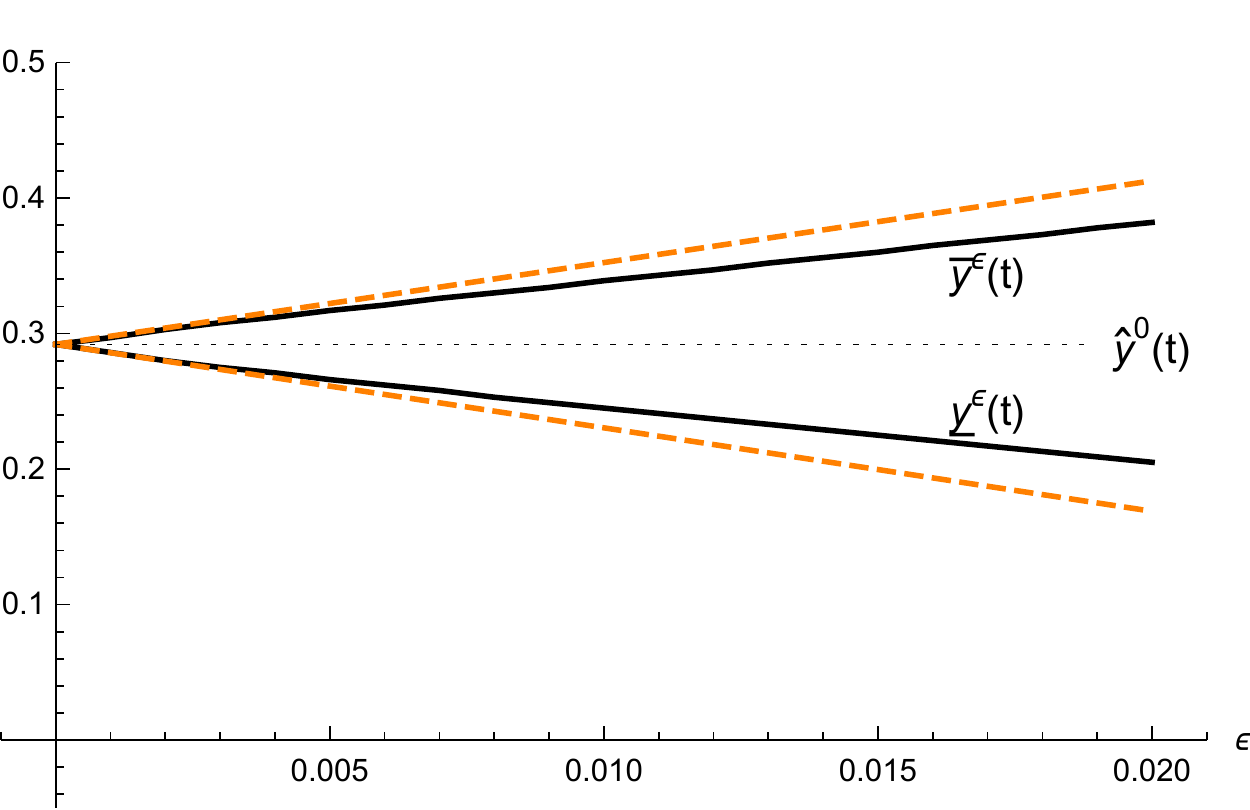} \,\,\, & \,\,\,
 			\includegraphics[width=0.45\textwidth]{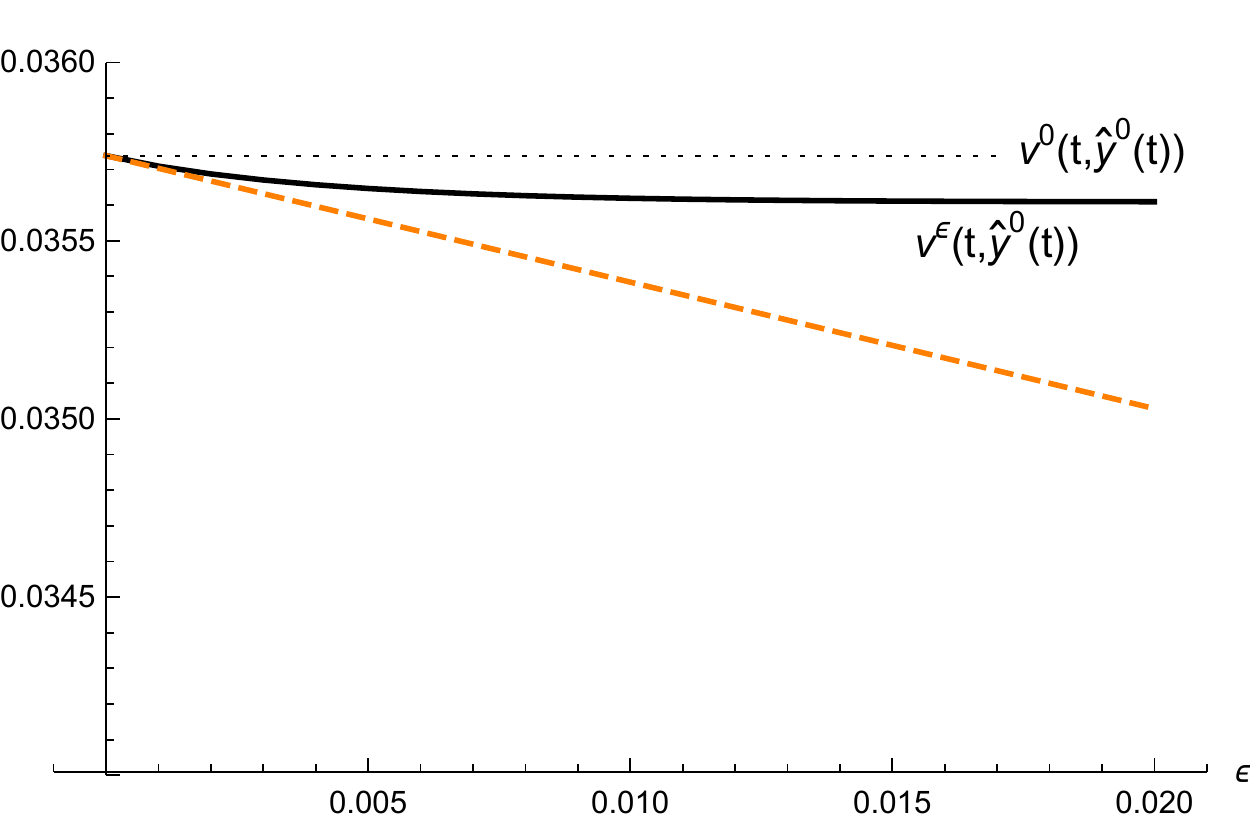} 		
			\end{array}$
	\end{center}
	 \caption{The left graph shows $\uy^\epsilon(t)$ and $\oy^\epsilon(t)$ as functions of $\epsilon$, where the dashed lines are the linear approximations of them in \thmref{asymptotic_y}, i.e., $\hat y^0(t) - \frac{F(t,\hat y^0(t))-1 }{v_{xx}^0(t,\hat y^0(t))} \cdot \epsilon$ and $\hat y^0(t) - \frac{F(t,\hat y^0(t))-1 }{v_{xx}^0(t,\hat y^0(t))} \cdot \epsilon$.
	 The right graph describes $v^\epsilon(t,\hat{y}^0(t))$ as a function of $\epsilon$, where the dashed line is the linear approximation of it in \thmref{asymptotic_v}, i.e., $v^0(t,\hat y^0(t)) - \left( G(t) + \lambda \int_t^T G(s) ds \right) \cdot \epsilon$.	 
	 In both graphs, the parameters are $\mu=0.4, \, r=0.1, \sigma=1,\, \lambda=3, \, T=1$, and $t=0.75$.
		}
		\label{asymp_fig}
\end{figure}

The following theorem provides the first order approximation of the value function.

\begin{theorem}\label{asymptotic_v}
For $t\in [0,T)$, 
\begin{align}
v^\epsilon(t,\hat y^0(t))=v^0(t,\hat y^0(t)) - \left( G(t) + \lambda \int_t^T G(s) ds \right) \cdot \epsilon + o(\epsilon),\label{v_asymp}
\end{align}
where $G:[0,T]\to \R$ is defined as
\begin{align}\label{G_def}
G(t):=\lambda \int_t^T e^{-\lambda(s-t)} \EE \left[ \left| Y_s^{(t,\hat y^0(t))}- \hat y^0(s) \right| \right] ds
\end{align}
\end{theorem}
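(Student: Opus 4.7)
The approach is to exploit the Feynman--Kac representation \eqref{v_FK}. Because the drift and diffusion coefficients of $Y$ and the bond/stock drift terms do not involve $\epsilon$, and because $L^0(s, y) = v^0(s, \hat y^0(s))$ is independent of $y$ by \eqref{y0}, subtracting the $\epsilon=0$ version of \eqref{v_FK} from the $\epsilon$-version, both evaluated at $x = \hat y^0(t)$, gives
\begin{align*}
\phi^\epsilon(t) := \frac{v^\epsilon(t, \hat y^0(t)) - v^0(t, \hat y^0(t))}{\epsilon} = \frac{\lambda}{\epsilon} \int_t^T e^{-\lambda(s-t)} \EE\bigl[L^\epsilon(s, Y_s^{(t, \hat y^0(t))}) - v^0(s, \hat y^0(s))\bigr]\, ds.
\end{align*}
Writing $\psi(t) := -(G(t) + \lambda \int_t^T G(s)\, ds)$ for the target, a direct Fubini calculation shows $\psi$ satisfies the limit integral equation $\psi(t) = \lambda \int_t^T e^{-\lambda(s-t)} \psi(s)\, ds - G(t)$; the goal reduces to proving $\phi^\epsilon(t) \to \psi(t)$ as $\epsilon \downarrow 0$.

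The first step is a Gronwall-type argument yielding the uniform bound $\sup_{(t,x)} |v^\epsilon(t,x) - v^0(t,x)| = O(\epsilon)$. From the explicit form \eqref{L_exp} of $L^\epsilon$,
\begin{align*}
|L^\epsilon(s, y) - L^0(s, y)| \le \sup_x |v^\epsilon(s, x) - v^0(s, x)| + |v^0(s, \hat y^\epsilon(s, y)) - v^0(s, \hat y^0(s))| + O(\epsilon),
\end{align*}
where the middle term is actually $O(\epsilon^2)$: $\hat y^\epsilon(s, y) \in \{\uy^\epsilon(s), y, \oy^\epsilon(s)\}$ lies within $O(\epsilon)$ of $\hat y^0(s)$ (by Theorem \ref{asymptotic_y}, noting the no-trade region has width $O(\epsilon)$), and $v^0_x(s, \hat y^0(s)) = 0$ by Lemma \ref{tech_asymptotitcs}(i). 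Inserting this estimate into the Feynman--Kac representation and taking suprema closes the Gronwall loop; in particular $\phi^\epsilon$ and $\epsilon^{-1}(L^\epsilon - L^0)$ are uniformly bounded in $(\epsilon, t, y)$.

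The second step is a pointwise expansion. For $\epsilon$ small enough, Theorem \ref{asymptotic_y} guarantees $\uy^\epsilon(s) < \hat y^0(s) < \oy^\epsilon(s)$, so any $y \ne \hat y^0(s)$ eventually lies below $\uy^\epsilon(s)$ or above $\oy^\epsilon(s)$. In the former case, \eqref{L_exp} yields
\begin{align*}
\frac{L^\epsilon(s, y) - L^0(s, y)}{\epsilon} = \phi^\epsilon(s) + \frac{v^\epsilon(s, \uy^\epsilon(s)) - v^\epsilon(s, \hat y^0(s))}{\epsilon} - \frac{1}{\epsilon}\ln\frac{1 + \epsilon\, \uy^\epsilon(s)}{1 + \epsilon y};
\end{align*}
the middle term is $o(1)$, since the mean value theorem rewrites it as $v^\epsilon_x(s, \xi^\epsilon)(\uy^\epsilon(s) - \hat y^0(s))/\epsilon$, with $(\uy^\epsilon(s) - \hat y^0(s))/\epsilon$ bounded by Theorem \ref{asymptotic_y} and $v^\epsilon_x(s, \xi^\epsilon) \to v^0_x(s, \hat y^0(s)) = 0$ by Lemma \ref{tech_asymptotitcs}(iii) and (i). The log term tends to $\hat y^0(s) - y$. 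The symmetric analysis when $y > \oy^\epsilon(s)$ flips the sign of the log, giving
\begin{align*}
\frac{L^\epsilon(s, y) - L^0(s, y)}{\epsilon} - \phi^\epsilon(s) + |y - \hat y^0(s)| \longrightarrow 0 \quad \text{for each } y \ne \hat y^0(s).
\end{align*}

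The third step passes to the limit and solves. Since $\PP(Y_s^{(t, \hat y^0(t))} = \hat y^0(s)) = 0$ and the integrand is uniformly bounded by step one, DCT produces the integral equation
\begin{align*}
\phi^\epsilon(t) = \lambda \int_t^T e^{-\lambda(s-t)} \phi^\epsilon(s)\, ds - G(t) + r^\epsilon(t), \qquad r^\epsilon(t) \to 0 \text{ pointwise, uniformly bounded}.
\end{align*}
Setting $Q^\epsilon(t) := \int_t^T e^{-\lambda(s-t)} \phi^\epsilon(s)\, ds$ and using $(Q^\epsilon)'(t) = \lambda Q^\epsilon(t) - \phi^\epsilon(t) = G(t) - r^\epsilon(t)$ with $Q^\epsilon(T) = 0$ gives $Q^\epsilon(t) = -\int_t^T(G(s) - r^\epsilon(s))\, ds$, so
\begin{align*}
\phi^\epsilon(t) = \lambda Q^\epsilon(t) - G(t) + r^\epsilon(t) = \psi(t) + r^\epsilon(t) + \lambda \int_t^T r^\epsilon(s)\, ds \longrightarrow \psi(t),
\end{align*}
where the last convergence uses DCT once more on the $r^\epsilon$ integral. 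The main delicate point is step one: without the second-order cancellation $v^0_x(s, \hat y^0(s)) = 0$ from Lemma \ref{tech_asymptotitcs}(i), the Gronwall bootstrap would yield only $O(1)$ rather than $O(\epsilon)$ and the dominated convergence in step three would be inapplicable.
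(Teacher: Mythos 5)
Your overall architecture matches the paper's: extract the integral equation for $\phi^\epsilon(t)=\psi^\epsilon(t)$ from the Feynman--Kac representation, pass to the pointwise limit of the forcing term via dominated convergence, then solve the linear integral equation exactly via the ODE for $Q^\epsilon(t):=\int_t^T e^{-\lambda(s-t)}\phi^\epsilon(s)\,ds$. Your Steps 2 and 3, and the final solving step, are essentially the paper's argument.

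However, Step 1 as you justify it has a genuine gap. You bound the middle term $|v^0(s,\hat y^\epsilon(s,y))-v^0(s,\hat y^0(s))|$ by $O(\epsilon^2)$ by invoking ``the no-trade region has width $O(\epsilon)$ (by Theorem~\ref{asymptotic_y}).'' But Theorem~\ref{asymptotic_y} is a pointwise-in-$t$ statement with a coefficient $-\frac{F(t,\hat y^0(t))\mp 1}{v_{xx}^0(t,\hat y^0(t))}$ that blows up as $t\uparrow T$ (since $v^0(T,\cdot)\equiv 0$ forces $v_{xx}^0(t,\hat y^0(t))\to 0$). In fact, Proposition~\ref{uyoy_prop}(i) shows $\uy^\epsilon(s)=0$ and $\oy^\epsilon(s)=1$ for $s$ sufficiently close to $T$, so $|\hat y^\epsilon(s,y)-\hat y^0(s)|$ is of order $\hat y^0(s)\approx y_\infty$ there, not $O(\epsilon)$. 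Hence your $O(\epsilon^2)$ estimate on the middle term is not uniform in $s$, and the Gronwall closure, which requires $\lambda\int_t^T e^{-\lambda(s-t)}\sup_y|v^0(s,\hat y^\epsilon(s,y))-v^0(s,\hat y^0(s))|\,ds=O(\epsilon)$ uniformly in $t$, is not substantiated. The estimate can be repaired---e.g.\ by combining the Taylor bound with the competing bound $|v^0(s,\cdot)|\le C(T-s)$ from Lemma~\ref{PDE_existence}(iii) near $s=T$, noting $\min\{\epsilon^2/(T-s),\,T-s\}\le\epsilon$---but none of this is in your write-up.

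The paper sidesteps the uniformity issue entirely, and this is the real point of comparison. It never attempts to bound $\phi^\epsilon$ and $\epsilon^{-1}(L^\epsilon-L^0)$ \emph{separately}; instead it bounds the combination $I_1+I_2+I_3 = \epsilon^{-1}(L^\epsilon-L^0)-\psi^\epsilon(s)$ directly, using concavity of $v^\epsilon(s,\cdot)$ together with the $v_x^\epsilon$-bounds at the no-trade boundaries and at $\hat y^0(s)$ (Lemma~\ref{uyoy_lem} and \eqref{vx_y0_bound}). Those derivative bounds are $O(\epsilon)$ uniformly in $s$ \emph{regardless} of the width of the no-trade region, so the resulting bound \eqref{I123_bound} is clean, and no Gronwall step is needed. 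Your Gronwall route is an honest alternative idea, but as written it relies on an unavailable uniform version of Theorem~\ref{asymptotic_y}, and once you patch it with the concavity trick you have essentially reproduced the paper's argument with extra scaffolding. Separately, your closing remark is misdirected: the cancellation $v^0_x(s,\hat y^0(s))=0$ is not what rescues Step~1 (an $O(\epsilon)$ Lipschitz bound on the middle term would already suffice for Gronwall); where it is genuinely indispensable is in Step~2, to kill the cross term $v_x^\epsilon(s,\xi^\epsilon)\cdot\frac{\uy^\epsilon(s)-\hat y^0(s)}{\epsilon}$ in the pointwise limit.
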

\begin{proof}
For $\epsilon>0$, we define a function $\psi^\epsilon:[0,T) \rightarrow \R$ as
\begin{align}\label{psi_def}
\psi^\epsilon(t):=\frac{v^\epsilon(t,\hat y^0(t))-v^0(t,\hat y^0(t))}{\epsilon}.
\end{align} 
Using \eqref{v_FK} and \eqref{L_exp}, we obtain
\begin{equation}
\begin{split}\label{psi_ode}
\psi^\epsilon(t) &= \lambda \int_t^T e^{-\lambda(s-t)}  \psi^\epsilon(s) ds + \lambda \int_t^T e^{-\lambda(s-t)} \EE\left[ I_1^{(t,s,\epsilon)}+I_2^{(t,s,\epsilon)}+I_3^{(t,s,\epsilon)}
   \right] ds,
\end{split}
\end{equation}
where the random variables $I_1^{(t,s,\epsilon)}, I_2^{(t,s,\epsilon)},I_3^{(t,s,\epsilon)}$ are defined as
\begin{equation}
\begin{split}\label{I_def}
I_1^{(t,s,\epsilon)}&:=\left( \tfrac{v^\epsilon(s,\uy^\epsilon(s))-v^\epsilon(s,\hat y^0(s))}{\epsilon} - \frac{ 1}{\epsilon}\ln\left( \tfrac{1+\epsilon \, \uy^\epsilon(s)}{1+\epsilon \, Y_s^{(t,\hat y^0(t))}} \right) \right) \cdot 1_{\left\{ Y_s^{(t,\hat y^0(t))}\leq \uy^\epsilon (s)  \right \} },\\
I_2^{(t,s,\epsilon)}&:=\left( \tfrac{v^\epsilon(s,Y_s^{(t,\hat y^0(t))})-v^\epsilon(s,\hat y^0(s))}{\epsilon}\right)\cdot 1_{\left\{ \uy^\epsilon (s)<Y_s^{(t,\hat y^0(t))} <\oy^\epsilon(s)  \right \} },\\
I_3^{(t,s,\epsilon)}&:=\left( \tfrac{v^\epsilon(s,\oy^\epsilon(s))-v^\epsilon(s,\hat y^0(s))}{\epsilon} - \frac{ 1}{\epsilon}\ln\left( \tfrac{1-\epsilon \, \oy^\epsilon(s)}{1-\epsilon \, Y_s^{(t,\hat y^0(t))}} \right) \right) \cdot 1_{\left\{  \oy^\epsilon (s) \leq Y_s^{(t,\hat y^0(t))}   \right \} }.
\end{split}
\end{equation}
The mean value theorem produces 
\begin{align}\label{ln_bound}
\left| \tfrac{ 1}{\epsilon}\ln\left( \tfrac{1+\epsilon \, \uy^\epsilon(s)}{1+\epsilon \, x} \right)\right| \leq 1 \quad \textrm{and} \quad \left|  \tfrac{ 1}{\epsilon}\ln\left( \tfrac{1-\epsilon \, \oy^\epsilon(s)}{1-\epsilon \,x} \right)\right| \leq \tfrac{1}{1-\epsilon} \quad \textrm{for  } x\in (0,1).
\end{align}
By \assref{assumption} and \proref{uyoy_prop} (ii), we obtain $\uy^\epsilon(s)<1$ and $\oy^\epsilon(s)>0$ for $s\in [0,T)$. The definition of $Y_s^{(t,x)}$ in \eqref{YL_def} indicates that $0<Y_s^{(t,x)}<1$. Therefore,
\begin{align}\label{indicator_ineq}
1_{\left\{ Y_s^{(t,\hat y^0(t))}\leq \uy^\epsilon (s)  \right \} } \leq 1_{\big\{ 0<\uy^\epsilon (s)<1  \big \} }, \quad 1_{\left\{  \oy^\epsilon (s) \leq Y_s^{(t,\hat y^0(t))}   \right \} } \leq 1_{\big\{ 0<\oy^\epsilon (s)<1  \big \} }.
\end{align}
The concavity of $v^\epsilon$ on $x$ variable (see \proref{value_concave}) and \eqref{ln_bound} and \eqref{indicator_ineq} imply
\begin{equation}
\begin{split}\label{I123_ineq1}
\left| I_1^{(t,s,\epsilon)} \right| 
&\leq \left(\max\left\{ \left| \tfrac{v_x^\epsilon(s,\uy^\epsilon(s))}{\epsilon}\right| , \, \left| \tfrac{v_x^\epsilon(s,\hat y^0(s))}{\epsilon}\right|  \right\}\cdot \left| \uy^\epsilon(s)-\hat y^0(s) \right|  + 1 \right) \cdot 1_{\left\{ 0<\uy^\epsilon (s)<1  \right \} },\\
\left| I_2^{(t,s,\epsilon)} \right| 
&\leq \max\left\{ \left| \tfrac{v_x^\epsilon(s,Y_s^{(t,\hat y^0(t))})}{\epsilon}\right| , \, \left| \tfrac{v_x^\epsilon(s,\hat y^0(s))}{\epsilon}\right|  \right\}    \cdot \left| Y_s^{(t,\hat y^0(t))}-\hat y^0(s) \right|  \cdot 1_{\left\{ \uy^\epsilon (s)<Y_s^{(t,\hat y^0(t))} <\oy^\epsilon(s)  \right \} },\\
\left| I_3^{(t,s,\epsilon)} \right| 
&\leq \left(\max\left\{ \left| \tfrac{v_x^\epsilon(s,\oy^\epsilon(s))}{\epsilon}\right| , \, \left| \tfrac{v_x^\epsilon(s,\hat y^0(s))}{\epsilon}\right|  \right\}\cdot \left| \oy^\epsilon(s)-\hat y^0(s) \right| + \frac{1}{1-\epsilon}\right) \cdot 1_{\left\{ 0<\oy^\epsilon (s)<1  \right \} }.
\end{split}
\end{equation}
Combining \eqref{vx_diff_bound} and \eqref{vx0_bound}, we obtain
\begin{align}\label{vx_y0_bound}
\left| v_x^\epsilon(s,\hat y^0(s))\right| \leq \lambda  T \cdot e^{(\mu-r+\sigma^2)T}  \cdot  \tfrac{\epsilon}{1-\epsilon}.
\end{align}
By \lemref{uyoy_lem}, we have the following inequalities:
\begin{equation}
\begin{split}\label{vx_ye_bound}
&\left| v_x^\epsilon(s,\uy^\epsilon(s))\right| \cdot 1_{\left\{ 0<\uy^\epsilon (s)<1  \right \} } \leq \epsilon ,
\quad \left| v_x^\epsilon(s,\oy^\epsilon(s))\right| \cdot  1_{\left\{ 0<\oy^\epsilon (s)<1  \right \} }\leq   \tfrac{\epsilon}{1-\epsilon},\\
&\left| v_x^\epsilon(s,x)\right|   \cdot 1_{\left\{ \uy^\epsilon (s)<x<\oy^\epsilon(s)  \right \} } \leq \tfrac{\epsilon}{1-\epsilon}\quad \textrm{for} \quad x\in (0,1).
\end{split}
\end{equation}
Now we apply \eqref{vx_y0_bound} and \eqref{vx_ye_bound} to \eqref{I123_ineq1} and obtain
\begin{equation}
\begin{split}\label{I123_bound}
\left| I_1^{(t,s,\epsilon)} + I_2^{(t,s,\epsilon)}+ I_3^{(t,s,\epsilon)} \right| 
 \leq C \frac{1}{1-\epsilon},  
\end{split}
\end{equation}
where $C$ is a constant independent of $(t,s,\epsilon)$. 
Also, \thmref{asymptotic_y}, the mean value theorem, and \eqref{vx_uni_conv} imply that
\begin{align}\label{I_convergence}
\lim_{\epsilon \downarrow 0} \left( I_1^{(t,s,\epsilon)} + I_2^{(t,s,\epsilon)}+ I_3^{(t,s,\epsilon)} \right) 
= - \left|   Y_s^{(t,\hat y^0(t))}  -\hat y^0(s)  \right| \quad \textrm{almost surely.}
\end{align}

The uniform boundedness in \eqref{I123_bound} and the convergence in \eqref{I_convergence} allow us to apply the dominated convergence theorem and conclude 
\begin{align}\label{J_conv}
\lim_{\epsilon \downarrow 0 } J^\epsilon(t) = - G(t) \quad \textrm{and} \quad  \lim_{\epsilon \downarrow 0 }   \int_t^T J^\epsilon(s) ds = -\int_t^T G(s) ds,
\end{align}
where $G$ is defined in \eqref{G_def} and the function $J^\epsilon:[0,T]\to \R$ is defined as
$$
J^\epsilon(t):=\lambda \int_t^T e^{-\lambda(s-t)} \EE\left[ I_1^{(t,s,\epsilon)}+I_2^{(t,s,\epsilon)}+I_3^{(t,s,\epsilon)} \right] ds.
$$
We rewrite \eqref{psi_ode} as
\begin{align}
J^\epsilon(t) &= \psi^\epsilon(t) - \lambda  \int_t^T e^{-\lambda (s-t)}  \psi^\epsilon(s) ds=- \frac{d}{dt}\left( \int_t^T e^{-\lambda (s-t)}  \psi^\epsilon(s) ds\right). \nonumber
\end{align}
We integrate both sides above to obtain
\begin{align}\label{int_J}
 \int_t^T e^{-\lambda (s-t)}  \psi^\epsilon(s) ds= \int_t^T J^\epsilon(s) ds .
\end{align}
Using \eqref{int_J}, the equation \eqref{psi_ode} becomes 
\begin{align}\label{psi_represent2}
\psi^\epsilon(t) &=  J^\epsilon(t) + \lambda \int_t^T J^\epsilon(s) ds.
\end{align}
Finally, \eqref{J_conv} and \eqref{psi_represent2} imply \eqref{v_asymp}.
\end{proof}

\begin{corollary}
For $(t,x)\in [0,T) \times (0,1)$, 
\begin{align}
v^\epsilon(t,x)=v^0(t,x) - \left( G(t) + \lambda \int_t^T G(s) ds - \int_{\hat y^0(t)}^x F(t, \eta) d\eta    \right) \cdot \epsilon + o(\epsilon),\label{v(x)_asymp}
\end{align}
where $F$ and $G$ are defined in \eqref{F_def} and \eqref{G_def}.
\end{corollary}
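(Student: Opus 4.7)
The plan is to reduce the general-$x$ expansion to the already-proved anchor case $x=\hat y^0(t)$ by integrating the first-order expansion of $v_x^\epsilon$ that was extracted inside the proof of Theorem~\ref{asymptotic_y}. Since Lemma~\ref{PDE_existence} provides $v^\epsilon\in C^{1,2}([0,T]\times(0,1))$, the fundamental theorem of calculus gives
\begin{align*}
v^\epsilon(t,x)-v^0(t,x)=\Big(v^\epsilon(t,\hat y^0(t))-v^0(t,\hat y^0(t))\Big)+\int_{\hat y^0(t)}^{x}\Big(v_x^\epsilon(t,\eta)-v_x^0(t,\eta)\Big)d\eta,
\end{align*}
and Theorem~\ref{asymptotic_v} shows the bracketed term equals $-\bigl(G(t)+\lambda\int_t^T G(s)\,ds\bigr)\epsilon+o(\epsilon)$.

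For the remaining integral, equation~\eqref{vxe_conv} from the proof of Theorem~\ref{asymptotic_y} supplies the pointwise limit
\begin{align*}
\lim_{\epsilon\downarrow 0}\frac{v_x^\epsilon(t,\eta)-v_x^0(t,\eta)}{\epsilon}=F(t,\eta),\qquad \eta\in(0,1),
\end{align*}
while the uniform bound \eqref{vx_diff_bound} yields
\begin{align*}
\left|\frac{v_x^\epsilon(t,\eta)-v_x^0(t,\eta)}{\epsilon}\right|\le \frac{\lambda T\, e^{(\mu-r+\sigma^2)T}}{1-\epsilon},
\end{align*}
which is $O(1)$ as $\epsilon\downarrow 0$ and integrable on the bounded interval between $\hat y^0(t)$ and $x$. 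Dominated convergence then permits interchanging the limit with the integration over $\eta$, producing
\begin{align*}
\int_{\hat y^0(t)}^{x}\bigl(v_x^\epsilon(t,\eta)-v_x^0(t,\eta)\bigr)d\eta=\epsilon\int_{\hat y^0(t)}^{x}F(t,\eta)\,d\eta+o(\epsilon).
\end{align*}
Adding the two contributions gives exactly \eqref{v(x)_asymp}.

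Because both key ingredients — the pointwise convergence together with the $\epsilon$-uniform bound for $(v_x^\epsilon-v_x^0)/\epsilon$, and the anchor expansion at $\hat y^0(t)$ — have already been established in the preceding theorems, I expect no real obstacle beyond careful bookkeeping of the $o(\epsilon)$ error. The only item worth checking is that the closed segment between $x$ and $\hat y^0(t)$ lies inside $(0,1)$ so that \eqref{vxe_conv} and \eqref{vx_diff_bound} apply on the whole range of integration; this is immediate since $x\in(0,1)$ by hypothesis and $0<\hat y^0(t)<1$ by \eqref{y0}.
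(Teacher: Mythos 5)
Your proof is correct and takes essentially the same route as the paper's: both decompose $v^\epsilon(t,x)-v^0(t,x)$ into the anchor value at $\hat y^0(t)$ (handled by Theorem~\ref{asymptotic_v}) plus the integral of $v_x^\epsilon-v_x^0$ over $\eta$, and then apply dominated convergence using \eqref{vxe_conv} and \eqref{vx_diff_bound}. Your extra remark about the integration segment lying in $(0,1)$ is a welcome, if routine, sanity check.
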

\begin{proof}
The inequality \eqref{vx_diff_bound} and the limit \eqref{vxe_conv} allow us to use the dominated convergence theorem, and we obtain
\begin{align}
\lim_{\epsilon \downarrow 0} \int_{\hat y^0(t)}^x \frac{v_x^\epsilon(t,\eta)-v_x^0(t,\eta)}{\epsilon} d\eta = \int_{\hat y^0(t)}^x F(t, \eta) d\eta.\nonumber
\end{align}
Using the above limit and \thmref{asymptotic_v}, we obtain
\begin{align}
\frac{v^\epsilon(t,x)-v^0(t,x)}{\epsilon} &=\frac{v^\epsilon(t,\hat y^0(t))-v^0(t,\hat y^0(t))}{\epsilon} + \int_{\hat y^0(t)}^x \frac{v_x^\epsilon(t,\eta)-v_x^0(t,\eta)}{\epsilon} d\eta \nonumber\\
&\xrightarrow[\epsilon \downarrow 0]{}  -\left(G(t) + \lambda \int_t^T G(s) ds\right) + \int_{\hat y^0(t)}^x F(t, \eta) d\eta . \nonumber 
\end{align}
The above limit is equivalent to \eqref{v(x)_asymp}.
\end{proof}

Not surprisingly, \thmref{asymptotic_y} and \thmref{asymptotic_v} indicate that the no-trade region widens and the value function diminishes as the transaction cost parameter $\epsilon$ increases. See Figure \ref{asymp_fig} for numerical illustrations.

\medskip

 Our next task is to investigate some intertwined effects of the search frictions and transaction costs on the no-trade region and value function. To be specific, for two different search friction parameters $\lambda_1<\lambda_2$, we would like to compare the magnitude of the widening (deminishing, resp.) effects of the transaction costs on the no-trade region (the value function, resp.). Figure \ref{compare_fig} numerically describes this comparison result.
The following technical lemma turns out to be useful to the proof of this comparison.

\begin{lemma}\label{tech_asymptotic}
(i) There is a constant $C$ such that
\begin{align}\label{y0_lambda_infinity}
 \left|\hat y^{0}(t)- y_{\infty} \right| \leq \tfrac{C}{\lambda} \quad \textrm{for} \quad (t,\lambda)\in [0,T)\times [1,\infty).
\end{align}

(ii) $\frac{\partial}{\partial \lambda}v_x^0(t,x)$ and $\frac{\partial}{\partial \lambda}v_{xx}^0(t,x)$ exist for $(t,x)\in [0,T]\times (0,1)$, and
\begin{equation}
\begin{split}\label{various_limit1}
&\lim_{\lambda\to \infty} \lambda^2 \tfrac{\partial}{\partial \lambda} v_x^{0}(t,x)\Big|_{x=\hat y^{0}(t)}= 0, \quad \lim_{\lambda\to \infty} \lambda^2 \tfrac{\partial}{\partial \lambda} v_{xx}^{0}(t,x)\Big|_{x=\hat y^{0}(t)}  =  \sigma^2, \\
&\lim_{\lambda\to \infty} \lambda  v_{xx}^{0}(t,\hat y^{0}(t))= - \sigma^2,  \qquad \,\,\,\,
\lim_{\lambda\to \infty} \lambda v_{xxx}^{0}(t,\hat y^{0}(t)) =0. 
\end{split}
\end{equation}

(iii) $\frac{\partial}{\partial \lambda}\hat{y}^0(t)$ exists for $t\in [0,T)$, and there is a constant $C$ such that
\begin{equation}
\begin{split}\label{y_lambda_derivative_bound}
\left| \tfrac{\partial}{\partial \lambda}\hat{y}^0(t)\right| \leq \tfrac{C}{\lambda^2} \quad \textrm{for} \quad (t,\lambda)\in [0,T)\times [1,\infty).
\end{split}
\end{equation}

(iv) $\frac{\partial}{\partial \lambda} \EE \left[ \left\vert Y_{s}^{(t, \hat{y}^{0}(t))} - \hat{y}^{0}(s) \right\vert \right]$ exists for $0\leq t\leq s\leq T$, and there is a constant $C$ such that
\begin{equation}
\begin{split}\label{g_lambda2_bound}
\left |\tfrac{\partial}{\partial \lambda} \left( \EE \left[ \left\vert Y_{s}^{(t, \hat{y}^{0}(t))} - \hat{y}^{0}(s) \right\vert \right] \right) \right| \leq \tfrac{C}{\lambda^2}   \quad \textrm{for} \quad (t,s,\lambda)\in [0,T)\times [t,T]\times [1,\infty).
\end{split}
\end{equation}

(v) There is a constant $C$ such that 
\begin{equation}
\begin{split}\label{sqrt_g_bound}
  \tfrac{\EE \left[ \left| Y_s^{(t, x)} - x \right| \right]}{\sqrt{s - t}} \leq C \quad \textrm{for} \quad (t,s,x)\in [0,T)\times (t,T] \times (0,1),
\end{split}
\end{equation}
and for $t\in [0,T)$, 
\begin{equation}
\begin{split}\label{sqrt_g_lim}
  \lim_{s \downarrow t} \tfrac{\EE \left[ \left| Y_s^{(t, x)} - x \right| \right]}{\sqrt{s - t}} = \sigma \sqrt{\tfrac{2}{\pi}} \,\, x(1-x).
\end{split}
\end{equation}
\end{lemma}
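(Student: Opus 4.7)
The plan is to exploit that when $\epsilon=0$, the middle case in \eqref{Lx_exp} becomes vacuous (because $\uy^0=\oy^0=\hat y^0$), so $L_y^0\equiv 0$ wherever defined. Substituting this into \eqref{vx_exp} gives the clean representation
\begin{equation*}
v_x^0(t,x)=\int_0^{T-t} e^{-\lambda u}\,A(u,x)\,du,\qquad A(u,x):=\EE\!\left[\tfrac{\partial}{\partial x}Y_{t+u}^{(t,x)}\cdot\bigl(\mu-r-\sigma^2 Y_{t+u}^{(t,x)}\bigr)\right].
\end{equation*}
Crucially, the $\lambda$-dependence enters only through the exponential weight, and $A(u,x)$ is smooth with $A(0,x)=\mu-r-\sigma^2 x$, $A_x(0,x)=-\sigma^2$, $A_{xx}(0,x)=0$ (by the SDE for $Y$ and dominated convergence). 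This representation is the engine for (i)--(iii).

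For \textbf{(i)}, I would apply a Watson-lemma style expansion to $0=v_x^0(t,\hat y^0(t))$. Writing $A(u,\hat y^0(t))=A(0,\hat y^0(t))+O(u)$ uniformly, and using $\int_0^{T-t}e^{-\lambda u}du=\lambda^{-1}(1-e^{-\lambda(T-t)})$ together with $\int_0^{T-t}u\,e^{-\lambda u}du=O(\lambda^{-2})$, the vanishing integral forces $A(0,\hat y^0(t))=O(1/\lambda)$, i.e.\ $\sigma^2(y_\infty-\hat y^0(t))=O(1/\lambda)$. For \textbf{(ii)}, differentiating the representation in $\lambda$ pulls down $-u$, and differentiating in $x$ uses $A_x$ and $A_{xx}$. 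Four Watson-lemma estimates of $\int_0^{T-t}u^k e^{-\lambda u}A^{(j)}(u,x)\,du$ near $x=\hat y^0(t)$, combined with the bound $A(0,\hat y^0(t))=O(1/\lambda)$ from (i), yield all four limits in \eqref{various_limit1}. For \textbf{(iii)}, implicit differentiation of $v_x^0(t,\hat y^0(t);\lambda)=0$ gives
\begin{equation*}
\tfrac{\partial}{\partial\lambda}\hat y^0(t)=-\frac{\partial_\lambda v_x^0(t,\hat y^0(t))}{v_{xx}^0(t,\hat y^0(t))}.
\end{equation*}
The denominator is $\sim-\sigma^2/\lambda$ by (ii). For the numerator, Taylor in $u$ of $A(u,\hat y^0(t))$ plus $A(0,\hat y^0(t))=O(1/\lambda)$ gives $\int_0^{T-t}u e^{-\lambda u}A(u,\hat y^0(t))du=O(1/\lambda^3)$. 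The quotient is $O(1/\lambda^2)$, which is \eqref{y_lambda_derivative_bound}.

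For \textbf{(iv)}, I would factor the $\lambda$-dependence through the endpoints. Define $\tilde g(y_1,y_2):=\EE[|Y_s^{(t,y_1)}-y_2|]$, so $g(t,s,\lambda)=\tilde g(\hat y^0(t),\hat y^0(s))$. Since $Y_s^{(t,y_1)}$ is a smooth monotone transform of a Gaussian, its law is absolutely continuous with a uniformly bounded density, and $|\partial Y_s^{(t,y_1)}/\partial y_1|$ is integrable by \eqref{YD_bound}. Dominated convergence then legitimizes
\begin{equation*}
\partial_{y_1}\tilde g=\EE\!\left[\sgn(Y_s^{(t,y_1)}-y_2)\tfrac{\partial Y_s^{(t,y_1)}}{\partial y_1}\right],\qquad \partial_{y_2}\tilde g=-\EE[\sgn(Y_s^{(t,y_1)}-y_2)],
\end{equation*}
both uniformly bounded. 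The chain rule combined with \eqref{y_lambda_derivative_bound} then yields \eqref{g_lambda2_bound}. Finally, for \textbf{(v)}, I would use the (jumpless) SDE $dY_u=Y_u(1-Y_u)[(\mu-r-\sigma^2 Y_u)\,du+\sigma\,dB_u]$ with $Y_t=x$ and $Y_u(1-Y_u)\le 1/4$. The BDG inequality gives $\EE[|Y_s-x|]\le C\sqrt{s-t}$, independent of $x$, which is \eqref{sqrt_g_bound}. For the limit, write
\begin{equation*}
Y_s-x-\sigma x(1-x)(B_s-B_t)=\int_t^s\!\sigma[Y_u(1-Y_u)-x(1-x)]\,dB_u+\int_t^s\!Y_u(1-Y_u)(\mu-r-\sigma^2 Y_u)\,du,
\end{equation*}
and bound the right-hand side in $L^1$ by $O((s-t))$ using BDG together with the just-proved $\EE[|Y_u-x|]\le C\sqrt{u-t}$. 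Hence $\EE[|Y_s-x|]=\sigma x(1-x)\EE[|B_s-B_t|]+o(\sqrt{s-t})$, and $\EE[|B_s-B_t|]=\sqrt{2(s-t)/\pi}$ delivers \eqref{sqrt_g_lim}.

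The main obstacle is the bookkeeping in (ii) and (iii): keeping track of which error terms are $O(\lambda^{-k})$ for which $k$, and ensuring that the cancellation $A(0,\hat y^0(t))=O(1/\lambda)$ from (i) is used to sharpen (iii) from the naive $O(1/\lambda)$ to the required $O(1/\lambda^2)$. Everything else reduces to standard Watson-lemma estimates, the implicit function theorem, and BDG on the $Y$-SDE.
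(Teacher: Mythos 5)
Your representation $v_x^0(t,x)=\int_0^{T-t}e^{-\lambda u}A(u,x)\,du$ with $A$ free of $\lambda$ is exactly the $\Gamma$ used in the paper (since $\partial_x Y_s^{(t,x)}=Y_s^{(t,x)}(1-Y_s^{(t,x)})/(x(1-x))$), and your handling of (ii), (iii), (iv) via Watson-lemma asymptotics, implicit differentiation, and the chain rule matches the paper's strategy. The genuinely different pieces are (i) and (v). For (i) the paper simply cites equation (2.7) of Matsumoto (2006), whereas you rederive the $O(1/\lambda)$ bound from the fixed-point equation $0=\int_0^{T-t}e^{-\lambda u}A(u,\hat y^0(t))\,du$; this is more self-contained and is a worthwhile improvement. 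For (v) the paper computes the expectation explicitly by change of variables against the Gaussian density, while you use BDG/It\^o on the $Y$-SDE and compare the martingale part to $\sigma x(1-x)(B_s-B_t)$; both yield \eqref{sqrt_g_bound} and \eqref{sqrt_g_lim}, and your route is shorter and arguably cleaner.

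One place where your sketch is thinner than it should be is the uniformity in $t$ as $t\uparrow T$, which is a real issue in both (i) and (iii). In (i) you write ``the vanishing integral forces $A(0,\hat y^0(t))=O(1/\lambda)$'' using $\int_0^{T-t}e^{-\lambda u}du=\lambda^{-1}(1-e^{-\lambda(T-t)})$, but the factor $1-e^{-\lambda(T-t)}$ can be arbitrarily small; what saves the argument is the elementary identity that the \emph{ratio} $\bigl(\int_0^\tau ue^{-\lambda u}du\bigr)/\bigl(\int_0^\tau e^{-\lambda u}du\bigr)\le 1/\lambda$ holds for all $\tau>0$, and the bound $|A(0,\hat y^0(t))|\le \|A_u\|_\infty/\lambda$ follows from that ratio, not from the two integrals separately. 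Similarly in (iii), ``the denominator is $\sim-\sigma^2/\lambda$ by (ii)'' is only a pointwise statement; you need a uniform lower bound $|v_{xx}^0(t,\hat y^0(t))|\ge c/\lambda$ and a matching uniform upper bound on the numerator. The paper handles this by showing $\Gamma_x(u,x)\le -\sigma^2/2$ for all $u\le\tilde T$ uniformly in $x$, giving $|v_{xx}^0|\ge\tfrac{\sigma^2}{2\lambda}(1-e^{-\lambda(T-t)})$ for $T-t\le\tilde T$, and then splitting $[0,T)\times[1,\infty)$ into three regions to control the quotient. You flag this bookkeeping as ``the main obstacle'', so you are aware of it, but the sketch as written does not close it; the ratio identities $I_1/I_0\le 1/\lambda$ and $I_2/I_0\le 2/\lambda^2$ (with $I_k:=\int_0^{T-t}u^ke^{-\lambda u}du$) are what let the $1-e^{-\lambda(T-t)}$ factors cancel.
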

\begin{proof}
See Appendix.
\end{proof}

\begin{figure}[t]
		\begin{center}$
			\begin{array}{cc}
			\includegraphics[width=0.45\textwidth]{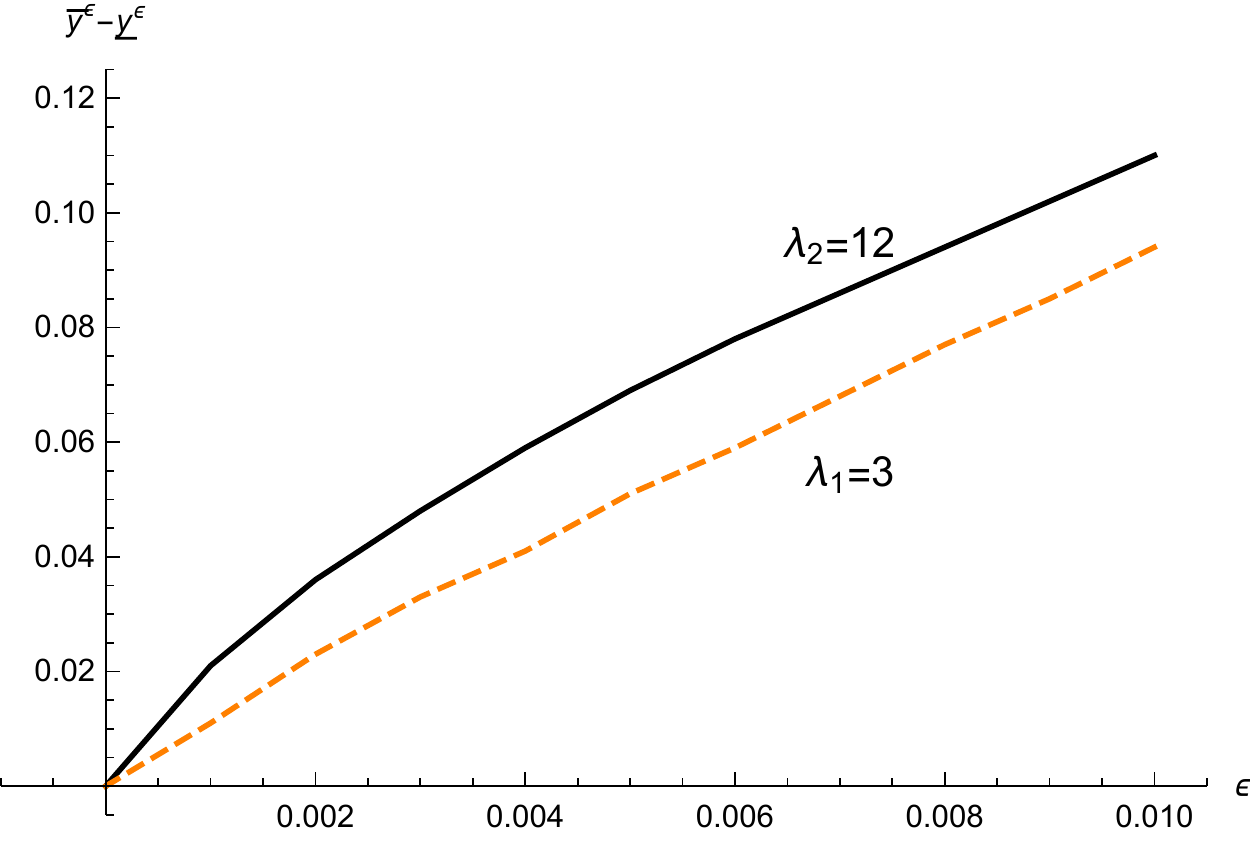} \,\,\, & \,\,\,
 			\includegraphics[width=0.45\textwidth]{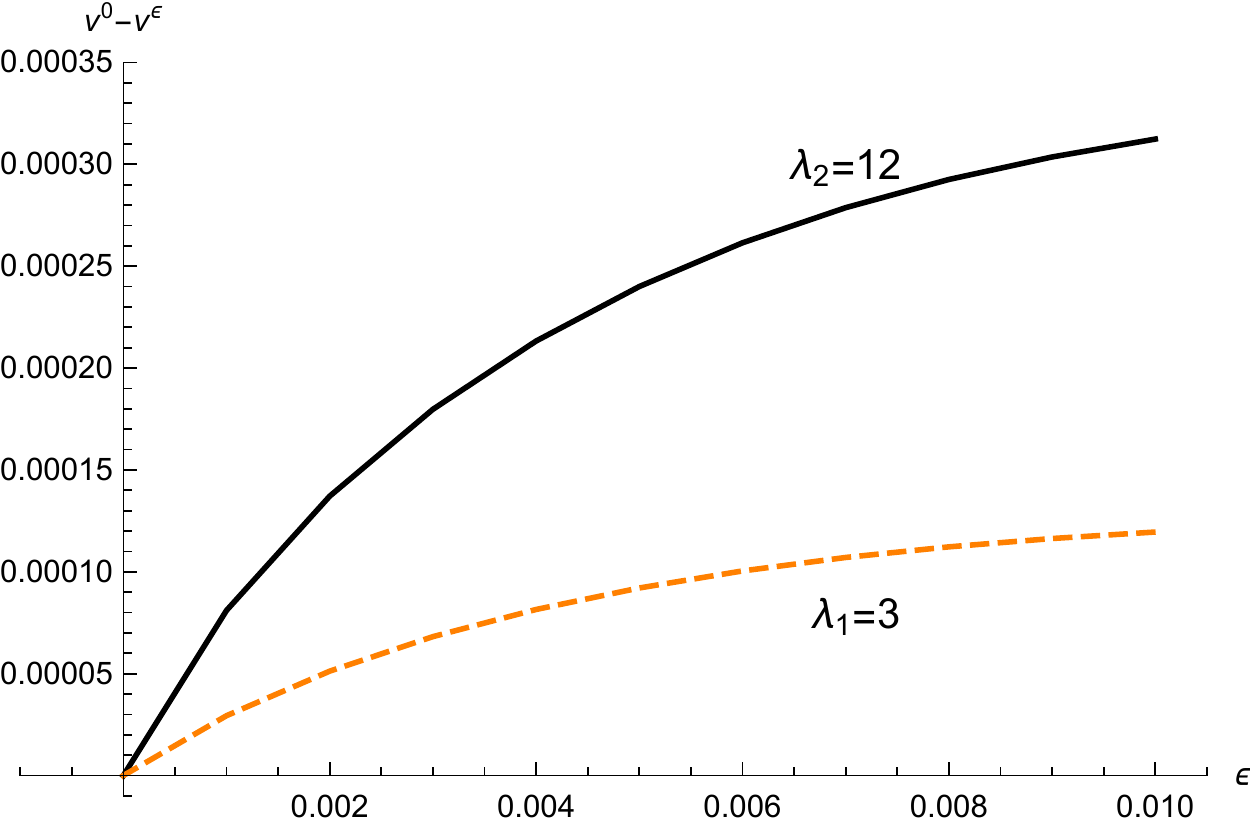} 		
			\end{array}$
	\end{center}
	 \caption{The left graph compares $\oy^\epsilon(t)-\uy^\epsilon(t)$ for $\lambda_1=3$ (dashed line) and $\lambda_2=12$ (solid line), as functions of $\epsilon$. The right graph compares $v^0(t,\hat{y}^0(t)) - v^\epsilon(t,\hat{y}^0(t))$ for $\lambda_1=3$ (dashed line) and $\lambda_2=12$ (solid line), as functions of $\epsilon$. The other parameters are $\mu=0.4, \, r=0.1, \sigma=1,\, T=1$, and $t=0.75$.
		}
		\label{compare_fig}
\end{figure}

\begin{proposition}\label{prop_comapre_frictions}
For $t \in [0,T)$, there exists $\Lambda(t)>0$ such that $\Lambda(t)<\lambda_1<\lambda_2$ implies
\begin{align}
\left\{
\begin{array}{c}
\left(\oy^\epsilon(t)-\uy^\epsilon(t)\right)\big|_{\lambda=\lambda_1} < \,\,\, \left(\oy^\epsilon(t)-\uy^\epsilon(t)\right)\big|_{\lambda=\lambda_2}\\
\left( v^{0}(t, \hat y^{0}(t)) - v^{\epsilon}(t, \hat y^{0}(t)) \right) \big|_{\lambda=\lambda_1} < \,\,\, \left( v^{0}(t, \hat y^{0}(t)) - v^{\epsilon}(t, \hat y^{0}(t)) \right) \big|_{\lambda=\lambda_2}
\end{array}
\right\}
\,\,\,\, \textrm{for small enough $\epsilon>0$.}\nonumber
\end{align}
\end{proposition}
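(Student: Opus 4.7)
The plan is to reduce the two inequalities to a comparison of the $\lambda$-derivatives of the leading-order coefficients given by \thmref{asymptotic_y} and \thmref{asymptotic_v}. With $\be=\ue=\epsilon$, \thmref{asymptotic_y} yields $\oy^\epsilon(t)-\uy^\epsilon(t) = c_1(\lambda)\epsilon + o(\epsilon)$ where $c_1(\lambda):=-\frac{2}{v_{xx}^0(t,\hat y^0(t))}$, and \thmref{asymptotic_v} yields $v^0(t,\hat y^0(t))-v^\epsilon(t,\hat y^0(t)) = c_2(\lambda)\epsilon + o(\epsilon)$ where $c_2(\lambda):=G(t)+\lambda\int_t^T G(s)\,ds$. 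I will show that there is a $\Lambda(t)>0$ so that, on $(\Lambda(t),\infty)$, both $c_1'(\lambda)$ and $c_2'(\lambda)$ are bounded below by a strictly positive constant. The mean value theorem then gives $c_i(\lambda_1)<c_i(\lambda_2)$ for $i=1,2$, and absorbing the $o(\epsilon)$ remainders (with a threshold possibly depending on $\lambda_1,\lambda_2$) produces the desired inequalities.

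For $c_1$ the chain rule gives
$c_1'(\lambda)=\frac{2}{v_{xx}^0(t,\hat y^0(t))^2}\Bigl[\tfrac{\partial v_{xx}^0}{\partial\lambda}(t,\hat y^0(t))+v_{xxx}^0(t,\hat y^0(t))\tfrac{d\hat y^0}{d\lambda}(t)\Bigr].$
By \lemref{tech_asymptotic}(ii), $\lambda v_{xx}^0(t,\hat y^0(t))\to -\sigma^2$, $\lambda^2\tfrac{\partial v_{xx}^0}{\partial\lambda}(t,\hat y^0(t))\to\sigma^2$, and $\lambda v_{xxx}^0(t,\hat y^0(t))\to 0$, while \lemref{tech_asymptotic}(iii) gives $\lambda^2|\tfrac{d\hat y^0}{d\lambda}(t)|\le C$; hence the cross term $\lambda^2 v_{xxx}^0\cdot\tfrac{d\hat y^0}{d\lambda}$ vanishes. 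Multiplying numerator and denominator by $\lambda^2$ and taking $\lambda\to\infty$ yields $c_1'(\lambda)\to 2/\sigma^2>0$.

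For $c_2$ the $\lambda$-dependence of $G(s)=\lambda\int_s^T e^{-\lambda(u-s)}g(s,u)\,du$, with $g(s,u):=\EE[|Y_u^{(s,\hat y^0(s))}-\hat y^0(u)|]$, enters through the prefactor, the exponent, and through $\hat y^0$ inside $g$; the last partial is controlled by \lemref{tech_asymptotic}(iv). Performing the change of variables $\tau=\lambda(u-s)$ gives $G(s)=\int_0^{\lambda(T-s)}e^{-\tau}g(s,s+\tau/\lambda)\,d\tau$, and \lemref{tech_asymptotic}(v) supplies the dominating bound $g(s,s+\tau/\lambda)\le C\sqrt{\tau/\lambda}$ together with the pointwise limit $\sqrt{\lambda/\tau}\,g(s,s+\tau/\lambda)\to\sigma\sqrt{2/\pi}\,\hat y^0(s)(1-\hat y^0(s))$; combined with \lemref{tech_asymptotic}(i) giving $\hat y^0(s)\to y_\infty$, dominated convergence produces $\sqrt{\lambda}\,G(s)\to \sigma y_\infty(1-y_\infty)/\sqrt 2$ uniformly in $s$. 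The same analysis applied to
$c_2'(\lambda)=\tfrac{d}{d\lambda}G(t)+\int_t^T G(s)\,ds+\lambda\int_t^T \tfrac{d}{d\lambda}G(s)\,ds$
shows that the first term is $O(\lambda^{-3/2})$, while the second and third are both of order $\lambda^{-1/2}$ and only partially cancel, leaving a strictly positive remainder asymptotic to $\tfrac{1}{2\sqrt{\lambda}}\int_t^T \tfrac{\sigma y_\infty(1-y_\infty)}{\sqrt 2}\,ds$; hence $c_2'(\lambda)>0$ for all large $\lambda$.

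The main obstacle is this last step: making rigorous the differentiation under the integral for $c_2'(\lambda)$ by combining \lemref{tech_asymptotic}(iv) (to control $\partial_\lambda g$, which incorporates the $\lambda$-dependence of $\hat y^0(s)$ and $\hat y^0(u)$ inside $g$) with the uniform sub-square-root bound of \lemref{tech_asymptotic}(v), and then verifying that the delicate cancellation between $\int_t^T G(s)\,ds$ and $\lambda\int_t^T \partial_\lambda G(s)\,ds$ is \emph{incomplete} and leaves a strictly positive leading term of order $\lambda^{-1/2}$ rather than collapsing to zero. Once both $c_1'$ and $c_2'$ are shown to exceed a positive constant on $(\Lambda(t),\infty)$, the proposition follows directly.
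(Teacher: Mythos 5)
Your proposal is correct and follows essentially the same route as the paper: reduce both inequalities to the sign of the $\lambda$-derivative of the leading-order coefficients in Theorems~\ref{asymptotic_y} and~\ref{asymptotic_v}, then establish that sign for large $\lambda$ via the limits in Lemma~\ref{tech_asymptotic}, exactly as the paper does (the paper shows $\lambda^2 \tfrac{\partial}{\partial\lambda} v_{xx}^0(t,\hat y^0(t)) \to \sigma^2 > 0$ and $\sqrt{\lambda}\,\tfrac{\partial}{\partial\lambda}(G(t)+\lambda\int_t^T G(s)\,ds) \to \tfrac{\sigma}{2\sqrt2}y_\infty(1-y_\infty)(T-t) > 0$). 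One small imprecision: you assert that $c_2'$ is bounded below by a strictly positive constant on $(\Lambda(t),\infty)$, but by your own asymptotic $c_2'(\lambda) \sim \lambda^{-1/2}\cdot\text{const}$ decays to zero; what the mean-value-theorem step actually needs (and what your limit computation delivers) is only that $c_2'(\lambda) > 0$ there, not uniform boundedness away from zero.
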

\begin{proof}
\thmref{asymptotic_y} implies that
\begin{align}
\oy^{\epsilon} (t)-\uy^{\epsilon} (t) =  - \frac{2}{v_{xx}^0(t,\hat y^0(t))} \cdot \epsilon + o(\epsilon).\nonumber
\end{align}
Therefore, to prove the first statement, it is enough to show that $\frac{\partial}{\partial \lambda} \left( v_{xx}^0(t,\hat y^0(t)) \right)>0$ for sufficiently large $\lambda$. Indeed, using (ii) and (iii) in \lemref{tech_asymptotic}, we obtain
\begin{equation}
\begin{split}
\lambda^2 \tfrac{\partial}{\partial \lambda} \left( v_{xx}^{0}(t,\hat y^{0}(t)) \right) &= \lambda^2 \tfrac{\partial}{\partial \lambda} v_{xx}^{0}(t,x) \big|_{x=\hat y^{0}(t)} + \lambda v_{xxx}^{0}(t,\hat y^{0}(t))\cdot  \lambda\tfrac{\partial}{\partial \lambda} \hat y^{0}(t) \xrightarrow[\lambda\to \infty]{} \sigma^2>0.
\end{split}
\end{equation}

To prove the second statement, it is enough check $\frac{\partial}{\partial \lambda} \left( G(t) + \lambda \int_{t}^{T} G(s) ds \right)>0$ for sufficiently large $\lambda$, according to \thmref{asymptotic_v}. For this purpose, we focus on proving the following:
\begin{align}\label{value_asymptotic_goal}
  \lim_{\lambda \rightarrow \infty} \sqrt{\lambda} \left( \tfrac{\partial}{\partial \lambda} \left( G(t) + \lambda \int_{t}^{T} G(s) ds \right) \right)  > 0.
\end{align}
By (iv) in \lemref{tech_asymptotic}, we can take derivative inside of the expectations and obtain
\begin{align}\label{G_lambda_exp}
    \sqrt{\lambda}  \left( \tfrac{\partial}{\partial \lambda} \left( G(t) + \lambda \int_{t}^{T} G(s) ds \right) \right)  
     = \lambda^{\frac{1}{2}} \tfrac{\partial}{\partial \lambda} G(t) + \int_t^T \left( \lambda^{\frac{1}{2}}  G(s)+\lambda^{\frac{3}{2}}  \tfrac{\partial}{\partial \lambda}G(s) \right) ds,
\end{align}
where the partial derivative $\frac{\partial}{\partial \lambda} G(t)$ can be written as
\begin{align}
\tfrac{\partial}{\partial \lambda} G(t) = \int_t^T e^{-\lambda(s-t)} \left( (1-\lambda(s-t)) \EE \left[ \left\vert Y_{s}^{(t, \hat{y}^{0}(t))} - \hat{y}^{0}(s) \right\vert \right] + \lambda \tfrac{\partial}{\partial \lambda} \EE \left[ \left\vert Y_{s}^{(t, \hat{y}^{0}(t))} - \hat{y}^{0}(s) \right\vert \right]      \right) ds. \nonumber
\end{align}

Observe that (i) and (v) in \lemref{tech_asymptotic} imply  
\begin{align}\label{mixed_bound}
\EE \left[ \left\vert Y_{s}^{(t, \hat{y}^{0}(t))} - \hat{y}^{0}(s) \right\vert \right] \leq C\left(\sqrt{s-t}+ \tfrac{1}{\lambda}\right) \quad \textrm{for} \quad (t,s,\lambda)\in [0,T)\times (t,T]\times [1,\infty),
\end{align}
for a constant $C$. We use \eqref{y0_lambda_infinity}, \eqref{g_lambda2_bound}, \eqref{sqrt_g_lim}, and \eqref{lemma2_result} to obtain the following limits:
\begin{align}\label{G_limits}
\lim_{\lambda\to \infty} \lambda^{\frac{1}{2}}  G(s) = \tfrac{\sigma}{\sqrt{2}} \, y_\infty(1-y_\infty), \quad \lim_{\lambda\to \infty} \lambda^{\frac{3}{2}}\tfrac{\partial}{\partial \lambda} G(s) =- \tfrac{\sigma}{2\sqrt{2}} \, y_\infty(1-y_\infty) \quad \textrm{for} \quad s\in [0,T).
\end{align}
We apply \eqref{g_lambda2_bound}, \eqref{mixed_bound}, and \eqref{lemma2_result2} to the expressions of $G$ and $\tfrac{\partial}{\partial \lambda} G $ to obtain the boundedness:
\begin{align}\label{G_bounds}
\left| \lambda^{\frac{1}{2}}  G(s)\right| + \left|\lambda^{\frac{3}{2}}\tfrac{\partial}{\partial \lambda} G(s) \right| \leq C \quad \textrm{for} \quad (s,\lambda)\in [0,T]\times [1,\infty)
\end{align}
Finally, \eqref{G_limits} and \eqref{G_bounds} enable us to apply the dominated convergence theorem to \eqref{G_lambda_exp}:
\begin{align}
    \lim_{\lambda\to \infty}\sqrt{\lambda}  \left( \tfrac{\partial}{\partial \lambda} \left( G(t) + \lambda \int_{t}^{T} G(s) ds \right) \right)  
     =  \tfrac{\sigma}{2\sqrt{2}} \, y_\infty(1-y_\infty)(T-t)>0.
\end{align}
Therefore, we conclude \eqref{value_asymptotic_goal} and the proof is done.
\end{proof}

\begin{remark}
\proref{prop_comapre_frictions} implies that the effects of the transaction costs are more pronounced (more widening effect of the no-trade region and more diminishing effect of the value function) in the market with less search frictions. For better understanding of the result, it would be helpful to observe that the following two extreme cases ($\lambda=0$ and $\lambda=\infty$) are consistent with the result:  

(i) If we consider an extreme case of $\lambda=0$ (i.e., no trading opportunity), the transaction costs do not affect the investor anyway. In words, in the market with very severe search frictions, the effects of the transaction costs are negligible.  

(ii) If we consider an extreme case of $\lambda=\infty$ (i.e., continuous trading opportunities), it is well known that the width of the no-trade region is the order of $\epsilon^{\frac{1}{3}}$ and the decrease of the value function is the order of $\epsilon^{\frac{2}{3}}$. In contrast, when $\lambda<\infty$, \thmref{asymptotic_y} and \thmref{asymptotic_v} say that both the width of the no-trade region and the decrease of the value function are the order of $\epsilon$. Obviously, for small enough $\epsilon$, we have $\epsilon<\epsilon^{\frac{1}{3}}$ and $\epsilon<\epsilon^{\frac{2}{3}}$. This implies that in the market with the search frictions ($\lambda<\infty$), the effects of the transaction costs are less pronounces, compared to the market with no search friction ($\lambda=\infty$).
\end{remark}

\section{Conclusion}
This paper presents an utility maximization problem of the terminal wealth, in a market with two different types of the illiquidity: search frictions and transaction costs. We show the existence of the solution to the HJB equation that is regular enough, and provide the verification argument. The optimal trading strategy is characterized by a no-trade region, where the boundary points of the no-trade region are uniquely determined by the strict concavity of the value function. In Proposition 4.4, we provide some conditions under which the investor wants to achieve zero/positive stock/bond holdings.
We provide the asymptotic expansions of the boundaries of the no-trade region and the value function for small transaction costs. We further show that reduction of the search frictions amplifies the effects of the transaction costs (more widening effect of the no-trade region and more diminishing effect of the value function).

As a future research, we plan to investigate a joint limiting behavior of the no-trade region and the value function for small transaction costs and large arrival rate. For that purpose, we think it could be useful to 
consider some stationary versions of the problem with a hope to obtain explicit formulas independent of the time variable.

\section*{Acknowledgement}
This work was supported by the National Research Foundation of Korea (NRF) grant funded by the Korea government (MSIT) (No. 2020R1C1C1A01014142 and No. 2021R1I1A1A01050679).

\bibliographystyle{siam}  
\bibliography{reference}        


\appendix

\section{Proof of \lemref{PDE_existence}}

Since the parabolic type PDE \eqref{hjb_v} is not uniformly elliptic, we change variable as $x=h(z):=\frac{e^z}{1+e^{z}}$ and consider the PDE for $v(t,h(z))$.
 
 To handle the nonlinear term in the PDE, we first consider the following map $\phi$ from $C_b([0,T]\times \R)$ (equipped with the uniform norm) to itself:
\begin{align}\label{phi_def}
\phi(f)(t,z):=\int_t^T e^{-\lambda(s-t)} \EE \left[ K_f(s,Z_s^{(t,z)}) \right] ds,
\end{align}
where for $(s,z)\in [t,T]\times \R$,
\begin{equation}
\begin{split}
Z_s^{(t,z)}&:=z+ \left(\mu-r-\tfrac{\sigma^2}{2} \right)(s-t) + \sigma(B_s-B_t),\\
K_f(s,z)&:=  (\mu-r)h(z)+r-\tfrac{1}{2}\sigma^2 h(z)^2 \\
& \qquad + \lambda   \sup_{\zeta\in \R}  \left( f(s,\zeta)-  \ln \left(\tfrac{1+\be h(\zeta)}{1+\be h(z)} \right) 1_{\{z<\zeta\}} -\ln \left(\tfrac{1-\ue h(\zeta)}{1-\ue h(z)} \right) 1_{\{z>\zeta\}}    \right).
\end{split}
\end{equation} 
We observe that for $f,g \in C_b([0,T]\times \R)$, 
\begin{equation}
\begin{split}
\| \phi(f)-\phi(g) \|_\infty & \leq \lambda \int_t^T e^{-\lambda(s-t)} \left( \sup_{\zeta \in \R} \left| f(s,\zeta)-g(s,\zeta)  \right|  \right) ds\\
& \leq \left(1- e^{-\lambda T} \right) \| f-g \|_\infty,
\end{split}
\end{equation} 
where the first inequality is due to the triangular inequality for supremum. Therefore, the map $\phi$ in \eqref{phi_def} is a contraction map and there exists a unique $\hat u\in C_b([0,T]\times \R)$ such that $\phi(\hat u)=\hat u$, by the Banach fixed point theorem. 

\medskip

{\bf Claim}: For all $\delta\in (0,1)$, $K_{\hat{u}}\in C^{\frac{\delta}{2},\delta}([0,T]\times \R)$.

(Proof of Claim):
We first check that $K_{\hat u}$ is bounded. Since $0<h(z)<1$, we observe that 
\begin{align}\label{K_bound}
|K_{\hat u}(t,z) | \leq   |\mu|+2|r|  +\tfrac{\sigma^2}{2}+ \lambda \left( \| \hat u \|_\infty + \ln \left(\tfrac{1+\be}{1-\ue} \right) \right),
\end{align}
i.e., $\| K_{\hat u}\|_\infty<\infty$. Therefore, to prove the claim, it is enough to check that $K_{\hat u}$ is uniformly Lipschitz with respect to $z$ variable and $\tfrac{1}{2}$-H\"older continuous with respect to $t$ variable.
For $\Delta>0$, we obtain the following inequalities:
\begin{equation}
\begin{split}\label{K_Lip}
&|K_{\hat u}(t,z+\Delta)- K_{\hat u}(t,z) | \\
&\leq   (|\mu-r|+\sigma^2)\Delta + \lambda \sup_{\zeta\in \R} \left| 1_{\{z+\Delta<\zeta\}}\cdot  \ln\left(\tfrac{1+\be h(z+\Delta)}{1+\be h(z)}\right) + 1_{\{z<\zeta \leq z+\Delta \}}  \cdot \ln\left(\tfrac{1+\be h(\zeta)}{1+\be h(z)} \right) \right| \\
& \quad +  \lambda \sup_{\zeta\in \R} \left| 1_{\{z+\Delta>\zeta\}} \cdot \ln\left(\tfrac{1-\ue h(z+\Delta)}{1-\ue h(z)}\right) - 1_{\{z\leq \zeta < z+\Delta \}} \cdot \ln\left(\tfrac{1-\ue h(\zeta)}{1-\ue h(z)} \right) \right|\\
& \leq  \left(|\mu-r|  +\sigma^2 + \lambda \left( \be + \tfrac{\ue}{1-\ue}   \right) \right) \Delta,
\end{split}
\end{equation} 
where we used the mean value theorem and the bounds $0<h<1$ and $0<h'<1$. We can treat $\Delta<0$ by the same way, and conclude that $K_{\hat{u}}$ is uniformly Lipschitz with respect to $z$. 

For $\Delta>0$, using \eqref{K_Lip} and the mean value theorem, we observe that
\begin{equation}
\begin{split}\label{K_delta}
&\EE\left| e^{\lambda \Delta} K_{\hat u}(s,Z_s^{(t+\Delta,z)})-K_{\hat u}(s,Z_s^{(t,z)})  \right|  \\
&\leq \lambda e^{\lambda T} \| K_{\hat u}\|_\infty \Delta +   \EE\left| K_{\hat u}(s,Z_s^{(t+\Delta,z)})-K_{\hat u}(s,Z_s^{(t,z)})  \right|   \\
& \leq \lambda e^{\lambda T}\| K_{\hat u}\|_\infty \Delta +   \left(|\mu-r| +\sigma^2 + \lambda \left( \be + \tfrac{\ue}{1-\ue}   \right) \right)  \EE \left|  Z_s^{(t+\Delta,z)}-Z_s^{(t,z)}  \right| \\
&\leq \lambda e^{\lambda T}\| K_{\hat u}\|_\infty \Delta +   \left(|\mu-r| +\sigma^2 + \lambda \left( \be + \tfrac{\ue}{1-\ue}   \right) \right) \left( \left|\mu-r-\tfrac{\sigma^2}{2}\right|\Delta  + \sigma \sqrt{\tfrac{2}{\pi}} \, \Delta^{\frac{1}{2}}    \right),
\end{split}
\end{equation} 
where the last inequality is due to 
$$
\EE \left|  Z_s^{(t+\Delta,z)}-Z_s^{(t,z)}  \right| \leq  \left|\mu-r-\tfrac{\sigma^2}{2}\right|\Delta  + \sigma\, \EE \left| B_{t+\Delta}-B_t \right|
 = \left|\mu-r-\tfrac{\sigma^2}{2}\right|\Delta  + \sigma \sqrt{\tfrac{2}{\pi}} \, \Delta^{\frac{1}{2}}.
$$
Using $\hat u= \phi(\hat u)$ and triangular inequality, we obtain
\begin{displaymath}
\begin{split}
&|K_{\hat u}(t+\Delta,z)- K_{\hat u}(t,z) |\leq \lambda \sup_{\zeta \in R} \left|  \hat u(t+\Delta,\zeta)-\hat u(t,\zeta) \right |\\
&=\lambda \sup_{\zeta\in \R} \left | \int_{t+\Delta}^T e^{-\lambda (s-t-\Delta)}  \EE\left[  K_{\hat u}(s,Z_s^{(t+ \Delta,\zeta)}) \right] ds -\int_{t}^T e^{-\lambda (s-t)}  \EE \left[ K_{\hat u}(s,Z_s^{(t,\zeta)}) \right] ds \right|\\
&\leq \lambda \sup_{\zeta\in \R} \bigg( \int_{t+\Delta}^T e^{-\lambda (s-t)} \EE\left| e^{\lambda \Delta} K_{\hat u}(s,Z_s^{(t+ \Delta,\zeta)})-K_{\hat u}(s,Z_s^{(t,\zeta)}) \right| ds  \\
&\qquad \qquad\qquad\qquad\qquad\qquad \qquad +\int_t^{t+\Delta} e^{-\lambda (s-t)} \EE \left| K_{\hat u}(s,Z_s^{(t,\zeta)}) \right| ds \bigg) \\
&\leq C \Delta^{\frac{1}{2}},
\end{split}
\end{displaymath} 
where the generic constant $C$ only depends on the market parameters and $\|\hat u\|_{\infty}$ and the last inequality is due to \eqref{K_bound} and \eqref{K_delta}. We can treat $\Delta<0$ by the same way, and conclude that $K_{\hat u}$ is $\tfrac{1}{2}$-H\"older continuous with respect to $t$ variable. \\
(End of the proof of Claim).

\medskip

Let $\delta\in (0,1)$ be fixed. Then, the above claim and Theorem 9.2.3 in \cite{krylov1996lectures} ensure that there exists a unique function $u\in C^{1+\frac{\delta}{2},2+\delta}((0,T)\times \R)$ such that it satisfies the following PDE on $(t,z)\in (0,T)\times \R$.
\begin{equation}\label{u_pde}
\begin{split}
\begin{cases}
0=u(T,z),\\
0= u_t  + (\mu-r-\frac{\sigma^2}{2} )u_z+\frac{1}{2} \sigma^2 u_{zz} - \lambda u + K_{\hat u}
\end{cases}
\end{split}
\end{equation} 
Since $u$ admits a unique continuous extension on $[0,T]\times \R$ (i.e., see chapter 8.5 in \cite{krylov1996lectures}), we let $u\in C^{1+\frac{\delta}{2},2+\delta}([0,T]\times \R)$. By the Feynman-Kac formula (i.e., see Theorem 5.7.6 in \cite{karatzas2014brownian}), the solution $u$ of the parabolic PDE \eqref{u_pde} has the stochastic representation $u=\phi(\hat u)$, where $\phi$ is defined in \eqref{phi_def}. Since $\hat u$ is chosen as the unique fixed point of the map $\phi$, we conclude that $u=\hat u$.

Our next task is to define $u(t,\pm \infty)$ for $t\in [0,T]$. Using $\lim_{z\to \infty}h(Z_s^{(t,z)})=1$ and $\lim_{z\to -\infty}h(Z_s^{(t,z)})=0$ almost surely, we obtain
\begin{equation}\label{K_limit}
\begin{split}
\begin{cases}
\lim_{z\to \infty} K_u(s,Z_s^{(t,z)}) = \mu-\tfrac{1}{2}\sigma^2 + \lambda   \sup_{\zeta\in \R}  \left( u(s,\zeta) -\ln \left(\frac{1-\ue h(\zeta)}{1-\ue } \right)   \right) \\
\lim_{z\to -\infty} K_u(s,Z_s^{(t,z)}) = r+ \lambda   \sup_{\zeta\in \R}  \left( u(s,\zeta)-  \ln \left(1+\be h(\zeta) \right)  \right) 
\end{cases}  a.s.
\end{split}
\end{equation} 
The above convergence and $\| K_u\|_\infty<\infty$ enable us to apply the dominated convergence theorem: 
\begin{displaymath}
\begin{split}
\lim_{z\to \pm \infty} u(t,z) 
&=\lim_{z\to \pm\infty}  \phi(u)(t,z)=\lim_{z\to \pm\infty} \int_t^T e^{-\lambda(s-t)} \EE \left[ K_u(s,Z_s^{(t,z)}) \right] ds\\
&=
\begin{cases}
\int_t^T e^{-\lambda(s-t)}  \left(  \mu-\tfrac{1}{2}\sigma^2 + \lambda   \sup_{\zeta\in \R}  \left( u(s,\zeta) -\ln \left(\frac{1-\ue h(\zeta)}{1-\ue } \right)   \right)   \right) ds, &\textrm{for $z\to \infty$}\\
\int_t^T e^{-\lambda(s-t)}  \left(   r+ \lambda   \sup_{\zeta\in \R}  \left( u(s,\zeta)-  \ln \left(1+\be h(\zeta) \right)  \right)   \right) ds, &\textrm{for $z\to -\infty$}\
\end{cases}.
\end{split}
\end{displaymath} 
Therefore, we can continuously extend $u$ to $z=\pm \infty$ and $u(t,\infty)$ and $u(t,-\infty)$ are defined by the above limit. We observe that for $z\in \{ \infty , -\infty\}$, $u(t,z)$ satisfies
\begin{equation}\label{u01_pde}
\begin{split}
0&=u_t(t,z) + (\mu-r)h(z)+r-\tfrac{1}{2}\sigma^2 h(z)^2 - \lambda u(t,z) \\
&\qquad+ \lambda   \sup_{\zeta\in \R}  \left( u(t,\zeta)-  \ln \left(\tfrac{1+\be h(\zeta)}{1+\be h(z)} \right) 1_{\{z<\zeta\}} -\ln \left(\tfrac{1-\ue h(\zeta)}{1-\ue h(z)} \right) 1_{\{z>\zeta\}}    \right),
\end{split}
\end{equation}
where the function $h$ is continuously extended as $h(\infty):=1$ and $h(-\infty):=0$.

Now we define $v$ as $v(t,x):=u(t,h^{-1}(x))$ for $(t,x)\in [0,T]\times [0,1]$. Such $v$ is well-defined because $h:\R\cup \{ \infty, -\infty\} \to [0,1]$ is bijective. We observe that for $(t,x)\in (0,T)\times (0,1)$ and $z=h^{-1}(x)$,
\begin{equation}\label{change_variable}
\begin{split}
v_t(t,x)&=u_t(t,z),\\
x(1-x)v_x(t,x)&=u_z(t,z),\\
x^2(1-x)^2 v_{xx}(t,x)&=u_{zz}(t,z)-(1-2x)u_z(t,z).
\end{split}
\end{equation}

The PDE for $u$ in \eqref{u_pde} with $\hat u$ replaced by $u$ and the equalities in \eqref{change_variable} produce the PDE for $v$, which is \eqref{hjb_v}. Therefore, statement (i) is valid. 

To check statement (ii), we observe that $v(t,0)=u(t,-\infty)$ and $v(t,1)=u(t,\infty)$. Then, the continuous differentiability of $v(t,0)$ and $v(t,1)$ with respect to $t$ is followed by that of $u(t,-\infty)$ and $u(t,\infty)$, and \eqref{u01_pde} produces \eqref{hjb_v01}.

Finally, statement (iii) is a direct consequence of \eqref{change_variable} and $u\in C^{1+\frac{\delta}{2},2+\delta}([0,T]\times \R)$.

\section{Proof of \lemref{tech_asymptotitcs}}

(i) When $\epsilon=0$, \lemref{uyoy_lem} and \eqref{y0} produce \eqref{vx0_bound}. 

To prove \eqref{gxx>0_all}, we first check that $Y_s^{(t,x)}$ in \eqref{YL_def} safisfies 
\begin{align}\label{Y_SDE}
dY_s^{(t,x)}=Y_s^{(t,x)} (1-Y_s^{(t,x)}) \left( (\mu-r-\sigma^2 Y_s^{(t,x)})ds + \sigma dB_s  \right).
\end{align}
Then application of Ito's formula produces that for $(s,x)\in [t,T) \times (0,1)$, 
\begin{align}
\left(Y_s^{(t,x)}-x\right)^2&= \int_t^s  Y_u^{(t,x)} (1-Y_u^{(t,x)})\left( 2(Y_u^{(t,x)}-x )( \mu-r-\sigma^2 Y_u^{(t,x)}) + \sigma^2 Y_u^{(t,x)} (1-Y_u^{(t,x)}) \right) du \nonumber \\
&\qquad +  \int_t^s 2\sigma (Y_u^{(t,x)}-x ) Y_u^{(t,x)} (1-Y_u^{(t,x)})  dB_u. \nonumber
\end{align}
Since $0<Y_s^{(t,x)}<1$, the local martingale part above is a true martingale and we obtain
\begin{align}
&\tfrac{\partial}{\partial s} \left( \EE\left[ \left(Y_s^{(t,x)}-x\right)^2 \right] \right) \nonumber \\
&=\EE \left[ Y_s^{(t,x)} (1-Y_s^{(t,x)})\left( 2(Y_s^{(t,x)}-x )( \mu-r-\sigma^2 Y_s^{(t,x)}) + \sigma^2 Y_s^{(t,x)} (1-Y_s^{(t,x)}) \right)\right]    \nonumber \\
&=-x^2(1-x)^2 \cdot \EE \left[ \tfrac{\partial}{\partial x}\left( \tfrac{Y_s^{(t,x)} \left(1-Y_s^{(t,x)}\right) \left( \mu-r-\sigma^2 Y_s^{(t,x)}\right)}{x(1-x)} \right)  \right], \label{derivative_change}
\end{align}
where the second equality is from elementary computations using the definition of $Y_s^{(t,x)}$ in \eqref{YL_def}.

For $x\in (0,1)$, we observe that 
\begin{align}\label{yyy}
\tfrac{Y_s^{(t,x)}-x}{x(1-x)} = \tfrac{A^{(t,s)}-1}{A^{(t,s)} x+ (1-x)} \quad \textrm{with} \quad A^{(t,s)}:=e^{(\mu-r-\frac{\sigma^2}{2})(s-t)+\sigma (B_s-B_t)},
\end{align}
and the above expression is decreasing in $x$, therefore,
\begin{align}\label{yyy_bound}
1- \tfrac{1}{A^{(t,s)}}<\tfrac{Y_s^{(t,x)}-x}{x(1-x)}<A^{(t,s)}-1 \quad \textrm{for} \quad 0<x<1.
\end{align}

When $\epsilon=0$, the representation of $v_x$ in \eqref{vx_exp} becomes
\begin{align}\label{vx0_exp}
v_x^0(t,x)=\int_t^T e^{-\lambda (s-t)} \EE \left[\tfrac{Y_s^{(t,x)}\left(1-Y_s^{(t,x)}\right)\left( \mu-r-\sigma^2 Y_s^{(t,x)}  \right)}{x(1-x)}  \right] ds,
\end{align}
We take derivative with respect to $x$ in the above expression. Then, the mean value theorem and the dominated convergence theorem, together with the inequalities \eqref{YD_bound} and \eqref{yyy_bound}, allow us to take derivative inside of the expectation and obtain that for $(t,x)\in [0,T)\times (0,1)$,
\begin{align}
v_{xx}^0(t,x)&=\int_t^T e^{-\lambda (s-t)} \EE \left[ \tfrac{\partial}{\partial x}\left( \tfrac{Y_s^{(t,x)} \left(1-Y_s^{(t,x)}\right) \left( \mu-r-\sigma^2 Y_s^{(t,x)}\right)}{x(1-x)} \right)  \right] ds \label{vxx0_rep}  \\
& =- \int_t^T e^{-\lambda (s-t)}  \tfrac{\partial}{\partial s} \left( \EE\left[ \left(\tfrac{Y_s^{(t,x)}-x}{x(1-x)}\right)^2 \right] \right) \, ds  \nonumber \\
&=- e^{-\lambda(T-t)} \, \EE \left[ \left(\tfrac{Y_T^{(t,x)}-x}{x(1-x)}\right)^2 \right] - \lambda \int_t^T e^{-\lambda (s-t)}  \, \EE\left[ \left(\tfrac{Y_s^{(t,x)}-x}{x(1-x)}\right)^2 \right]  \, ds,  \label{vxx0_exp}
\end{align}
where the second equality is due to \eqref{derivative_change}, and the third equality is from  integration by parts. Obviously \eqref{vxx0_exp} implies that $v_{xx}^0(t,x)<0$ for $(t,x)\in [0,T)\times (0,1)$. 

To conclude \eqref{gxx>0_all}, it only remains to check that $\lim_{x\uparrow 1} v_{xx}^0(t,x)<0$ and $\lim_{x\downarrow 0} v_{xx}^0(t,x)<0$. 
Indeed, \eqref{yyy} and \eqref{yyy_bound} enable us to apply the dominated convergence theorem to \eqref{vxx0_exp} and obtain 
\begin{align}
\lim_{x\uparrow 1} v_{xx}^0(t,x)&= - e^{-\lambda(T-t)} \, \EE \Big[ \left(1- \tfrac{1}{A^{(t,T)}}\right)^2 \Big] - \lambda \int_t^T e^{-\lambda (s-t)}  \, \EE \Big[ \left(1- \tfrac{1}{A^{(t,s)}}\right)^2 \Big] ds ,\nonumber\\
\lim_{x\downarrow 0} v_{xx}^0(t,x)&= - e^{-\lambda(T-t)} \, \EE \Big[ \left(A^{(t,T)}-1\right)^2 \Big] - \lambda \int_t^T e^{-\lambda (s-t)}  \, \EE \Big[  \left(A^{(t,s)}-1\right)^2 \Big] ds,\nonumber
\end{align}
and we conclude that $\lim_{x\uparrow 1} v_{xx}^0(t,x)<0$ and $\lim_{x\downarrow 0} v_{xx}^0(t,x)<0$.

\medskip

(ii) The SDE for $Y_s^{(t,x)}$ in \eqref{Y_SDE} and $0<Y_s^{(t,x)}<1$ imply that for $(s,x)\in [t,T) \times (0,1)$, 
\begin{equation}\label{EYs_exp}
\tfrac{\partial}{\partial s} \left(  \EE\left[ Y_s^{(t,x)} \right] \right) = \EE \left[ Y_s^{(t,x)} \left( 1- Y_s^{(t,x)}\right) \left( \mu - r - \sigma^2 Y_s^{(t,x)}  \right)   \right]
\end{equation}
We divide both sides of \eqref{EYs_exp} by $x(1-x)$ and take derivative with respect to $x$. Then, we can put the derivative inside of the expectation as in the proof of part (i), and obtain
\begin{align}
 \EE \left[ \tfrac{\partial}{\partial x} \left( \tfrac{Y_s^{(t,x)} \left( 1- Y_s^{(t,x)}\right) \left( \mu - r - \sigma^2 Y_s^{(t,x)}  \right) }{x(1-x)} \right)  \right]&=\tfrac{\partial}{\partial s} \left(  \EE\left[\tfrac{\partial}{\partial x} \left(\tfrac{Y_s^{(t,x)} }{x(1-x)}\right) \right] \right) \nonumber\\
 &=\tfrac{\partial}{\partial s} \left(  \tfrac{  \EE\left[\frac{\partial}{\partial x}Y_s^{(t,x)} \right] }{x(1-x)} - \tfrac{(1-2x)\EE\left[Y_s^{(t,x)} \right]}{x^2(1-x)^2} \right).\nonumber
\end{align}
We rearrange the above equation and use \eqref{EYs_exp}, \eqref{vx0_exp}, and \eqref{vxx0_exp} to obtain
\begin{align}\label{F_conversion}
\int_t^T e^{-\lambda(s-t)} \tfrac{\partial}{\partial s} \left( \EE \left[ \tfrac{\partial}{\partial x} Y_s^{(t,x)} \right]\right) ds = x(1-x)v_{xx}^0(t,x) + (1-2x) v_x^0(t,x).
\end{align}
Now we conclude that $F(t, \hat y^0(t))<1$ by the following way:
\begin{align}
F(t, \hat y^0(t))&\leq \lambda \int_t^T  e^{-\lambda(s-t)} \EE \left[ \tfrac{\partial}{\partial x} Y_s^{(t,x)} \right] ds \, \Big|_{x=\hat y^0(t)} \nonumber\\
&=\left(1- e^{-\lambda(T-t)} \EE \left[ \tfrac{\partial}{\partial x} Y_T^{(t,x)} \right] + x(1-x)v_{xx}^0(t,x) + (1-2x) v_x^0(t,x) \right) \Big|_{x=\hat y^0(t)} \nonumber\\ 
&<1, \nonumber
\end{align}
where the first inequality is from the definition of $F$ in \eqref{F_def} and the positivity of $\frac{\partial}{\partial x} Y_s^{(t,x)}$ (see \eqref{YD_bound}), and the equality is due to integration by parts and \eqref{F_conversion}, and the last inequality is due to the positivity of $\frac{\partial}{\partial x} Y_T^{(t,x)}$, $v_x^0(t,\hat y^0(t))=0$, and $v_{xx}^0(t,\hat y^0(t))<0$.

We can check that $F(t, \hat y^0(t))>-1$ by the same way as above.

\medskip

(iii) The expression in \eqref{vx_exp} produces  
\begin{equation}
\begin{split}\label{vx-v0_conv}
v_{x}^\epsilon(t,x_\epsilon)-v_x^0(t,x_\epsilon)   &=    \lambda  \int_t^T e^{-\lambda (s-t)} \EE \left[ \left( \tfrac{\partial}{\partial x} Y_s^{(t,x_\epsilon)}  \right) L_y^\epsilon(s, Y_s^{(t,x_\epsilon)} )  \right] ds.
\end{split}
\end{equation}
In the above expression, when we take limit as $\epsilon\downarrow 0$, the inequalities \eqref{Lx_bound} and \eqref{YD_bound} enable us to use the dominated convergence theorem to conclude that 
$$\lim_{\epsilon\downarrow 0} \left( v_{x}^\epsilon(t,x_\epsilon)-v_x^0(t,x_\epsilon) \right)=0.$$
The above limit and the continuity of $v_{x}^0$ implies \eqref{vx_uni_conv}.

To prove \eqref{vxx_uni_conv}, we first observe that for $(t,x)\in [0,T)\times (0,1)$ and $(s,z)\in (t,T)\times (0,1)$, the density function for $Y_s^{(t,x)}$ is given by
\begin{equation}
\begin{split}\label{Y_density}
\varphi(s,z;t,x)&:= \tfrac{\partial}{\partial z} \PP\left( Y_s^{(t,x)} \leq z \right) =\tfrac{\exp \left( -\frac{1}{2\sigma^2 (s-t)} \left( (r-\mu+\frac{\sigma^2}{2})(s-t) + \ln \left(\frac{z(1-x)}{(1-z)x} \right) \right)^2  \right)}{\sigma z(1-z)\sqrt{2\pi (s-t)}}.
\end{split}
\end{equation}
Then, the expression in \eqref{vx_exp} and $\frac{\partial}{\partial x} Y_s^{(t,x)}=\frac{Y_s^{(t,x)}(1-Y_s^{(t,x)})}{x(1-x)}$ imply that
\begin{equation}
\begin{split}\label{vx-v0}
v_x^\epsilon(t,x)-v_x^0(t,x)   &=    \lambda  \int_t^T e^{-\lambda (s-t)} \EE \left[  \tfrac{Y_s^{(t,x)}\left(1-Y_s^{(t,x)}\right)}{x(1-x)} L_y^\epsilon(s, Y_s^{(t,x)} )  \right] ds\\
&=\lambda  \int_t^T e^{-\lambda (s-t)} \left( \int_0^1 \tfrac{z(1-z)}{x(1-x)} L_y^\epsilon(s, z) \varphi(s,z;t,x) dz  \right) ds.
\end{split}
\end{equation}
For $x\in (0,1)$, direct computations produce
\begin{equation}
\begin{split}\label{density_derivative}
\tfrac{\partial}{\partial x} \left ( 
\tfrac{z(1-z)}{x(1-x)} \varphi(s,z;t,x)
\right)
= 
\tfrac{(1-z)z \left( r-\mu + (2x-\frac{1}{2})\sigma^2 + \frac{1}{s-t} \ln \left(\frac{z(1-x)}{(1-z)x} \right)  \right) \varphi(s,z;t,x) }{(1-x)^2 x^2 \sigma^2} .
 \end{split}
\end{equation}
 \assref{assumption} implies that 
\begin{align}\label{para_ineq}
\left| r-\mu + \tfrac{\sigma^2}{2}  \right| \leq \sigma^2, \quad \left| r-\mu + (2x-\tfrac{1}{2})\sigma^2  \right| \leq 2\sigma^2 \quad  \textrm{for} \quad x\in (0,1).
\end{align} 
 
Then, \eqref{Y_density} and \eqref{density_derivative} produce the following:
\begin{equation}
\begin{split}\label{nasty1}
& \int_0^1\left| \frac{\partial}{\partial x} \left ( 
\tfrac{z(1-z)}{x(1-x)} \varphi(s,z;t,x) 
\right) \right| \, dz\\
&\leq  
\int_0^1\tfrac{\left( 2\sigma^2 + \frac{1}{s-t} \left| \ln \left(\tfrac{z(1-x)}{(1-z)x} \right) \right|  \right)}{\sigma^3(1-x)^2 x^2  \sqrt{2\pi (s-t)}}\exp \left( -\tfrac{\left(\ln \left(\frac{z(1-x)}{(1-z)x} \right) \right)^2}{2\sigma^2 (s-t)} + \left|\ln \left(\tfrac{z(1-x)}{(1-z)x} \right) \right|\right) \, dz \\
&= 
 \int_{-\infty}^\infty\tfrac{\left( 2\sigma^2 + \frac{1}{s-t} \left| \zeta \right|  \right) e^{ -\frac{\zeta^2}{2\sigma^2 (s-t)} + \left|\zeta  \right|}}{\sigma^3(1-x)^2 x^2  \sqrt{2\pi (s-t)}} \cdot \tfrac{\frac{x}{1-x}e^\zeta}{(1+\frac{x}{1-x} e^\zeta)^2}  \, d\zeta \\
&\leq
 \int_{-\infty}^\infty\tfrac{\left( 2\sigma^2 + \frac{1}{s-t} \left| \zeta \right|  \right) e^{ -\frac{\zeta^2}{4\sigma^2 (s-t)} + 4\sigma^2(s-t)}}{\sigma^3(1-x)^3 x  \sqrt{2\pi (s-t)}}  \, d\zeta \\
&=\tfrac{2\sqrt{2} }{(1-x)^3 x}\left(  1 + \tfrac{1}{\sigma\sqrt{\pi (s-t)}} \right)e^{4\sigma^2(s-t)},
 \end{split}
\end{equation}
where the first inequality is due to \eqref{para_ineq}, and the first equality is obtained by change of variables as $\zeta=\ln \left(\frac{z(1-x)}{(1-z)x} \right)$. The second inequality is due to the inequality of arithmetic and geometric means, and the second equality is obtained by direct computations. 

By \eqref{nasty1}, we conclude that 
\begin{align}\label{integrable1}
\int_t^T e^{-\lambda(s-t) }\int_0^1\left| \tfrac{\partial}{\partial x} \left ( 
\tfrac{z(1-z)}{x(1-x)} \varphi(s,z;t,x) 
\right) \right| \, dz \, ds <\infty,
\end{align}
and the function $H:[0,T)\times (0,1) \to \R $ given by
\begin{align}
H(t,x):=\lambda \int_t^T e^{-\lambda(s-t) }\int_0^1 \tfrac{\partial}{\partial x} \left ( 
\tfrac{z(1-z)}{x(1-x)} \varphi(s,z;t,x) 
\right) L_y^\epsilon(s,z) \, dz \, ds 
\end{align}
is well-defined due to \eqref{integrable1} and the boundedness $| L_y^\epsilon |\leq \frac{\epsilon}{1-\epsilon}$ in \eqref{Lx_bound}. Then, \eqref{nasty1} implies that
\begin{align}\label{H_bound}
\left| H(t,x)\right| \leq \tfrac{2\sqrt{2}\lambda e^{4\sigma^2 T}}{(1-x)^3 x} \left(T + \tfrac{2\sqrt{T}}{\sigma \sqrt{\pi}} \right) \cdot \tfrac{\epsilon}{1-\epsilon} \quad \textrm{for} \quad (t,x)\in [0,T)\times (0,1).
\end{align}

Now, let's check that 
\begin{align}\label{H_vxx}
H(t,x)=v_{xx}^\epsilon(t,x)-v_{xx}^0(t,x).
\end{align}
Indeed, for $(t,x)\in [0,T)\times (0,1)$,
\begin{align}
H(t,x)&=\lim_{\delta\to 0} \frac{1}{\delta}\int_x^{x+\delta}H(t,\eta)d\eta \nonumber \\
&=\lim_{\delta\to 0}      \frac{1}{\delta} \,\, \lambda \int_t^T e^{-\lambda(s-t)} \int_0^1 \left(\tfrac{z(1-z)\varphi(s,z;t,x+\delta) }{(x+\delta)(1-x-\delta)} - \tfrac{z(1-z)\varphi(s,z;t,x) }{x(1-x)}  \right) L_y^\epsilon(s,z)   \, dz \, ds \nonumber\\
&=\lim_{\delta\to 0}      \frac{1}{\delta} \left( \left(v_x^\epsilon(t,x+\delta)-v_x^0(t,x+\delta)\right)-\left(v_x^\epsilon(t,x)-v_x^0(t,x)\right)\right)  \nonumber\\
&=v_{xx}^\epsilon(t,x)-v_{xx}^0(t,x), \nonumber
\end{align}
where the second equality is due to Fubini's theorem and the fundamental theorem of calculus, and the third equality is from \eqref{vx-v0}. 

Finally, we conclude \eqref{vxx_uni_conv} by the following observation:
\begin{align}
\limsup_{\epsilon \downarrow 0} \left| v_{xx}^\epsilon(t,x_\epsilon)-v_{xx}^0(t,x_0)\right| &\leq \limsup_{\epsilon \downarrow 0} \left| v_{xx}^\epsilon(t,x_\epsilon)-v_{xx}^0(t,x_\epsilon)\right|+ \limsup_{\epsilon \downarrow 0} \left| v_{xx}^0(t,x_\epsilon)-v_{xx}^0(t,x_0)\right| \nonumber\\
&\leq \limsup_{\epsilon \downarrow 0}\tfrac{2\sqrt{2}\lambda e^{4\sigma^2 T}}{(1-x_\epsilon)^3 x_\epsilon} \left(T + \tfrac{2\sqrt{T}}{\sigma \sqrt{\pi}} \right) \cdot \tfrac{\epsilon}{1-\epsilon} =0, \nonumber
\end{align}
where the second inequality is due to \eqref{H_bound}, \eqref{H_vxx}, and the continuity of $v_{xx}^0$.

\section{Proof of \lemref{tech_asymptotic}}

(i) This result holds due to equation (2.7) in \cite{matsumoto2006} and the inequality $|\hat{y}^0(t)-y_\infty| \leq 1$.

(ii) For convenience, we define a function $\Gamma:[0,T]\times (0,1) \to \R$ as
\begin{align}\label{Gamma_def}
\Gamma(t,x):=\EE \left[\tfrac{Y_t^{(0,x)}\left(1-Y_t^{(0,x)}\right)\left( \mu-r-\sigma^2 Y_t^{(0,x)}  \right)}{x(1-x)}  \right]. 
\end{align}
Note that $\Gamma$ does not depend on $\lambda$. The equations \eqref{vx0_exp} and \eqref{vxx0_rep} can be written as
\begin{align}
v_x^0(t,x)=\int_0^{T-t} e^{-\lambda s} \Gamma(s,x)ds, \quad v_{xx}^0(t,x)=\int_0^{T-t} e^{-\lambda s} \Gamma_x(s,x)ds
\end{align}
We can do the similar argument to obtain representations for $v_{xxx}^{0}$ and partial derivatives of $v_x^{0}$ and $v_{xx}^{0}$ with respect to $\lambda$, with the observation that $\Gamma, \Gamma_x, \Gamma_{xx}$ are continuous \& bounded maps on $[0,T]\times (0,1)$. The result is summarized as follows:
\begin{equation}
\begin{split}\label{various_rep}
&\tfrac{\partial}{\partial \lambda} v_x^{0}(t,x)=- \int_0^{T-t} e^{-\lambda s} s \Gamma(s,x)ds, \quad \tfrac{\partial}{\partial \lambda} v_{xx}^{0}(t,x)=-\int_0^{T-t} e^{-\lambda s}s \Gamma_x(s,x)ds \\
& v_{xx}^{0}(t,x)=\int_0^{T-t} e^{-\lambda s} \Gamma_x(s,x)ds,\qquad \quad v_{xxx}^{0}(t,x)=\int_0^{T-t} e^{-\lambda s} \Gamma_{xx}(s,x)ds. 
\end{split}
\end{equation}
Using $Y_t^{(t,x)}=x$, direct computations produce 
\begin{equation}
\begin{split}\label{Gamma_values}
&\Gamma(0,x)=\mu-r-\sigma^2 x, \quad \Gamma_x(0,x) = - \sigma^2, \quad \Gamma_{xx}(0,x)=0.
\end{split}
\end{equation}
 With \eqref{Gamma_values} and \eqref{y0_lambda_infinity}, we apply \lemref{integral_lambda_limit} to \eqref{various_rep} and conclude \eqref{various_limit1}.

(iii) The mean value theorem and $\eqref{vx0_bound}$ produce
\begin{displaymath}
\begin{split}
\hat y^{0,\lambda+\delta}(t) - \hat y^{0,\lambda}(t) = -  \tfrac{ v_x^{0,\lambda+\delta}(t,\hat y^{0,\lambda}(t)) - v_x^{0,\lambda}(t,\hat y^{0,\lambda}(t))}{v_{xx}^{0,\lambda+\delta}(t, z(\lambda,\delta))} \quad \textrm{for  $z(\lambda,\delta)$ between $\hat y^{0,\lambda}(t)$ and $\hat y^{0,\lambda+\delta}(t)$},
\end{split}
\end{displaymath}
where we specify the dependence on $\lambda$ for clarity. 
Since $\frac{\partial}{\partial \lambda} v_x^{0}$ exists (see \eqref{various_rep}), the above equality and \eqref{gxx>0_all} ensure that $\hat y^{0,\lambda}(t)$ is differentiable with respect to $\lambda$ and
\begin{equation}
\begin{split}\label{y_diff_lambda_derive}
 \tfrac{\partial}{\partial \lambda} \hat y^{0}(t) = - \frac{ \frac{\partial}{\partial \lambda} v_x^{0}(t,x)\big|_{x=\hat y^{0}(t)} }{ v_{xx}^{0}(t,\hat y^{0}(t))}.
\end{split}
\end{equation}
Observe that the bounds for $\frac{\partial}{\partial x}Y_t^{(0,x)}$ and $\frac{Y_t^{(0,x)}-x}{x(1-x)}$ in \eqref{YD_bound} and \eqref{yyy_bound} do not depend on the variable $x$. Therefore, the following convergence is uniform on $x \in (0,1)$:
\begin{align}
\Gamma_x(t,x)=-\EE \left[\sigma^2 \left(\tfrac{\partial}{\partial x}Y_t^{(0,x)}\right)^2 + 2\left( \mu-r-\sigma^2 Y_t^{(0,x)}  \right) \left( \tfrac{Y_t^{(0,x)}-x}{x(1-x)}\right) \tfrac{\partial}{\partial x}Y_t^{(0,x)} \right]  \xrightarrow[ t \downarrow 0]{} -\sigma^2. \nonumber
\end{align}
Hence, there is a constant $\tilde T\in (0,T)$ such that $\Gamma_x(t,x)\leq -\frac{\sigma^2}{2}$ for $(t,x)\in [0,\tilde T] \times (0,1)$. This observation and the expression of $v_{xx}$ in \eqref{various_rep} imply
\begin{align}\label{lambda_vxx_bound2}
\lambda v_{xx}(t,\hat{y}^0(t)) \leq -\tfrac{\sigma^2}{2} \left(1-e^{-\lambda(T-t)} \right)  \quad \textrm{for} \quad (t,x)\in [T-\tilde T, T) \times (0,1).
\end{align}
We obtain $\| \Gamma_{xt} \|_\infty<\infty$ by using Ito's formula with \eqref{Y_SDE} and the bounds \eqref{YD_bound} and \eqref{yyy_bound}. Then, \eqref{various_rep} and \eqref{Gamma_values}  imply
\begin{align}
\left| \lambda v_{xx}(t,\hat{y}^0 (t)) + \sigma^2 \right| &= \left| \lambda \int_0^{T-t} e^{-\lambda s} s \, \left( \tfrac{\Gamma_x(s,\hat{y}^0(t)) -\Gamma_x(0,\hat{y}^0(t))}{s}    \right) ds + \sigma^2 e^{-\lambda(T-t)} \right|   \nonumber\\
&\leq \| \Gamma_{xt} \|_\infty \left( \tfrac{1-e^{-\lambda(T-t)}}{\lambda} -(T-t) e^{-\lambda(T-t)} \right) + \sigma^2 e^{-\lambda(T-t)}. \nonumber
\end{align}
This implies that there exists a constant $\tilde \Lambda$ (may depend on $\tilde T$) such that
\begin{align}\label{lambda_vxx_bound}
 \lambda v_{xx}(t,\hat{y}^0 (t)) \leq -\tfrac{\sigma^2}{2} \quad \textrm{for} \quad (t,\lambda)\in [0,T-\tilde T] \times [\tilde \Lambda,\infty).
 \end{align}
Using \eqref{various_rep} and \eqref{Gamma_values}, we obtain
\begin{align}
&\left| \lambda^3 \tfrac{\partial}{\partial \lambda} v_x^{0}(t,x)\big|_{x=\hat y^{0}(t)} \right| \nonumber\\
&\leq \left|  \lambda^3 \int_0^{T-t} e^{-\lambda s} s^2 \left(\tfrac{\Gamma(s, \hat{y}^0(t))-\Gamma(0, \hat{y}^0(t))}{s}\right) ds \right| + \left| \sigma^2(y_\infty- \hat{y}^0(t))\lambda^3 \int_0^{T-t} e^{-\lambda s} s \,ds \right| \nonumber\\
&\leq C\left( 1-e^{-\lambda(T-t)} + \lambda(T-t)\big(1+\lambda(T-t) \big) e^{-\lambda(T-t)} \right) \quad \textrm{for} \quad (t,\lambda)\in [0,T]\times[1,\infty), \label{lambda3_vx_bound}
 \end{align}
 where the second inequality is due to $\| \Gamma_t \|_\infty<\infty$ and \eqref{y0_lambda_infinity}.
 
 From \eqref{y_diff_lambda_derive}, we obtain the boundedness of $\left|\lambda^2 \tfrac{\partial}{\partial \lambda} \hat y^{0}(t) \right|$: 
 \begin{align}
&\sup_{(t,\lambda)\in [0,T-\tilde T]\times [\tilde \Lambda, \infty)} \left|\lambda^2 \tfrac{\partial}{\partial \lambda} \hat y^{0}(t) \right| <\infty \quad \textrm{due to \eqref{lambda_vxx_bound}, \eqref{lambda3_vx_bound}, $\sup_{x>0} x(1+x) e^{-x}<\infty$,} \nonumber\\
&\sup_{(t,\lambda)\in [0,T-\tilde T]\times [1, \tilde \Lambda]} \left|\lambda^2 \tfrac{\partial}{\partial \lambda} \hat y^{0}(t) \right| <\infty \quad \textrm{due to the continuity on compact set},\nonumber\\
&\sup_{(t,\lambda)\in [T-\tilde T,T)\times [1,\infty)} \left|\lambda^2 \tfrac{\partial}{\partial \lambda} \hat y^{0}(t) \right| <\infty \quad \textrm{due to \eqref{lambda_vxx_bound2}, \eqref{lambda3_vx_bound}, $\sup_{x>0}\tfrac{ x(1+x) e^{-x}}{1-e^{-x}}<\infty$.}\nonumber
\end{align}
Therefore, we conclude \eqref{y_lambda_derivative_bound}.

(iv) The bounds \eqref{YD_bound} and \eqref{y_lambda_derivative_bound} enable us to use Leibniz integral rule to obtain
\begin{align}
\tfrac{\partial}{\partial \lambda}  \EE \left[ \left\vert Y_{s}^{(t, \hat{y}^{0}(t))} - \hat{y}^{0}(s) \right\vert \right] 
= \EE \left[ \left( \tfrac{\partial}{\partial \lambda}\hat{y}^0(t) \cdot \tfrac{\partial}{\partial x} Y_s^{(t,x)} \Big|_{x=\hat{y}^0(t)}- \tfrac{\partial}{\partial \lambda}\hat{y}^0(s) \right)  \cdot \sgn\left( Y_{s}^{(t, \hat{y}^{0}(t))} - \hat{y}^{0}(s) \right)   \right]. \nonumber
\end{align}
The above expression, together with the bounds \eqref{YD_bound} and \eqref{y_lambda_derivative_bound}, implies \eqref{g_lambda2_bound}.

(v) Explicit computations using the expression of $Y_s^{(t,x)}$ in \eqref{YL_def} produce
\begin{align}
\tfrac{\EE \left[ \left| Y_s^{(t, x)} - x \right| \right]}{\sqrt{s - t}}&= \int_\R \tfrac{x(1-x)}{x+(1-x) e^{(r-\mu+\frac{\sigma^2}{2})u^2 - \sigma u z}} \cdot  \tfrac{\big| 1-e^{(r-\mu+\frac{\sigma^2}{2})u^2 - \sigma u z}\big|}{u} \cdot \tfrac{e^{-\frac{z^2}{2}}}{\sqrt{2\pi}} dz \bigg|_{u=\sqrt{s-t}} \label{sqrt_g_exp}\\
&\leq \int_\R  \frac{e^{|r-\mu+\frac{\sigma^2}{2}| u^2 + \sigma u |z|} -e^{-|r-\mu+\frac{\sigma^2}{2}| u^2 - \sigma u |z|} }{u}   \cdot \tfrac{e^{-\frac{z^2}{2}}}{\sqrt{2\pi}} dz \bigg|_{u=\sqrt{s-t}} \nonumber\\
&= \frac{e^{(|r-\mu+\frac{\sigma^2}{2}|+\frac{\sigma^2}{2})u^2 }\left(1+\int_0^{\frac{\sigma u}{\sqrt{2}}}\frac{2 e^{-z^2}}{\sqrt{\pi}}dz \right)-e^{(-|r-\mu+\frac{\sigma^2}{2}|+\frac{\sigma^2}{2})u^2 }\left(1-\int_0^{\frac{\sigma u}{\sqrt{2}}}\frac{2 e^{-z^2}}{\sqrt{\pi}}dz \right)}{u} \Bigg|_{u=\sqrt{s-t}},  \nonumber
\end{align}
where we use $|1-e^a|\leq e^{|a|}-e^{-|a|}$ for $a\in \R$ to obtain the inequality.
The last expression converges to $\tfrac{2 \sqrt{2} \sigma}{\sqrt{\pi}}$ as $u \downarrow 0$, therefore, it is bounded on $u\in (0,\sqrt{T}]$. Then, we conclude \eqref{sqrt_g_bound}.

Observe that the integrand in \eqref{sqrt_g_exp} is bounded by $e^{C |z|} \cdot \tfrac{e^{-\frac{z^2}{2}}}{\sqrt{2\pi}}$ for a constant $C$ independent of $(u,x,z)\in (0,\sqrt{T}]\times (0,1)\times \R$. Since this bound is integrable with respect to $z$, we apply the dominated convergence theorem to \eqref{sqrt_g_exp} and obtain 
\begin{align}
  \lim_{s \downarrow t} \tfrac{\EE \left[ \left| Y_s^{(t, x)} - x \right| \right]}{\sqrt{s - t}} 
  &=   \int_\R x(1-x)\sigma |z|\cdot \tfrac{e^{-\frac{z^2}{2}}}{\sqrt{2\pi}} dz  = \sigma \sqrt{\tfrac{2}{\pi}}\, x(1-x). \nonumber
\end{align}
Therefore, we conclude \eqref{sqrt_g_lim}.

\section{Supplementary Lemmas}
\begin{lemma}\label{meas_lem}
Let $F:[0,T]\times [0,1]^2\to \R$ be a continuous function. We define $f:[0,T]\times[0,1]\to [0,1]$ as
\begin{align}
f(t,x):=\max \left\{ z: z\in \argmax_{y\in [0,1]} F(t,x,y)  \right\},
\end{align}  
then $f$ is upper semicontinuous (which is obviously Borel-measurable). 
\end{lemma}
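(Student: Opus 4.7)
The plan is to prove upper semicontinuity directly from the sequential definition: for every sequence $(t_n,x_n)\to (t,x)$ in $[0,T]\times [0,1]$, show that $\limsup_n f(t_n,x_n)\le f(t,x)$. This is the standard ``maximum of the argmax'' version of Berge's theorem in the compact setting.

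First, I would fix an arbitrary convergent sequence $(t_n,x_n)\to (t,x)$ and set $L:=\limsup_n f(t_n,x_n)\in [0,1]$. Passing to a subsequence (which does not affect the $\limsup$), I may assume $f(t_{n},x_{n})\to L$. Let $z_n:=f(t_n,x_n)\in [0,1]$, so that $z_n\to L$ and, by the definition of $f$,
\begin{equation*}
F(t_n,x_n,z_n)\ge F(t_n,x_n,y)\quad \text{for every } y\in[0,1].
\end{equation*}

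Next I would use the joint continuity of $F$ on the compact set $[0,T]\times [0,1]^2$. Sending $n\to\infty$ in the previous inequality, $F(t_n,x_n,z_n)\to F(t,x,L)$ and $F(t_n,x_n,y)\to F(t,x,y)$ for each fixed $y$, so
\begin{equation*}
F(t,x,L)\ge F(t,x,y)\quad \text{for every } y\in[0,1].
\end{equation*}
Hence $L\in \argmax_{y\in[0,1]} F(t,x,y)$. Since $f(t,x)$ is defined to be the maximum element of this $\argmax$ set (which is non-empty and compact by continuity of $F(t,x,\cdot)$ on $[0,1]$, so the maximum is attained), we conclude $L\le f(t,x)$. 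This is precisely $\limsup_n f(t_n,x_n)\le f(t,x)$, i.e.\ $f$ is upper semicontinuous. Borel-measurability is immediate, since the super-level sets $\{f\ge c\}$ of an upper semicontinuous function are closed.

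I do not expect any real obstacle here. The only point that requires a word is that $\argmax_{y\in[0,1]} F(t,x,y)$ is non-empty and closed (hence compact in $[0,1]$), so the maximum defining $f(t,x)$ is indeed attained and the definition is unambiguous; this follows from continuity of $y\mapsto F(t,x,y)$ on the compact interval $[0,1]$.
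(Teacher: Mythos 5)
Your proof is correct and follows essentially the same route as the paper's: take a convergent sequence, pass to a subsequence along which $f(t_n,x_n)$ converges, pass to the limit in the optimality inequality using joint continuity of $F$, conclude the limit lies in the argmax set, and hence is dominated by $f(t,x)$. The only cosmetic difference is that you keep $y$ arbitrary in the inequality $F(t_n,x_n,z_n)\ge F(t_n,x_n,y)$, whereas the paper specializes to $y=f(t_\infty,x_\infty)$; both yield the same conclusion.
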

\begin{proof}
This type of result is well-known (e.g., see p. 153 in \cite{bertsekas2004stochastic}), but we give a short proof here for the sake of self-containedness. 

Since $F$ is continuous, $ \argmax_{y\in [0,1]} F(t,x,y) $ is a nonempty closed subset of $[0,1]$, so the maximum element of $ \argmax_{y\in [0,1]} F(t,x,y) $ exists and $f$ is well-defined. Let $\{(t_n,x_n)\}_{n\in \N}\subset [0,T]\times [0,1]$ be a sequence converging to $(t_\infty,x_\infty)$ such that $\lim_{n\to \infty} f(t_n,x_n)$ exists. Then, by definition of $f$, 
\begin{align}
F(t_n,x_n,f(t_\infty,x_\infty))\leq F(t_n,x_n,f(t_n,x_n)).
\end{align}
We let $n\to \infty$ above and using the continuity of $F$ to obtain
\begin{align}
F(t_\infty,x_\infty,f(t_\infty,x_\infty)) \leq F(t_\infty,x_\infty,\lim_{n\to \infty} f(t_n,x_n)).
\end{align}
This implies that $\lim_{n\to \infty} f(t_n,x_n) \in  \argmax_{y\in [0,1]} F(t_\infty,x_\infty,y)$, and the definition of $f$ ensures $$f(t_\infty,x_\infty)\geq  \lim_{n\to \infty} f(t_n,x_n).$$
Therefore, $f$ is upper semicontinuous. 
\end{proof}

\begin{lemma}\label{integral_lambda_limit}
Let $f(s,x) : [0,t] \times (0,1) \rightarrow \R$ be a continuous function, and $g(\lambda): [1,\infty) \rightarrow (0,1)$ be a function satisfying $\lim_{\lambda \rightarrow \infty} g(\lambda)=x_\infty \in (0,1)$. Then, for $\alpha\in \{0,1,2,3,4\}$ and $t > 0$,
\begin{equation}
\begin{split}\label{lemma2_result}
&\lim_{\lambda \to \infty} \lambda^{\frac{\alpha}{2}+1} \int_0^{t} e^{-\lambda s}s^{\frac{\alpha}{2}} f(s,g(\lambda)) ds = c_\alpha \cdot f(0,x_\infty), 
\end{split}
\end{equation}
where $c_0=1$, $c_1=  \frac{\sqrt{\pi}}{2}$, $c_2=  1$, $c_3=  \frac{3\sqrt{\pi}}{4}$, $c_4=  2$. Also, there exists a constant $C$ such that
\begin{equation}\label{lemma2_result2}
 \lambda^{\frac{\alpha}{2}+1} \int_0^t e^{-\lambda s} s^{\frac{\alpha}{2}} ds \leq  C, \quad \textrm{for} \quad (t,\lambda,\alpha)\in [0,\infty)\times [1,\infty)\times \{0,1,2,3,4\}.
\end{equation}
\end{lemma}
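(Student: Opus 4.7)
The plan is to unify both claims by performing the substitution $u = \lambda s$ and reducing everything to the gamma function. Under this change of variables, $ds = du/\lambda$ and $s^{\alpha/2} = u^{\alpha/2}/\lambda^{\alpha/2}$, so the integrand telescopes:
\begin{align*}
\lambda^{\frac{\alpha}{2}+1} \int_0^{t} e^{-\lambda s}s^{\frac{\alpha}{2}} f(s,g(\lambda)) \, ds = \int_0^{\lambda t} e^{-u} u^{\frac{\alpha}{2}} f(u/\lambda, g(\lambda)) \, du.
\end{align*}
Both the upper limit $\lambda t \to \infty$ and the sampling point $(u/\lambda, g(\lambda)) \to (0, x_\infty)$ behave as desired in the $\lambda \to \infty$ limit.

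For the bound \eqref{lemma2_result2}, I would simply take $f\equiv 1$ in the calculation above and dominate the partial integral by the full one: $\int_0^{\lambda t} e^{-u} u^{\alpha/2} du \leq \int_0^\infty e^{-u} u^{\alpha/2} du = \Gamma(\alpha/2 + 1)$. Setting $C = \max_{\alpha \in \{0,1,2,3,4\}} \Gamma(\alpha/2+1)$ yields the claimed uniform bound.

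For the limit \eqref{lemma2_result}, I would invoke the dominated convergence theorem on the integrand $h_\lambda(u) := e^{-u} u^{\alpha/2} f(u/\lambda, g(\lambda)) \mathbf{1}_{[0,\lambda t]}(u)$. Since $x_\infty \in (0,1)$ and $g(\lambda) \to x_\infty$, I can fix a compact interval $[a,b] \subset (0,1)$ containing $x_\infty$ in its interior, and then $g(\lambda) \in [a,b]$ for all sufficiently large $\lambda$. Continuity of $f$ on the compact set $[0,t]\times [a,b]$ provides a uniform bound $M = \sup_{[0,t]\times[a,b]}|f|<\infty$, so $|h_\lambda(u)| \leq M e^{-u} u^{\alpha/2}$, which is integrable on $[0,\infty)$. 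Pointwise in $u$, the joint continuity of $f$ at $(0,x_\infty)$ together with $\mathbf{1}_{[0,\lambda t]}(u) \to 1$ yields $h_\lambda(u) \to f(0,x_\infty) e^{-u} u^{\alpha/2}$. Dominated convergence then delivers
\begin{align*}
\lim_{\lambda \to \infty}\int_0^{\lambda t} e^{-u} u^{\frac{\alpha}{2}} f(u/\lambda, g(\lambda)) \, du = f(0,x_\infty) \int_0^\infty e^{-u} u^{\frac{\alpha}{2}} du = f(0,x_\infty)\,\Gamma\!\left(\tfrac{\alpha}{2}+1\right).
\end{align*}

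Finally I would verify the explicit constants from the values of the gamma function: $\Gamma(1)=1$, $\Gamma(3/2)=\tfrac{\sqrt{\pi}}{2}$, $\Gamma(2)=1$, $\Gamma(5/2)=\tfrac{3\sqrt{\pi}}{4}$, and $\Gamma(3)=2$, matching $c_0,\dots,c_4$. No step seems to present a genuine obstacle; the only point that requires care is making sure the domination is uniform in $\lambda$, which is handled precisely by exploiting $x_\infty \in (0,1)$ (interior point) so that $f$ can be evaluated on a compact set bounded away from the boundary of $(0,1)$.
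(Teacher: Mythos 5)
Your proof is correct, and it takes a genuinely different and in fact cleaner route than the paper. The paper argues the limit \eqref{lemma2_result} directly by an $\eta$-$\delta$ estimate: it uses uniform continuity of $f$ to get $|f(s,g(\lambda))-f(0,x_\infty)|\le \eta$ on a small time window $[0,\delta]$ for $\lambda$ large, then splits $\int_0^t = \int_0^\delta + \int_\delta^t$, treating the second piece as exponentially negligible; for that decomposition to close, it also computes the weight integrals $\lambda^{\frac{\alpha}{2}+1}\int_0^t e^{-\lambda s}s^{\frac{\alpha}{2}}ds$ in closed form for each $\alpha\in\{0,1,2,3,4\}$, and the bound \eqref{lemma2_result2} is read off from those explicit expressions together with $\sup_{x>0}x^{\frac{\alpha}{2}}e^{-x}<\infty$. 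Your substitution $u=\lambda s$ collapses all five cases at once, replaces the explicit formulas by the single identity $\int_0^\infty e^{-u}u^{\frac{\alpha}{2}}du = \Gamma(\frac{\alpha}{2}+1)$, and replaces the $\eta$-$\delta$ splitting by dominated convergence with the dominating function $M e^{-u}u^{\alpha/2}$ obtained exactly as you say, from $x_\infty$ being interior to $(0,1)$ so that $g(\lambda)$ eventually lives in a compact $[a,b]\subset(0,1)$. The paper's route has the mild advantage of producing the explicit closed forms it records in \eqref{lambda_integrals} for reuse; yours is shorter, unified over $\alpha$, and makes the appearance of the gamma constants $c_\alpha$ transparent rather than something to verify by direct integration. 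Both are complete and rigorous; there is no gap in your argument.
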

\begin{proof}
Let $\eta>0$ be a given constant. The uniform continuity of $f$ on a compact set containing the point $(0,x_\infty)$, together with $\lim_{\lambda \rightarrow \infty} g(\lambda)=x_\infty \in (0,1)$, implies that there exists $\delta>0$ such that
\begin{align}\label{e-d}
\left| f(s,g(\lambda)) - f(0,x_\infty)  \right| \leq \eta \quad \textrm{for any} \quad (s,\lambda) \in [0,  \delta] \times \left[ \tfrac{1}{\delta}, \infty \right).
\end{align}
Simple computations and \eqref{e-d} produce
\begin{displaymath}
\begin{split}
\limsup_{\lambda \to \infty} \Big| \lambda \int_0^t e^{-\lambda s }  f(s,g(\lambda))  ds - & f(0,x_\infty) \Big|
= \limsup_{\lambda \to \infty} \bigg| \int_0^{\delta} \lambda e^{-\lambda s} (f(s,g(\lambda))  ds - f(0,x_\infty))ds \\
&  +\int_{ \delta}^t \lambda e^{-\lambda s} (f(s,g(\lambda))  ds - f(0,x_\infty))ds - e^{-\lambda t} f(0,x_\infty)  \bigg|  \leq \eta.
\end{split}
\end{displaymath}  
Since $\eta>0$ can be arbitrary small, we conclude \eqref{lemma2_result} for the case of $\alpha=0$. The other cases in \eqref{lemma2_result} can be obtained by the same way as above, using the following expressions:
\begin{equation}
\begin{split}\label{lambda_integrals}
&\lambda^{\frac{3}{2}} \int_0^t e^{-\lambda s} s^{\frac{1}{2}}   ds = -\sqrt{\lambda t} e^{-\lambda t} + \int_0^{\sqrt{\lambda t}} e^{-s^2} ds \xrightarrow[\lambda\to \infty]{} \frac{\sqrt{\pi}}{2}, \\
&\lambda^2 \int_0^t e^{-\lambda s}s \,ds =1 -(1+\lambda t) e^{-\lambda t}\xrightarrow[\lambda\to \infty]{} 1, \\
&\lambda^{\frac{5}{2}} \int_0^t e^{-\lambda s} s^{\frac{3}{2}}   ds = -\tfrac{(3+2\lambda t)\sqrt{\lambda t} e^{-\lambda t}}{2} + 
\frac{3}{2}\int_0^{\sqrt{\lambda t}} e^{-s^2} ds \xrightarrow[\lambda\to \infty]{} \frac{3\sqrt{\pi}}{4},\\
&\lambda^3 \int_0^t e^{-\lambda s}s^2 ds =2 -(2+\lambda t(2+\lambda t)) e^{-\lambda t}\xrightarrow[\lambda\to \infty]{} 2. \\
\end{split}
\end{equation}  
One can easily observe that $\sup_{x>0} x^{\frac{\alpha}{2}} e^{-x} <\infty$ for $\alpha\in \{0,1,2,3,4\}$. This observation and the explicit expressions in \eqref{lambda_integrals} produce the bound \eqref{lemma2_result2}.
\end{proof}

\end{document}